\newtheorem{theorem}{Theorem}
\newtheorem{assumption}[theorem]{Assumption}
\newtheorem{definition}[theorem]{Definition}
\newtheorem{example}[theorem]{Example}
\newtheorem{lemma}[theorem]{Lemma}
\newtheorem{proposition}[theorem]{Proposition}
\newtheorem{remark}[theorem]{Remark}
\newenvironment{proof}[1][Proof]{\textbf{#1.} }{\ \rule{0.5em}{0.5em}}
\newcommand{\R} {\ensuremath{\mathbb{R}}}
\newcommand{\N} {\ensuremath{\mathbb{N}}}
\newcommand{\Acal}{{\mathcal A}}
\newcommand{\Ccal}{{\mathcal C}}
\newcommand{\Fcal}{{\mathcal F}}
\newcommand{\Hcal}{{\mathcal H}}
\newcommand{\Mcal}{{\mathcal M}}
\newcommand{\Vcal}{{\mathcal V}}
\newcommand{\cA}{{\mathcal A}}
\newcommand{\cB}{{\mathcal B}}
\newcommand{\cC}{{\mathcal C}}
\newcommand{\cF}{{\mathcal F}}
\newcommand{\cH}{{\mathcal H}}
\newcommand{\cI}{{\mathcal I}}
\newcommand{\cK}{{\mathcal K}}
\newcommand{\cM}{{\mathcal M}}
\newcommand{\cN}{{\mathcal N}}
\newcommand{\cP}{{\mathcal P}}
\newcommand{\cQ}{{\mathcal Q}}
\newcommand{\cR}{{\mathcal R}}
\newcommand{\cX}{{\mathcal X}}
\newcommand{\cY}{{\mathcal Y}}
\newcommand{\normc}[1]{\|#1\|_{\infty}}					%infinity norm with respect to the capacity
\newcommand{\normW}[1]{\|#1\|}							
\DeclareMathOperator{\cl}{cl_{\infty}}					
\DeclareMathOperator{\clw}{cl_{\sigma}}					
\DeclareMathOperator{\lin}{lin}
\DeclareMathOperator{\graph}{graph}
\newcommand{\Lzero}{\mathbb{L}^0}						%measurable functions quotient with c
\newcommand{\Lzeroplus}{\mathbb{L}^0_+}					%measurable nonnegative functions quotient with c
\newcommand{\Linf}{\mathbb{L}^\infty}					%measurable bounded functions quotient with c
\newcommand{\Linfplus}{\mathbb{L}^\infty_+}				%measurable bounded nonnegative functions quotient with c
\newcommand{\LW}{\mathbb{L}}							%bounded by W
\newcommand{\LWplus}{\LW_+}							%bounded by W
\newcommand{\Lzerop}{L^0_P}							%measurable functions quotient with P
\newcommand{\Linfp}{L^\infty_P}						%measurable bounded functions quotient with P
\newcommand{\Linfplusp}{(L^\infty_P)_+}					%measurable bounded nonnegative functions quotient with P
\newcommand{\Lonep}{L^1_P}								%integrable functions quotient with P
\newcommand{\polar}{\cN}								%class of polar sets
\newcommand{\KM}{\cK_\lambda}							%generic strategies bounded from below strategies
\newcommand{\HM}{\cH_\lambda}							%discrete time strategies bounded from below
\newcommand{\CM}{\cC_\lambda}							%cone with bounded from below strategies
\newcommand{\tildeC}[1]{\widetilde{\cC}_{#1}}			%sensitive cone with bounded from below strategies
\newcommand{\CMK}[1]{\cC_{\lambda,#1}}					%cone with bounded from below strategies intersection ball
\newcommand{\Qapp}{\mathfrak{Q}_{app}}					%approximate martingale measures
\newcommand{\mtg}{\mathfrak{Q}}							%martingale measures
\newcommand{\QN}{Q_n}
\newcommand{\PW}{P^W}									%change of probability
\newcommand{\cPW}{\cP^W}								%class of dual probabilities
\newcommand{\LPW}{\lin(\cPW)}								%linear span of  class of dual probabilities
\newcommand{\jp}{j_P}									%projection
\def\qs{\mathrm{q.s.}}  			%quasi-sure
\def\NA{\mathrm{(NA)}}				%no arbitrage
\def\NFLVR{\mathrm{(NFLVR)}}		%no free lunch van risk
\def\sNA{\mathrm{(sNA)}}				%sensitive no arbitrage
\def\sNFLVR{\mathrm{(sNFLVR)}}		%sensitive no free lunch
\def\sNABR{\mathrm{(sNABR)}}		%sensitive no free lunch
\definecolor{Verde}{rgb}{0,0.7,0}
\begin{document}

\title{Arbitrage-free modeling under Knightian Uncertainty}
\author{M. Burzoni\thanks{Mathematical Institute, University of Oxford, Andrew Wiles Building, Woodstock Road, Oxford OX2 6GG, email: matteo.burzoni@maths.ox.ac.uk. Part of the research work has been conducted at ETH Z\"{u}rich, the ETH foundation is gratefully acknowledged. Part of the research work is supported by the Hooke Research Fellowship from the University of Oxford.}\and M. Maggis\thanks{Department of Mathematics, University of Milan, Via Saldini 50, 20133 Milan, email: marco.maggis@unimi.it}}
\maketitle
 
 \begin{abstract}
 	 We study the Fundamental Theorem of Asset Pricing for a general financial market under Knightian Uncertainty. We adopt a functional analytic approach which requires neither specific assumptions on the class of priors $\cP$ nor on the structure of the state space. Several aspects of modeling under Knightian Uncertainty are considered and analyzed. 
 	 We show the need for a suitable adaptation of the notion of No Free Lunch with Vanishing Risk and discuss its relation to the choice of an appropriate technical filtration.
	In an abstract setup, we show that absence of arbitrage is equivalent to the existence of \emph{approximate} martingale measures sharing the same polar set of $\cP$.
	We then specialize our results to a discrete-time financial market in order to obtain martingale measures.
 \end{abstract}

\noindent \textbf{Keywords}: Knightian Uncertainty, Arbitrage Theory, First Fundamental
Theorem of Asset Pricing, quasi-sure analysis.

\noindent \textbf{MSC (2010):} primary 91B24, 91G99, 46N10
secondary 91G80

\noindent \textbf{JEL Classification:} C02, G10, G13.
 
 \section{Introduction}
 
The mathematical modeling of financial markets is a challenging task initiated over a century ago by Bachelier \cite{Bachelier}, who firstly observed how the oscillations of the prices on stock exchanges could be represented as the trajectories of the Brownian Motion. After the major contributions by \cite{BlackScholes:73} and \cite{Merton:73}, an outbreak of sophisticated mathematical models for Finance was observed in the last decades in the scientific literature.
For any of such various models the \emph{absence of arbitrage} is a foundational principle. %which dates back at least to \cite{deFinetti31}.
According to this condition, it is not possible to make a positive gain without taking any shortfall risk.
This is not only a reasonable feature of the model but also a property which is typically satisfied by real markets. 
Indeed, it is widely accepted that markets are \emph{efficient}: even if an arbitrage opportunity occurred, it would soon vanish as the traders willing to exploit it would cause a change in the underlying prices.
A cornerstone result is the so-called Fundamental Theorem of Asset Pricing (FTAP) which establishes the equivalence between absence of arbitrage and the existence of suitable pricing functionals.
A general FTAP was firstly proved by \cite{DMW90} in a discrete time setting under No Arbitrage (NA) and by \cite{DS94} in continuous-time under the stronger requirement No Free Lunch with Vanishing Risk (NFLVR). 
A key aspect of such breakthrough results is the closure property of the cone of super-replicable claims at zero cost, which can be derived from the no-arbitrage condition.

Mathematically, the classical literature flourished on the standing assumption that a probability space $(\Omega,\cF,P)$ is given. The role of $P$ is essentially to establish the class of events which are irrelevant for the model, the $P$-nullsets. Such an assumption is very much exposed to the so-called \emph{model risk}, namely, the fact that the outputs produced by the model are sensitive to the choice of $P$ and a wrong choice might lead to severe consequences.
It is therefore natural to wonder whether a FTAP can be established in a more robust setting by the mean of a suitable functional analytic approach.

The \emph{quasi-sure setting} consists in replacing the given probability measure $P$ with a class of probability measures $\cP$ aiming at capturing the model ambiguity that agents are facing. In this setting only events which are irrelevant with respect to \emph{any} of the considered priors are deemed impossible.
In the seminal paper
\cite{BN13}, the authors construct a discrete time framework, inspired by dynamic programming ideas, in order to prove a quasi-sure version of the FTAP. Such a framework has become standard in the Robust Finance literature and many results can be obtained within the same setting (see, e.g., \cite{BayraktarZhangMOR,BZ17,BC19} in discrete-time and \cite{BBKN14} in continuous-time).
An alternative \emph{pathwise setting} has been considered in the literature (see, e.g., \cite{AB13,BKN17,BKPT19,Pointwise,BFM16,DS13,HO15,OW18,Riedel}), where instead of a probabilistic formulation of the problem the authors work directly on the set of scenarios $\Omega$.

The first main result of the paper (Theorem \ref{FTAP}) is an abstract version of the FTAP in a general quasi-sure setting and it is the content of Section \ref{sec:main}.
The novelty is that its proof relies only on functional analytic arguments, therefore, we do not require a structure which is amenable to the use of measurable selection techniques, as customary in the aforementioned literature. 
As in the classical case, where a reference measure is given, the advantage is that both discrete and continuous time models can be attacked in the same way since the problem reduces to show the weak-closure of the cone of superreplicable claims.
Moreover, these techniques are typically applied in a multitude of other related problems (among others the super-hedging duality and utility maximization).
We consider a notion of arbitrage that we call $\cP$-\emph{sensitive} (see Definition \ref{arbitrage:sensitive} and the subsequent discussion).
The peculiarity of this condition is that every market participant agrees on the presence of such arbitrage opportunities but they may well disagree on which strategy will realize it. These models cannot represent equilibrium prices for the underlying securities and they should be excluded. The sensitive versions of arbitrage are equivalent to their quasi-sure versions in the frameworks of \cite{BN13} and \cite{STZ11a}. In Section \ref{sec:C} we explain how the role of the chosen filtration is crucial in order to recover the standard notions, in relation to the problem of aggregation of $P$-dependent strategies. This is a technical aspect of the chosen mathematical framework and, from an economic point of view, whether the considered market is viable or not should not depend on that. To this extent the sensitive notions of arbitrage captures market inefficiencies beyond the necessary mathematical technicalities.   

Starting from the general FTAP we show how stronger results can be obtained by allowing for more structure. In particular, several aspects of modeling under Knightian Uncertainty are discussed throughout the paper. 
In Section \ref{discrete:setting} we analyze the discrete-time setting and show the convergence of the \emph{approximate} martingale measures of Theorem \ref{FTAP} to \emph{true} martingale measures. This is proven in Theorem \ref{FTAP_compact} and we illustrate one application in the context of
Martingale Optimal Transport (MOT). 
Finally, Section \ref{sec:proofs} contains all the proofs of the main results.

We conclude this Introduction with the frequently used notation.

%%%%%%%%%%%%%%%%%%%%%%%%%%%%%%%%%%%%%%%%%%%%%%%%%%%%%%%%%%%%%%%%

\subsection{Notations and setup}\label{notations}
Let $\Omega$ be a separable metric space and $\cF$ the associated sigma algebra of Borel measurable events.
We let $\cM_1$ be the class of probability measures on $(\Omega,\cF)$ endowed with the usual weak topology $\sigma(\cM_1,\cC_b)$, where $\cC_b$ is the space of continuous and bounded functions on $\Omega$. Given $\cP_1,\cP_2\subset \cM_1$ we define 
\begin{description}
	\item[$\cP_1\ll \cP_2$] if $\sup_{P\in\cP_2}P(A)=0$ implies $\sup_{P\in\cP_1}P(A)=0$. We say that $\cP_2$ dominates $\cP_1$;
	\item[$\cP_1\approx \cP_2$] if both $\cP_1\ll \cP_2$ and $\cP_2\ll \cP_1$ holds. We say that $\cP_1$ and $\cP_2$ are equivalent.
	\item[$P \lll \cP$] if there exists a $P_1\in\cP$ such that $P\ll P_1$.
\end{description}
For a given $\cP\subset\cM_1$, we introduce the vector space of countably additive signed measures dominated by $\cP$, namely $ca(\cP)$ (see \cite{MMS18}).
We shall denote by $\polar$ the family of \emph{polar sets}, namely, $$\polar:=\{A\subset A' \mid A'\in\cF \text{ and } P(A')=0\  \forall P\in\cP\}.$$
A statement is said to hold quasi surely ($\qs$) if it holds outside a polar set. 
It is possible to identify measurable functions which are $\qs$ equal and $\Lzero$ will indicate the quotient space. 
$\Linf$ is the subspace of $\qs$ bounded functions, which we endow with the norm
$$
\normc{X}:=\inf\left\{m\in\R\mid P(\{|X|>m\})=0\  \forall P\in\cP\right\}.
$$
If no confusion arises we will denote the $\qs$ partial ordering by $\leq$ (resp. $\geq$ and $=$), meaning that for any $X,Y\in \Lzero$, $X\leq Y$ if and only if $P(\{X> Y\})=0$  for every $P\in\cP$.  $(\Linf, \normc{\cdot})$ endowed with the $\qs$ order $\leq$ is a Banach lattice.
Throughout the text we will be given a positive random variable $W\ge 1$, and work with the space 
$$\LW:=\{X\in\Lzero\mid X/W\in\Linf\},$$ 
paired with the norm $\normW{X}:=\normc{X/W}$.  
We finally introduce $\LWplus$, $\Lzeroplus$ and $\Linfplus$, as the subsets of $\qs$ non-negative functions in $\LW$, $\Lzero$ and $\Linf$ respectively. Given a set $\cA\subset \LW$, $\cl{\cA}$ will denote the closure with respect to $\normc{\cdot}$ of $\cA\cap \Linf$.

%%%%%%%%%%%%%%%%%%%%%%%%%%%%%%%%%%%%%%%%%%%%%%%%%%%%%%%%%%%%%%%%

\section{An abstract formulation of the FTAP}\label{sec:main}
Fix a measurable space  $(\Omega,\cF)$ and $W\in\Lzeroplus$ with $W\ge 1$. 
The financial market is described, in an abstract form, by the set of financial contracts attainable at zero cost denoted by $\cK\subset \Lzero$. Throughout the text $\cK$ is assumed to be a convex cone.
\begin{definition}\label{def:arbitrage} 
	Let $\cK\subset \Lzero$ be a convex cone.
	\begin{itemize} 
		\item $k\in \cK$ is an \emph{arbitrage opportunity} if $k\in\Lzeroplus\setminus\{0\}$;
		\item $\xi\in \Lzeroplus\setminus\{0\}$ is a \emph{free lunch with vanishing risk} if there exist $c_n\downarrow 0$ and $\{k_n\}\subset\cK$ such that $c_n+k_n\geq \xi$; 
	\end{itemize}
	We denote by $\NA$ and $\NFLVR$ absence of arbitrage and free lunch with vanishing risk respectively. 
\end{definition}
We let $\KM:=\cK\cap \{X\in \Lzero\mid X\geq -\lambda W\}$ for $\lambda\ge 0$ and define
\begin{eqnarray}
	\cC&:=&\left\{ X\in \LW\mid X\leq k \text{ for some } k\in
	\cK\right\},\label{cone:abstract}\\
	\CM&:=&\left\{ X\in \LW\mid X\leq k \text{ for some } k\in
	\KM\right\},\label{cone:abstract:M}
\end{eqnarray}
where we recall that all inequalities are meant to hold $\qs$.

\begin{remark} In the classical dominated case ($\cP\ll P$ for some $P\in\cM_1$), $\cK$ is the class of stochastic integrals of admissible strategies. The use of a random lower bound in the admissibility condition is not new and was used for instance in \cite{BiaginiFrittelli}. An alternative possible choice for $W$ is $W=1$ for which $\KM$ is the set of contracts bounded from below by $-\lambda$, a typical constraint for continuous time models which excludes doubling strategies. In Section \ref{discrete:setting} we show that, in discrete time, a suitable choice for $W$ identifies the class of admissible bounded strategies.
	Under uncertainty, the stochastic integral can be defined in the same way for the discrete time case, since it amounts to a finite sum; in continuous time, it requires a different construction (see e.g.\ \cite{DS13,PP16,STZ11a,Vovk12}).
\end{remark}

Both $\cC$ and $\CM$ are convex and monotone\footnote{We say that that $\cA\subset \Lzero$ is monotone if $Y\le X$ and $X\in \cA$ implies $Y\in\cA$.} sets containing $0$, in addition $\cC$ is also a cone. 
They represent the class of claims which can be super-replicated at zero initial cost by mean of attainable payoffs in $\cK$ and $\KM$ respectively.

As in the classical literature, we can reformulate the no-arbitrage conditions in terms of the cone $\cC$, i.e.,
\begin{eqnarray*}
	\text{$\NA$} &\Longleftrightarrow& \Ccal\cap
	\Linfplus=\{0\}\\
	\text{$\NFLVR$} &\Longleftrightarrow& \cl(\Ccal)\cap
	\Linfplus=\{0\},
\end{eqnarray*}
where $\cl$ denotes the closure with respect to $\normc{\cdot}$ of $\cC\cap \Linf$.
In the context of Knightian Uncertainty this straightforward generalization of the classical concepts might not be sufficient for deriving a general no-arbitrage theory.

\paragraph{Sensitivity: from dominated to non-dominated frameworks.} 
The notion of sensitivity was introduced in \cite{MMS18} and, as we discuss below, it should be interpreted in terms of aggregation of trading strategies with respect to the different measures in the set $\cP$. 
For $P\ll\cP$, we define the linear (projection) map
\begin{equation}\label{projection:P}\begin{array}{rccc}
		j_P:&\Lzero& \to& \Lzerop\\
		& X&\mapsto&[X]_P
	\end{array}
\end{equation}
where $[X]_P$ is the $P$-equivalence class of $X$ in $\Lzerop$. 
\begin{definition}\label{sensitive} A set $\Acal\subset \Lzero$ is called \emph{sensitive} if there exists a family $\cR\subset \Mcal_1$ with $\cR\ll \cP$ such
	that \begin{equation*} \Acal = \bigcap_{P\in\cR}
		j^{-1}_P(j_P(\Acal)).
	\end{equation*}  
	The set $\cR$ will be called
	\emph{reduction set} for $\Acal$.
\end{definition}
We will typically use $\cP$ itself as a reduction set.
As the space $\Lzero$ does only depend on the polar sets $\polar$ we could alternative choose any $\cP'\approx \cP$ and we will occasionally do so.

To better understand the previous definition let us consider for a moment the \textbf{dominated} setting, namely, suppose there exists a reference probability $P$ equivalent to the family $\cP$\footnote{If $\cP \ll P'$ for some $P'\in\cM_1$, the Halmos Savage Lemma (see \cite{HS49}) implies that there exists a probability $P$ which is equivalent to $\cP$.}. In this case $\cC$ is composed of equivalence classes with respect to $P$ and contains, in particular, all the indicators of $P$-null sets. 
It is well know that, in a discrete framework, $\NA$ is equivalent to the existence of martingale measures for $S$ or, stated otherwise,
\begin{equation}\label{FTAP_P}
	\cC\cap
	\Linfplusp=\{0\}
	\Longleftrightarrow
	\cC^0_1\approx P,
\end{equation}
where the set $\cC^0_1:=\{Q\in \cM_1\mid E_Q[X]\leq
0 \;\forall\,X\in \cC\}$ is necessarily composed of measures which are absolutely continuous with respect to $P$.
The no-arbitrage condition \eqref{FTAP_P} could be trivially rewritten as $j_P(\cC)\cap
\Linfplusp=\{0\}$ since, in the dominated case, the map $j_P$ is obviously the identity.
Let us now take into account a class of \textbf{non-dominated} probabilities $\cP\subset\cM_1$.
In order to embed $j_P(\cC)$ in $\LW$, for any $P\in\cP$, we need to consider all the elements in $\LW$ which coincide $P$-a.s.\ with an element of $j_P(\cC)$. 
More precisely, we introduce the set $j_P^{-1}(j_P(\cC))$ and the no-arbitrage condition with respect to a single $P$ would read as $j_P^{-1}(j_P(\cC))\cap\Linfplus=\{0\}$. 
Since we have a whole class of non-dominated $\cP$, representing the uncertainty of the model, we need to consider the set 
\begin{equation}\label{def:Ctilde}
	\tildeC{}=\bigcap_{P\in\cP} j^{-1}_P(j_P(\Ccal)).
\end{equation}
We stress that any $j^{-1}_P(j_P(\Ccal))$ is a subset of $\LW$, i.e., it is a collection of \emph{quasi-sure} equivalence classes of contingent claims. 
% In particular, the set $\tildeC{}$ is constructed in a quasi-sure way and it depends only on $\LW$ and $\cK$.
To understand how the set $\tildeC{}$ is related to the probabilistic models in $\cP$ we can write, more explicitly, 
\begin{eqnarray*}
	\tildeC{} 
	%& = & \{X\in\LW\mid \forall P\in\cP, \exists X^P\in\cC\text{ s.t. }X=X^P\ P\text{-a.s.}\}\\
	& = &  \{X\in\LW\mid \forall P\in\cP, \exists k^P\in\cK\text{ s.t. }  X\leq k^P\ P\text{-a.s.}\}
\end{eqnarray*}
The set $\tildeC{}$ induces the following no-arbitrage conditions.
\begin{definition}\label{arbitrage:sensitive} 
	We say that it holds:
	\begin{eqnarray*}
		\text{$\sNA$} &\dot{\Longleftrightarrow}& \tildeC{}\cap
		\Linfplus=\{0\}\\
		\text{$\sNFLVR$} &\dot{\Longleftrightarrow}& \cl(\tildeC{})\cap
		\Linfplus=\{0\},
	\end{eqnarray*}
	where we have emphasized in the acronyms that these are the sensitive versions of the previous notions.
\end{definition}
When $\sNA$ is violated there exists a non-negative contingent claim for which it is not possible to assign a reasonable price. On the one hand, $X$ is non-negative \emph{quasi-surely}, thus, its price cannot be less or equal than zero as it would create an obvious arbitrage. On the other hand a positive price for $X$ disagrees with the output of \emph{any plausible model}. Indeed, since $X\in\tildeC{}$,  $X$ can be super-hedged at zero cost under any of the priors and it will \emph{not} be traded at a positive price. The \emph{same} claim $X$ induces arbitrage opportunities in every model $P$. This means that the inefficiency of the market is identified by every market participant, nevertheless, they might well disagree which strategy should be implemented in order to exploit an arbitrage\footnote{This situation is reminiscent of the example of the two call options with different strikes but same price given in \cite{DH07}.}. Arguing as in the classical case, these situations would trigger a change in the prices of the underlying assets which will make such opportunities disappear. We stress that $\sNA$ does not imply that the classical no arbitrage condition holds under any of the $P\in\cP$.

As pointed out above, if $\cP$ is dominated, we clearly have $\tildeC{}=\cC$ (i.e. $\cC$ is sensitive). However, as we demonstrate below this is not always the case under Knightian uncertainty, unless the framework is chosen carefully.
We show that the discrepancy $\tildeC{}\neq\cC$ can be often resolved by extending the filtration in an appropriate technical way, which allows for aggregation of $P$-dependent strategies (see Section \ref{sec:C}) and for which $\sNA\Leftrightarrow\NA$.  Since the aim of the paper is to provide a general FTAP which is not tailor made to any specific underlying setting we refrain to assume that $\cC$ is sensitive and continue to work with $\tildeC{}$.

\paragraph{The sensitive version of the FTAP.}

We first introduce the class of dual elements. 
Recall that $\cC$ and $\CM$ are defined in \eqref{cone:abstract} and \eqref{cone:abstract:M} for $\lambda\ge 0$.

\begin{definition}An \emph{approximate} separating class is a sequence of probabilities $\cQ:=\{Q_n\}_{n\in\N}$ such that there exists $P\in\cP$ with $\cQ\ll P$ and, for any $n\in\N$,
	\begin{equation}\label{eq:sepCN}
		E_{Q_{n}}[X]\leq
		\frac{1}{n}\quad \forall\,X\in \cC_{n},
	\end{equation}
	We denote by $\Qapp$ the collection of approximate separating classes.
\end{definition}

We now state the main result of the section.
To this end recall that $\polar$ represents the class of polar sets for $\cP$ and that a set $\cA\subset \LW$ is Fatou-closed 
if for any $\normW{\cdot}$-bounded sequence $\{X_n\}_{n\in\N}\subset\cA$, $X_n \rightarrow X$ $\qs$, we have $X\in\cA$. 
%We suppose for the moment that $\cC$ is Fatou closed.
\begin{theorem}\label{FTAP}
	Suppose that under $\sNFLVR$ $\cC$ is Fatou closed. The following are equivalent:
	
	\begin{enumerate}
		
		\item $\sNFLVR$
		
		\item $\Qapp\approx\cP$ \footnote{With a slight abuse of notation, $\Qapp\approx \cP$ means that the whole collection of probabilities belonging to some approximate separating class is equivalent to $\cP$.}. Moreover, $\forall A\in\cF\setminus\polar$, $\exists \delta>0,\ \cQ\in\Qapp$ such that $\inf_{Q\in\cQ}Q(A)\ge\delta$.
	\end{enumerate}
\end{theorem}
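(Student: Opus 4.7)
The theorem is an abstract FTAP in the spirit of Kreps--Yan: $(1) \Rightarrow (2)$ proceeds via Hahn--Banach separation in a weak topology on $\Linf$ dualizing against $ca(\cP)$, while $(2) \Rightarrow (1)$ is a contradiction argument. The equivalence $\Qapp \approx \cP$ splits into two parts: $\Qapp \ll \cP$ is built into the definition of $\Qapp$ (each $\cQ \in \Qapp$ satisfies $\cQ \ll P$ for some $P \in \cP$), while $\cP \ll \Qapp$ is the contrapositive of the ``Moreover'' clause.

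For $(2) \Rightarrow (1)$, I would argue by contradiction. Suppose $\xi \in \cl(\tildeC{}) \cap \Linfplus$ with $\xi \neq 0$; then $A := \{\xi \geq c\}$ is non-polar for some $c > 0$, and by (2) there exist $\cQ = \{Q_n\} \in \Qapp$ and $\delta > 0$ with $\inf_n Q_n(A) \geq \delta$, whence $E_{Q_n}[\xi] \geq c\delta$ for every $n$. Approximate $\xi$ in $\normc{\cdot}$ by $X_m \in \tildeC{} \cap \Linf$. For $m$ large, $X_m$ is uniformly bounded and $X_m \geq -nW$ q.s.\ for every $n \geq 1$. Using that $Q_n$ is dominated by a single $P \in \cP$ together with the defining property of $\tildeC{}$, there exists $k_m^{P} \in \cK$ with $X_m \leq k_m^{P}$ $Q_n$-a.s.; a truncation together with the monotonicity of $\cC$ places a suitable element of $\CMK{n}$ above $X_m$ $Q_n$-a.s., so the approximate separation yields $E_{Q_n}[X_m] \leq 1/n$. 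Letting $m \to \infty$ and then $n \to \infty$ contradicts $E_{Q_n}[\xi] \geq c\delta$.

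For $(1) \Rightarrow (2)$, fix a non-polar set $A \in \cF \setminus \polar$ and $n \in \N$. The hypothesis that $\cC$ is Fatou-closed, combined with the q.s.-closed condition $\{X \geq -nW\}$, yields that $\CMK{n}$ is Fatou-closed. A quasi-sure Grothendieck / Krein--\v{S}mulian type result (in the line of \cite{MMS18}) promotes Fatou-closedness of the convex monotone set $\CMK{n} \cap \Linf$ to closedness in the weak topology $\sigma(\Linf, ca(\cP))$. By $\sNFLVR$ we have $\cl(\tildeC{}) \cap \Linfplus = \{0\}$ and hence $\cl(\CMK{n}) \cap \Linfplus = \{0\}$, so a small enough multiple $\varepsilon \mathbf{1}_A$ lies outside $\CMK{n}$. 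Hahn--Banach in the pair $(\Linf, ca(\cP))$ produces a signed measure $\phi_n \in ca(\cP)$ strictly separating $\CMK{n}$ from $\varepsilon \mathbf{1}_A$; monotonicity of $\CMK{n}$ forces $\phi_n \geq 0$, and $0 \in \CMK{n}$ combined with strict separation gives $\phi_n(A) > 0$. Rescaling so that $\sup_{X \in \CMK{n}} \phi_n(X) = 1/n$ and normalizing to a probability $Q_n$ yields the approximate separating bound, while Halmos--Savage applied inside $ca(\cP)$ allows one to arrange $\{Q_n\}$ to share a common dominating $P \in \cP$. The resulting $\cQ := \{Q_n\} \in \Qapp$ satisfies $\inf_n Q_n(A) \geq \delta$ for some $\delta > 0$, yielding both the ``Moreover'' clause and, via the equivalences noted above, $\Qapp \approx \cP$.

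\textbf{Main obstacle.} The crux of the argument is the analytic step promoting Fatou-closedness of $\CMK{n}$ to $\sigma(\Linf, ca(\cP))$-closedness, which is what allows the Hahn--Banach separator to live in $ca(\cP)$ rather than merely in the finitely additive dual; classically this is Grothendieck's theorem, but in the quasi-sure setting it relies on the results of \cite{MMS18}. A secondary subtle point appears in $(2) \Rightarrow (1)$: $\sNFLVR$ is phrased in terms of $\tildeC{}$, whereas the approximate separation is phrased in terms of $\cC$, and bridging the two requires exploiting that each $Q_n$ is dominated by a single $P \in \cP$ so that the $P$-dependent element of $\cK$ from the definition of $\tildeC{}$ can be used to bound $E_{Q_n}[X_m]$ without invoking a q.s.\ aggregator.
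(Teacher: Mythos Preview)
Your $(2)\Rightarrow(1)$ direction is essentially correct and matches the paper's argument (the paper reduces to $\xi=\mathbf 1_A$ first via a short lemma, but the mechanism is the same: use that all $Q_n$ are dominated by one $\bar P\in\cP$ to replace an element of $\tildeC{}$ by a $\bar P$-a.s.\ equal element of some $\cC_\lambda$, then invoke the separating inequality for $n\ge\lambda$).

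Your $(1)\Rightarrow(2)$ direction, however, has a genuine gap at exactly the point you flag as ``secondary'': obtaining a \emph{single} $P\in\cP$ dominating the whole sequence $\{Q_n\}$. Separating $\varepsilon\mathbf 1_A$ from $\CMK{n}$ in $\sigma(\Linf,ca(\cP))$ for each $n$ produces a $Q_n$ dominated by \emph{some} $P_n\in\cP$, but there is no reason the $P_n$ agree, and Halmos--Savage does not help: that lemma presupposes a dominated family, while $\cP$ (and $ca(\cP)$) is typically non-dominated---this is the whole point of the Knightian setting. Relatedly, your construction does not yield a uniform lower bound $\inf_n Q_n(A)\ge\delta$: each separation gives only $Q_n(A)>0$, and the rescaling you propose does not control $Q_n(A)$ across $n$.

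The paper resolves both issues simultaneously by a different decomposition. From $\sNFLVR$ one gets $\rho_{\tildeC{}}(\mathbf 1_A)>0$; pick $0<\delta<\rho_{\tildeC{}}(\mathbf 1_A)$, so $\mathbf 1_A-\delta\notin\tildeC{}=\bigcap_{P\in\cP}j_P^{-1}(j_P(\cC))$. Hence there is a \emph{single} $\bar P\in\cP$ with $\mathbf 1_A-\delta\notin j_{\bar P}^{-1}(j_{\bar P}(\cC_n))$ for every $n$ (and likewise for $\alpha(\mathbf 1_A-\delta)$, $\alpha\in(0,1]$). The separation is then carried out, for each $n$, between the compact segment $\{\alpha(\mathbf 1_A-\delta):\alpha\in[1/n,1]\}$ and the set $j_{\bar P}^{-1}(j_{\bar P}(\cC_n))$, whose $\sigma(\LW,\lin(\cPW))$-closedness is established by a $P$-wise Krein--\v Smulian argument (Proposition~\ref{closed:sensitive}), not by a global quasi-sure Grothendieck theorem on $\cC_n$. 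Because the set being separated is the $\bar P$-blow-up of $\cC_n$, any separator in $\lin(\cPW)$ is automatically $\ll\bar P$; and because $\mathbf 1_A-\delta$ (rather than $\mathbf 1_A$) is separated, the resulting $Q_n$ satisfies $Q_n(A)\ge\delta$ for every $n$. Thus the choice of $\bar P$ \emph{before} any separation is the key idea you are missing; it simultaneously delivers the common dominator required by the definition of $\Qapp$ and the uniform $\delta$.
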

For discrete time financial market models, $\sNFLVR$ guarantees that $\cC$ is Fatou closed (see Lemma \ref{closure} below), as a consequence, we do not need such an assumption in the subsequent Theorem \ref{FTAP_discrete} and \ref{FTAP_compact}. 
Whether the same result holds for continuous time models is an interesting question which goes beyond the scope of this paper and is left for  future investigations. Interestingly, in the recent paper \cite{CKPS19}, it is shown that a general MOT duality holds \emph{only} if the set of attainable payoffs is Fatou closed.

We provide the proof of Theorem \ref{FTAP} in Section \ref{sec:proofs}. 
One of the main technical point is to show that the sensitive version of any $\CM$ is closed in an appropriate weak topology, which is proven in Proposition \ref{closed:sensitive}.
In concrete models, $\CM$ contains a sufficiently rich class of dynamic strategies in the underlying process $S$ which allows to identify martingale measures. The above theorem essentially says that $\sNFLVR$ is equivalent to the fact that for any non-polar set $A$, we can find \emph{approximate} martingale measures for $S$ which assign positive probability to $A$ (see also \eqref{eq:apprMM} below).
In particular this implies that, under $\sNFLVR$, the class of approximate martingale measures is non empty and equivalent to $\cP$.

\begin{remark}
	Given $\cQ\in\Qapp$, it is possible to define a super-additive functional $\psi(\cdot):=\inf_{Q\in\cQ}E_Q[\cdot]$ in the spirit of \cite{ATY01} and \cite{AT02}, which, by \eqref{eq:sepCN}, is a non-linear separator of the cone $\cC$.
	In the context of Knightian Uncertainty, non-linearity arises also for pricing rules related to economic equilibrium and absence of arbitrage (see e.g. \cite{BR19,BRS18}).
\end{remark}

We call Theorem \ref{FTAP} an ``abstract'' version of the FTAP since it is obtained in a general setup and its implications can be strenghtened if we are willing to choose a more specific setting or adopt stronger assumptions.
More precisely:
\begin{enumerate}
	\item From a technical point of view, a desirable property is $\tildeC{}=\cC$ which can hold under the $\NA$ assumption (resp. $\NFLVR$), provided a suitable underlying structure. By definition, $\tildeC{}=\cC$ automatically implies $\NA \Leftrightarrow \sNA$ (resp. $\sNFLVR \Leftrightarrow \NFLVR$).
	As discussed above, such a situation occurs when $\cP$ is dominated.
	We will explain in Section \ref{sec:C} that this is related to the choice of the filtration and it holds true in the framework of \cite{BN13}, where, in addition, these four notions of arbitrage are all equivalent.
	\item The approximate separating classes of Theorem \ref{FTAP} can be used to obtain linear pricing functionals as ``true'' martingale measures for a given underlying process. 
	In Section \ref{discrete:setting} we will show that this is possible under some further assumptions and illustrate the result in a discrete-time MOT framework.
	\item When both the two points (1. and 2.) above are fulfilled, Theorem \ref{FTAP} in discrete time has the more familiar form:
	\[\NA\Longleftrightarrow\cQ_{mtg}\approx\cP,
	\]
	where $\cQ_{mtg}:=\{Q\lll \cP \mid E_{Q}[k]=0 \;\forall\, k\in \cK\}$.
	\item (On No Free Lunch) Mathematically, one could obtain separating measures using the Hahn Banach Theorem, under the No Free Lunch (NFL) condition: $\clw(\Ccal)\cap
	\Linfplus=\{0\},$
	where $\clw$ denotes the closure with respect to $\sigma(\Linf,ca(\cP))$-topology of $\cC$.
	As in the dominated case, it is not a priori clear how limit points in $\clw(\Ccal)$ are related to the payoffs of implementable strategies, thus, a clear economic interpretation is missing.
\end{enumerate}

\section{Discrete-time and martingale measures}\label{discrete:setting}
In this section we further analyze the discrete-time setting and show how to obtain martingale measures from Theorem \ref{FTAP}.
Let $T\in
\mathbb{N}$, and $I:=\left\{ 0,...,T\right\} $. The
price process is given by an 
$\mathbb{R}^{d}$-valued stochastic process $S=(S_{t})_{t\in
	I}$ with $S_t^j\in \Lzero$  for every $t\in I, j=1,\ldots,d$, and we
also assume the existence of a numeraire asset $S_{t}^{0}=1$ for
all $ t\in I$. Moreover, we fix a filtration
$\mathbb{F}:=\{\mathcal{F}_{t}\}_{t\in I}$ such that the process
$S$ is $\mathbb{F}$-adapted. 
A finite set of $\cF$-measurable options $\Phi=(\phi_1,\ldots,\phi_m)$ ($\phi_i\in \Lzero$ for every $i$) is available for static trading and, without loss of generality, we assume their initial price to be $0$.
An admissible semi-static strategy is a couple $(H,h)$ where $H$ is an $\R^d$-valued, $\mathbb{F}$-predictable stochastic process with $H_t^j\in \Lzero$  for every $t\in I, j=1,\ldots,d$ and 
$h\in \R^m$. The final payoff is $(H\circ S)_T+ h\cdot\Phi\in \Lzero$ where the stochastic integral is defined as 
$$(H\circ
S)_t:=\sum_{k=1}^{t}\sum_{j=1}^{d}H_{k}^{j}(S_{k}^{j}-S_{k-1}^{j}),\quad t\in\cI,
$$
with $(H\circ S)_0=0$.
We denote by $\Hcal$ the class of semi-static admissible strategies with zero initial cost. 

We choose
\[ W:=1+\sum_{t=1}^T\sum_{j=1}^d |S^j_t|+\sum_{i=1}^m |\phi_i|.
\] 
The sets $\cC$ and $\CM$ takes the following explicit form
\begin{eqnarray}
	\Ccal&=&\left\{ X\in \LW\mid X\leq (H\circ S)_T+h\cdot \Phi \text{ for some } (H,h)\in
	\Hcal\right\},\label{cone:option}\\
	\CM&=&\left\{ X\in \LW\mid X\leq (H\circ S)_T+h\cdot \Phi \text{ for some } (H,h)\in
	\HM\right\},\label{cone:option:M}
\end{eqnarray}
where $\HM := \{(H,h)\in
\Hcal \mid (H\circ S)_T+h\cdot \Phi \geq -\lambda W\}$ for $\lambda\ge 0$.
Note that, in particular, $\cup_{\lambda\ge 0}\HM$ contains any bounded strategy.

Recall that a set $\cA\subset \Lzero$ is closed with respect to $\qs$ convergence if for any sequence $\{X_n\}_{n\in\N}$, $X_n \rightarrow X$ $\qs$ implies $X\in\cA$.

\begin{lemma}\label{closure} Under $\NA$ the convex set $\cC$ is closed with respect ot $\qs$ convergence and hence both $\cC$ and $\CM$ are Fatou closed. 
\end{lemma}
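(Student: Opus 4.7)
The plan is to establish q.s.\ closedness of $\cC$ first, since Fatou closedness of $\cC$ then follows immediately (a norm-bounded q.s.\ convergent sequence is, in particular, q.s.\ convergent), while for $\CM$ one additionally uses that the lower bound $-\lambda W$ passes to the q.s.\ limit along a subsequence, so Fatou closedness of $\CM$ follows from that of $\cC$. So the real content is q.s.\ closedness of $\cC$.

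For this I would argue by induction on the horizon $T$, adapting the Kabanov--Stricker normalization trick to the quasi-sure framework. Suppose $X^n \to X$ q.s.\ with $X^n \le (H^n\circ S)_T + h^n\cdot \Phi$ for some $(H^n,h^n) \in \Hcal$. In the base case $T=0$, only static trading is available. If $\liminf_n \|h^n\|<\infty$, one extracts a subsequence $h^n \to h$ in $\R^m$ and obtains $X \le h\cdot \Phi$ by taking q.s.\ limits. Otherwise, set $\tilde h^n := h^n/\|h^n\|$, which lies on the unit sphere; a subsequence converges to some $\tilde h$ with $\|\tilde h\|=1$, and since $X^n/\|h^n\| \to 0$ q.s., one deduces $0\le \tilde h\cdot \Phi$ q.s. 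By $\NA$ applied to the static strategy $(0,\tilde h)$, this forces $\tilde h\cdot \Phi=0$ q.s., i.e., one coordinate of $\Phi$ is a q.s.\ linear combination of the others. Eliminating it and iterating at most $m$ times reduces to the bounded case.

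For the inductive step, write $X^n \le H_1^n\cdot \Delta S_1 + R^n + h^n\cdot \Phi$, where $R^n := \sum_{t\ge 2} H_t^n\cdot \Delta S_t$. The strategy component $(H_1^n,h^n)$ is $\F_0$-measurable but no longer deterministic, so the normalization is performed fiberwise. Using a measurable selection argument one partitions $\Omega$ into the $\F_0$-measurable sets $A:=\{\liminf_n(|H_1^n|+\|h^n\|)<\infty\}$ and $A^c$, and on $A$ extracts a random subsequence along which $(H_1^n,h^n)\to (H_1,h)$ q.s.; the residuals $X^n - H_1^n\cdot \Delta S_1 - h^n\cdot \Phi$ are bounded above by $R^n$, a sum of stochastic integrals over $[2,T]$, so the inductive hypothesis (applied to the $(T-1)$-period market starting at time $1$) identifies the q.s.\ limit with an element of the corresponding cone. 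On $A^c$ one normalizes by $\alpha_n := |H_1^n|+\|h^n\|\to\infty$ and extracts a unit-norm limit $(\tilde H_1,\tilde h)$; the inductive hypothesis combined with $\NA$ shows that $\tilde H_1\cdot \Delta S_1 + \tilde h\cdot \Phi$ is q.s.\ redundant on $A^c$, allowing dimension reduction. After finitely many reductions (bounded by $d+m$), only the bounded case remains.

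The main obstacle I anticipate is the measurable selection step required to aggregate the fiberwise normalizations into $\F_0$-measurable random variables, together with the fact that $\NA$ is a \emph{quasi-sure} condition, not a fiberwise or $P$-wise one: the redundancy produced by $\NA$ lives only up to polar sets, so the elimination of a coordinate on a measurable set $A^c$ must be compatible across the partition. This is handled in a separable metric setting with Borel $\sigma$-algebra via standard selection theorems (à la Jankov--von Neumann), and is the place where the discrete-time structure, via finite-dimensionality of the time-$t$ control $H_t$ and of $h$, is essential.
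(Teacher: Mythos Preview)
Your approach is correct and essentially reconstructs the result that the paper simply cites. The paper's proof of this lemma is much shorter: for the q.s.\ closedness of $\cC$ it just invokes \cite[Remark 2.1 and Theorem 2.2]{BN13} directly, and then for $\CM$ it gives a brief argument close in spirit to yours --- it replaces $X_n$ by $(-\lambda W)\vee X_n\in\CM\subset\cC$, uses the q.s.\ closedness of $\cC$ to find a strategy $(H,h)\in\Hcal$ dominating the limit $(-\lambda W)\vee X$, notes that this strategy is automatically in $\HM$ since its terminal value dominates $-\lambda W$, and concludes by monotonicity that $X\le(-\lambda W)\vee X\in\CM$.

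What you sketch for $\cC$ is, in substance, the content of the Bouchard--Nutz proof: the Kabanov--Stricker normalization performed fiberwise with measurable selections, adapted to the quasi-sure setting. So your route buys self-containedness at the price of length, while the paper's buys brevity by outsourcing the work. One small presentational point: in your inductive step you extract the static position $h$ together with $H_1$, so the residual $(T-1)$-period market carries no static options; your induction is therefore really over the total number of trading opportunities (the $T$ dynamic dates plus the one static date), not over $T$ alone. This is harmless, but making the induction variable explicit would tighten the argument.
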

\begin{proof}
	A direct application of \cite[Remark 2.1 and Theorem 2.2]{BN13} guarantees that  $\Ccal$ is closed with respect to $\qs$ convergence.
	Since $\cC$ is closed with respect to $\qs$ convergence, it is Fatou closed.
	Consider now a $\normW{\cdot}$-bounded sequence
	$\{X_n\}\subset \CM$ such that $X_n\to X$, $\qs$ for some $X\in\LW$.
	By definition of $\CM$, there exists $(H_n,h_n)\in \HM$ such that
	$X_n\leq (H_n\circ S)_T+h_n\cdot \Phi$, from which we deduce that $(-\lambda W)\vee X_n \in\CM\subset \cC$.
	Moreover, from the closure of $\cC$ with respect to $\qs$ convergence, the limit $(-\lambda W)\vee X=:\tilde{X}$ belongs to $\cC$.
	By definition of $\cC$, there exists $(H,h)\in \Hcal$ such that $\tilde{X}\leq (H\circ S)_T+h\cdot\Phi$ and, necessarily, $(H,h)\in
	\HM$. From $X\le\tilde{X}$ and the monotonicity of $\CM$, $X\in\CM$. 
\end{proof}

Using Lemma \ref{closure} we can specialize Theorem \ref{FTAP} to the present discrete time setup without the Fatou closure assumption.

\begin{theorem} \label{FTAP_discrete}
	The following are equivalent:
	\begin{enumerate}
		
		\item $\sNFLVR$
		
		\item $\Qapp\approx\cP$. Moreover, $\forall A\in\cF\setminus\polar$, $\exists \delta>0,\ \cQ\in\Qapp$ such that $\inf_{Q\in\cQ}Q(A)\ge\delta$.
	\end{enumerate}
	The two conditions are further equivalent to $\NA$ if, in addition, $\cC$ is sensitive under $\NA$.
\end{theorem}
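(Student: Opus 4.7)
The plan is to derive Theorem \ref{FTAP_discrete} from the abstract Theorem \ref{FTAP} together with Lemma \ref{closure}. The key observation is that Lemma \ref{closure} supplies, in the discrete-time setting, precisely the Fatou closure that was left as a conditional hypothesis in the abstract result.

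First I would establish the trivial inclusion $\sNFLVR \Rightarrow \NA$. Since $\cC \subseteq \tildeC{}$ by definition of the latter, the inclusion is preserved when intersecting with $\Linf$, and hence $\cC \cap \Linfplus \subseteq \cl(\tildeC{}) \cap \Linfplus = \{0\}$. With $\NA$ in hand, Lemma \ref{closure} supplies $\qs$-closure of $\cC$, and therefore Fatou closure. This verifies the standing assumption of Theorem \ref{FTAP}, which can then be invoked verbatim to conclude the equivalence (1) $\Leftrightarrow$ (2).

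For the further equivalence with $\NA$ under the sensitivity assumption I would handle both directions by hand. The implication $\sNFLVR \Rightarrow \NA$ is the inclusion already observed. For $\NA \Rightarrow \sNFLVR$, sensitivity gives $\tildeC{} = \cC$, so $\sNFLVR$ collapses to $\cl(\cC) \cap \Linfplus = \{0\}$. Given any $X$ in that intersection, I would pick $\{X_n\} \subseteq \cC \cap \Linf$ with $\normc{X_n - X} \to 0$. Since $\normc{\cdot}$-convergence amounts to uniform $\qs$ convergence, one has $X_n \to X$ $\qs$; the $\qs$-closure of $\cC$ from Lemma \ref{closure} then places $X$ in $\cC$, and $\NA$ forces $X = 0$.

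I do not foresee a serious obstacle at this level: the genuine technical work has already been absorbed into Lemma \ref{closure}, which itself leans on the Bouchard--Nutz measurable selection argument from \cite{BN13}. The present statement is therefore a concise repackaging: Theorem \ref{FTAP} provides the abstract equivalence, Lemma \ref{closure} discharges its Fatou closure hypothesis in the discrete-time setup, and the sensitivity assumption (when available) further collapses $\tildeC{}$ to $\cC$ and hence $\sNFLVR$ to the more familiar $\NA$.
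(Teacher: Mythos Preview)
Your proof is correct and follows essentially the same route as the paper: use $\sNFLVR \Rightarrow \NA$ to activate Lemma \ref{closure}, then invoke Theorem \ref{FTAP} for (1) $\Leftrightarrow$ (2); under sensitivity, collapse $\tildeC{}$ to $\cC$ and use the $\qs$-closure of $\cC$ from Lemma \ref{closure} to identify $\NFLVR$ with $\NA$. The paper states this last step in one line (``$\sNFLVR$ is equivalent to $\NFLVR$, which is further equivalent to $\NA$ since $\cC\cap\Linf$ is $\normc{\cdot}$-closed''), and your sequence argument is just the unpacking of that sentence.

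The only notable difference is that the paper appends a second, direct argument for $(2)\Rightarrow\NA$: assuming an arbitrage $V_T(H,h)\ge 0$ with $P(V_T(H,h)>0)>0$, it scales the strategy and uses the uniform separation property in (2) to force $E_Q[V_T(\hat H^k,\hat h^k)]\to\infty$ while the $\Qapp$ bound keeps it $\le 1$, a contradiction. This bypasses the route through (1) entirely. Given that $(2)\Rightarrow(1)\Rightarrow\NA$ already holds unconditionally (the $b)\Rightarrow a)$ direction of Lemma \ref{approximating:martingale} needs no Fatou closure), this extra argument is logically redundant for the theorem as stated, so your omission of it is harmless.
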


We call measures in $\Qapp$ \emph{approximate martingale measures}.
Indeed, for every $A\in\cF_{k-1}$ the one-step strategy $H=(H_t)_{t\in I}$ with $H_t(\omega)=1_A(\omega)1_{\{k\}}(t)$ for every $\omega\in\Omega$ and $t\in I$, satisfies $(H,0)\in\HM$.
Similarly $(0,\pm e_i)\in\HM$ for every $i=1,\ldots,m$, where $\{e_i\}_{i=1}^m$ denotes the canonical basis of $\R^m$.
Their final payoffs are thus contained in $\CM$, from which, for every $n\in\N$ and $\{\QN\}\in\Qapp$, it holds
\begin{equation}\label{eq:apprMM}
	|E_{\QN}[\phi_i]| \le \frac{1}{n},
	\qquad
	|E_{\QN}[1_A(S^j_k-S^j_{k-1})]| \le \frac{1}{n},
\end{equation}
for any $i=1,\ldots,m$, $j=1,\ldots,d$, $A\in\cF_{k-1}$, $k=1,\ldots,T$.

We now show that, under some additional weak assumptions, Theorem \ref{FTAP} implies the existence of true martingale measures.

\begin{assumption}\label{assumption} $(\Omega,m)$ is a Polish space with respect to a metric $m$.
	\begin{enumerate}[(i)]
		\item\label{hp:cont} For any $t\in\cI$, $S_t:\Omega\to\R^d_+$ is a continuous function \footnote{More precisely is $\qs$ equal to a continuous function};
		\item \label{hp:compact} For any $P\in \cP$ there exists a compact set $K^P$ such that $P(K^P)=1$. 
		\item $\mathbb{F}:=\{\cF_t\}_{t\in\cI}$ is the natural filtration generated by $S$.
	\end{enumerate} 
\end{assumption} 

Note that the previous conditions are not restrictive. If $S$ is only Borel measurable, by \cite[Theorem 4.59]{AB06} there exists a Polish topology $\tau$ on $\Omega$ such that the Borel sigma algebra is the same and the process $S$ is $\tau$-continuous.
Thus, assumption \eqref{hp:cont} can be made without loss of generality.
Assumption \eqref{hp:compact} can be easily fulfilled when the class of priors $\cP$ has the only scope of fixing the polar sets.
For $\Omega$ a Polish space, the class $\cR:=\{P(\cdot\mid K)\mid K\subset\Omega\text{ compact},\ P\in\cP\}$ satisfies \eqref{hp:compact} and $\cR\approx\cP$. 
Indeed, $\cR\ll \cP$ is trivial.
If $A\in\cF\setminus\polar$, there exists $P\in\cP$ such that $P(A)>0$. By \cite[Theorem 12.7]{AB06}, we find a compact set $K\subset A$ such that $P(K)>0$, from which $P(A\mid K)>0$ and $\cP\ll\cR$.

Let now $\mtg :=\{Q\in\overline{\cQ}\mid \cQ\in\Qapp,\ S \text{ is a $(Q,\mathbb{F})$-martingale with }E_Q[\Phi]=0\}$,
where the closure $\overline{\cQ}$ is taken in the $\sigma(\cM_1,\cC_b)$ sense.

\begin{theorem}\label{FTAP_compact}
	Under Assumption \ref{assumption}, the following are equivalent:
	\begin{enumerate}
		\item $\sNFLVR$;
		\item $\Qapp\approx\cP$ and $\cP\ll\mtg$.
	\end{enumerate}
	% If furthermore,  $\cC$ is sensitive, $\NA\Longleftrightarrow\Qapp\approx\cP\text{ and }\cP\ll\mtg$.
\end{theorem}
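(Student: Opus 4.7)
The plan is to prove the two implications separately. Since Theorem \ref{FTAP_discrete} already establishes $\sNFLVR \Leftrightarrow \Qapp \approx \cP$ (together with the $\delta$ property on non-polar sets), the substantive new content here is the construction of \emph{true} martingale measures. I therefore treat $(1) \Rightarrow (2)$ as the main technical direction, extracting weak limits of approximate separators via the compactness afforded by Assumption \ref{assumption}; the converse $(2) \Rightarrow (1)$ then uses each $Q \in \mtg$ as a separator for $\cl(\tildeC{})$.

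For $(1) \Rightarrow (2)$, I fix a non-polar $A \in \cF$. Since some $P_0 \in \cP$ has $P_0(A) > 0$, inner regularity on the Polish space $\Omega$ produces a compact $K \subset A$ with $P_0(K) > 0$, so $K$ is non-polar and closed. Theorem \ref{FTAP_discrete} applied to $K$ provides $\cQ = \{Q_n\}_{n \in \N} \in \Qapp$ and $\delta > 0$ with $Q_n(K) \ge \delta$ for every $n$. By Assumption \ref{assumption}(ii), $\cQ \ll P$ for some $P \in \cP$ whose support lies in a compact $K^P$, so the entire sequence is concentrated on $K^P$ and hence tight; Prokhorov extracts $Q_{n_k} \to Q$ weakly, and $Q(K^P) = 1$ by the Portmanteau inequality for the closed $K^P$. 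Applying Portmanteau again to the closed $K$ yields $Q(K) \ge \limsup_k Q_{n_k}(K) \ge \delta$, so $Q(A) > 0$. To verify $Q \in \mtg$, I observe that for every $g \in \cC_b$ which is $\cF_{k-1}$-measurable (equivalently $g = \phi(S_0, \ldots, S_{k-1})$ for $\phi$ bounded continuous on $(\R^d)^k$, using Assumption \ref{assumption}(i), (iii)), the one-step strategy $H_t = g \cdot 1_{\{k\}}(t)$ satisfies $|g(S_k^j - S_{k-1}^j)| \le \|g\|_\infty W$ and thus lies in $\cH_{\|g\|_\infty}$, so $|E_{Q_n}[g(S_k^j - S_{k-1}^j)]| \le 1/n$ for $n \ge \|g\|_\infty$. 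Multiplication by a continuous cutoff equal to $1$ on $K^P$ leaves integrals unchanged (all relevant measures being supported on $K^P$) and yields an integrand in $\cC_b(\Omega)$, so weak convergence forces $E_Q[g(S_k^j - S_{k-1}^j)] = 0$. A monotone class argument (density of $\cC_b$ in $L^1$ of the push-forward of $Q$ by $(S_0, \ldots, S_{k-1})$) extends the identity to all bounded $\cF_{k-1}$-measurable $g$, so $S$ is a $Q$-martingale; the identity $E_Q[\Phi] = 0$ follows analogously from $(0, \pm e_i) \in \cH_1$ and $|E_{Q_n}[\phi_i]| \le 1/n$ after a similar cutoff/approximation step.

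For $(2) \Rightarrow (1)$, I take $\xi \in \cl(\tildeC{}) \cap \Linfplus$ and $X_n \in \tildeC{} \cap \Linf$ with $\normc{X_n - \xi} \to 0$. Fix $Q \in \mtg$ and write $Q$ as a weak limit of $\{Q_m\} \in \Qapp$ with $Q_m \ll P \in \cP$. The definition of $\tildeC{}$ provides $k_n^P = (H_n^P \circ S)_T + h_n^P \cdot \Phi \in \cK$ with $X_n \le k_n^P$ $P$-a.s., hence $Q_m$-a.s.\ for every $m$. Combining the $Q$-martingale property of $S$, the identity $E_Q[\Phi] = 0$, the one-sided bound $(H_n^P \circ S)_T \ge X_n - h_n^P \cdot \Phi$, and the discrete-time Jacod--Shiryaev supermartingale lemma yields $E_Q[X_n] \le 0$; letting $n \to \infty$ gives $E_Q[\xi] \le 0$ for every $Q \in \mtg$. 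Since $\xi \ge 0$ $\qs$, this forces $Q(\{\xi > 0\}) = 0$ for each $Q \in \mtg$, and by $\cP \ll \mtg$ the set $\{\xi > 0\}$ is polar, so $\xi = 0$ $\qs$. The main obstacle is precisely in this last direction: the admissibility bound on $k_n^P$ holds only $P$-a.s.\ (hence $Q_m$-a.s.) and not quasi-surely or $Q$-a.s., so the supermartingale estimate under $Q$ cannot be applied directly. It must instead be routed through the approximating sequence $Q_m \Rightarrow Q$, where the continuity of $S$ and the compact supports furnished by Assumption \ref{assumption} are essential in ensuring that the weak-limit expectations behave correctly and the $P$-a.s.\ inequalities can be leveraged to produce a genuine bound under the limit $Q$.
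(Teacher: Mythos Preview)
Your $(1)\Rightarrow(2)$ is the paper's argument with two minor variations. You pass to a compact $K\subset A$ at the outset rather than invoking inner regularity at the very end; this is a clean simplification. And where the paper handles the unboundedness of $g\,(S_k^j-S_{k-1}^j)$ by a reweighting---introducing $dP_n/dQ_n=W/E_{Q_n}[W]$, extracting the weak limit $\bar P$ of $(P_n)$, and then setting $dQ/d\bar P\propto 1/W$---you instead multiply by a continuous cutoff equal to $1$ on $K^P$ and pass to the limit directly along $Q_n$. Both devices reduce the problem to weak convergence against elements of $\cC_b(\Omega)$, and the subsequent monotone-class step is the same.

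Your $(2)\Rightarrow(1)$ is where the proposal departs from the paper and runs into a real obstacle. The paper does not argue this direction at all: it simply observes that condition (2) contains $\Qapp\approx\cP$, and Theorem~\ref{FTAP_discrete} (which you yourself invoke in your opening paragraph) already gives the equivalence with $\sNFLVR$; the extra hypothesis $\cP\ll\mtg$ plays no role in this implication. Your direct argument via $E_Q[X_n]\le 0$ has the genuine gap you flag, and the route you sketch through $Q_m\Rightarrow Q$ does not close it: from $X_n\in\tildeC{}$ you obtain $X_n\le k_n^P$ only $P$-a.s., so $k_n^P$ carries no $\qs$ lower bound, hence need not lie in $\cC_m$ for any $m$, and the defining inequality $E_{Q_m}[\,\cdot\,]\le 1/m$ cannot be applied to it. The continuity and compact-support hypotheses of Assumption~\ref{assumption} do not rescue this, because the failure concerns the $\qs$ admissibility of $k_n^P$, not the passage to a weak limit. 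The correct fix is to drop this paragraph and cite Theorem~\ref{FTAP_discrete}, exactly as the paper does.
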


\begin{remark}
	Theorem \ref{FTAP_compact} cannot guarantee that the limiting measures $\mtg$ satisfies $\mtg\ll \cP $, as weak limits do not, in general, preserve absolute continuity with respect to a measure. 
\end{remark}

\paragraph{A martingale optimal transport framework.}
Set $\Omega=\R^{d\times T}_+$ and let $S_t(\omega)=\omega_t$ be the canonical process.
We assume that, for any of the assets $S^j$, a certain finite number $N(j)$ of call options are available for semi-static trading with payoffs $(S^j_T-k^j_i)_+$ for some $k^j_i>0$ and with prices $c^j_i$, for $i=1,\ldots N(j)$.
We assume that $c^j_i\ge 0$, otherwise there is an obvious arbitrage opportunity, and we also assume that for a sufficiently large strike price the options are traded at zero price; we model this by setting $c^j_{N(j)}=0$.
The corresponding set of options with zero prices is given by $\Phi=\{(S^j_T-k^j_i)_+-c_j\mid j=1,\ldots,d,i=1,\ldots,N(j)\}$. Following \cite{DH07} we construct the support function $\cR^j$ as the maximal convex non-increasing function such that $\cR^j(k^j_i)\le c^j_i$. As $\cR^j$ is 
$\lambda$-a.s.\footnote{$\lambda$ is the Lebesgue measure on $\R$.}~twice differentiable, following the observation of \cite{BL78}, we define a probability measure $\mu^j$ on $\R$ as $d\mu^j/d\lambda=(\cR^j)''$.
Note that, from the assumption $c^j_{N(j)}=0$, it follows that $\cR^j(x)=0$ for all $x\ge k^j_{N(j)}$ and, therefore, $\mu^j$ has compact support.

Let $\mu:=\otimes_{j=1}^d \mu^j$ the product measure of $\{\mu^j\}_{j=1}^d$ on $\R^d$. Let $K$ be the compact support of $\mu$. 
We consider a family of probability measures $\cP$ satisfying
\begin{equation}
	\cP\subset\{P\in\cM_1\mid P_T\sim \mu\},
\end{equation}
where $P_T$ denotes the marginal of $P$ on the last component of $\Omega$. The interpretation is the following. If we denote by $\cQ$ the (unknown) set of measures which are used in the market to price the options $\Phi$, $\mu$ represents the (approximation) of the distribution of $S_T$ under any $Q\in\cQ$.
Any probability measure equivalent to $\cQ$ defines the same null-events and should be considered as a plausible model.
Therefore, the only constraint that we can deduce from market data is that the distribution of $S_T$ under $P\in\cP$ should be equivalent to $\mu$. Note that, differently from the standard martingale optimal transport setup, we are not assuming to know, in addition, all the marginals at intermediate time. This case could be easily incorporated.\\

Denote by $K^T$ the $T$-fold product of the compact set $K$.
\begin{lemma}\label{l.arb}
	Under $\NA$, $P(K^T)=1$ for every $P\in\cP$.
\end{lemma}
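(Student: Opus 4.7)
The plan is to reduce the lemma to a coordinate-wise buy-and-hold argument. First I would observe that, since $\mu = \bigotimes_{j=1}^d \mu^j$, its compact support factorizes as $K = \prod_{j=1}^d K^j$ with $K^j := \mathrm{supp}(\mu^j) \subseteq [0,\, k^j_{N(j)}]$. Because $\mu^j$ has density $(\cR^j)''\ge 0$ with respect to Lebesgue measure and $\cR^j$ is convex, the set $K^j$ is an interval, say $[m^j, M^j]$. The standing hypothesis $P_T \sim \mu$ for every $P \in \cP$ then delivers the quasi-sure bound
\begin{equation*}
	m^j \le S^j_T \le M^j \qquad \mathrm{q.s.},\ j=1,\dots,d,
\end{equation*}
and it is this uniformity across $\cP$ that will make the hedge constructed below a quasi-sure arbitrage.

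Next I would argue by contradiction that $S^j_t \in [m^j, M^j]$ quasi-surely also for $t \in \{1,\dots,T-1\}$. Suppose $P(S^j_t > M^j) > 0$ for some $P \in \cP$, set $A := \{S^j_t > M^j\} \in \cF_t$, and take the $\mathbb{F}$-predictable strategy $H^j_s := -1_A\, 1_{\{s > t\}}$, $H^i_s \equiv 0$ for $i \ne j$, together with no static trades ($h = 0$). Its payoff is
\begin{equation*}
	(H \circ S)_T = 1_A\,(S^j_t - S^j_T),
\end{equation*}
which is quasi-surely nonnegative (on $A$, $S^j_t > M^j \ge S^j_T$ q.s., and off $A$ it vanishes) and strictly positive on $A$. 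Truncating, $((H\circ S)_T \wedge 1) \in \Ccal \cap \Linfplus$ is nonzero since $P(A)>0$, contradicting $\NA$. The symmetric lower case $P(S^j_t < m^j) > 0$ is handled by going long, $H^j_s := 1_A\, 1_{\{s>t\}}$; it is vacuous when $m^j = 0$ because $S$ takes values in $\R^{d\times T}_+$.

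Combining coordinate-wise over $j$ and intersecting over $t$, we obtain $S_t \in K$ quasi-surely for every $t \in \{1,\dots,T\}$, hence $P(K^T) = 1$ for every $P \in \cP$. The main conceptual point to verify carefully is that the arbitrage must have nonnegative payoff \emph{under every} measure in $\cP$, which is precisely why the uniform upper bound $S^j_T \le M^j$ q.s.\ (and its lower counterpart) is indispensable; this in turn relies on the hypothesis $P'_T \sim \mu$ for all $P' \in \cP$, and not merely for the measure $P$ witnessing the putative excursion of $S^j_t$ outside $K^j$.
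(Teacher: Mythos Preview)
Your argument is correct and takes a more elementary route than the paper's. The paper treats the whole vector $S_t$ at once: for each $t$ it builds an $\cF^S_t$-measurable direction $H_t\in\R^d$ by applying the separating-hyperplane theorem to the point $S_t(\omega)\notin K$ and the compact set $K$, and then invokes a measurable-selection argument to patch these directions into a predictable strategy with payoff $H_t\cdot(S_T-S_t)\ge 0$, strictly positive on $\{S_t\notin K\}$. You instead exploit the product structure $K=\prod_j K^j$ and work one coordinate at a time, which lets you write down the hedge explicitly ($\pm 1_A$ buy-and-hold in asset $j$) and dispense with measurable selection entirely. Both routes rest on the same economic idea---if $S_t$ sits outside the terminal support there is a sure direction of drift back into it---and both hinge on the crucial uniformity $S_T\in K$ $\qs$, which comes from $P'_T\sim\mu$ for \emph{every} $P'\in\cP$; your proof plan makes this dependence pleasantly explicit.

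One caveat: the claim that $K^j$ is an interval because $\mu^j$ has density $(\cR^j)''$ with $\cR^j$ convex is not airtight---$\cR^j$ may have affine pieces on which $(\cR^j)''=0$, so $\mathrm{supp}(\mu^j)$ need not be connected. Your hedge then only forces $S^j_t\in[\min K^j,\max K^j]$, i.e.\ $S_t\in\mathrm{conv}(K)$ rather than $S_t\in K$. Note, however, that the paper's hyperplane-separation step carries exactly the same hidden convexity assumption (a point in $\mathrm{conv}(K)\setminus K$ cannot be strictly separated from $K$), so this is a shared imprecision rather than a defect specific to your approach; for the intended application---verifying Assumption~\ref{assumption}\eqref{hp:compact}---the convex hull is equally compact and suffices.
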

\begin{proof}
	For every $n\in\N$, $t=1,\ldots,T-1$, consider the closed-valued multifunction 
	\[
	\Psi_{t,n}(\omega):=\left\{H\in\R^d\mid H\cdot(x-\omega_t)\ge \frac{1}{n}\quad \forall x\in K\setminus\{\omega_t\}\right\},\quad \omega\in\Omega.
	\]
	The domain of $\Psi_{t,n}$ is defined as $\textrm{dom}(\Psi_{t,n}):=\{\omega\in\Omega\mid\Psi_{t,n}(\omega)\neq\emptyset\}$.
	The compactness of $K$ and the hyperplane separating theorem implies that $\cup_{n\in \N}\textrm{dom}(\Psi_{t,n})=\{\omega\in\Omega\mid S_t(\omega)\notin K\}$. 
	From \cite[Lemma A.7]{BFM16}, $\Psi_{t,n}$ is $\cF^S_t$-measurable, thus, it admits a measurable selector $\psi_{t,n}$ on its domain which we extend to the whole $\Omega$ by setting $\psi_{t,n}=0$ on the complementary set.
	By letting $H_t:=\sum_{n=1}^{\infty}\psi_{t,n}1_{\{\psi_{t,n-1}\neq 0\}}$ with $\psi_{t,0}=0$, we obtain $H_t\cdot(S_T-S_t)\ge 0$ with strict positivity on $\{\omega\in\Omega\mid S_t(\omega)\notin K\}$. If now, by contradiction, $P(K^T)<1$, there exists $1\le t\le T-1$ such that $P(S_t\in K^c)>0$.
	Thus, $H_t$ as above provides an arbitrage opportunity.
\end{proof}

We could deduce the following
\begin{proposition}
	Under the assumption of this paragraph, the following are equivalent:
	\begin{enumerate}
		\item $\sNFLVR$;
		\item $\Qapp\approx\cP$ and $\cP\ll\mtg$.
	\end{enumerate}
\end{proposition}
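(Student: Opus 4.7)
My plan is to deduce the proposition as a direct application of Theorem \ref{FTAP_compact} to the present MOT specification; the bulk of the work consists in verifying Assumption \ref{assumption} in this setup.

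The ambient space $\Omega=\R^{d\times T}_+$ with the Euclidean metric is Polish, the canonical process $S_t(\omega)=\omega_t$ is automatically continuous, and $\mathbb{F}$ is taken to be the natural filtration generated by $S$; this takes care of items (i) and (iii) of Assumption \ref{assumption}. The only non-trivial point is the compact support condition (ii), and this is precisely where Lemma \ref{l.arb} plays its role.

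For the implication $(1)\Rightarrow(2)$, I would first note that $\sNFLVR$ implies the classical $\NFLVR$, and hence $\NA$, since $\cC\subset\tildeC{}$ gives $\cl(\cC)\cap\Linfplus\subset\cl(\tildeC{})\cap\Linfplus=\{0\}$. Lemma \ref{l.arb} then provides, uniformly over $P\in\cP$, the fixed compact set $K^T$ with $P(K^T)=1$, so Assumption \ref{assumption}(ii) is fulfilled and Theorem \ref{FTAP_compact} delivers $\Qapp\approx\cP$ together with $\cP\ll\mtg$. For the converse $(2)\Rightarrow(1)$ no compactness is needed: the existence of an approximate separating sequence in $\Qapp$ equivalent to $\cP$ already excludes free lunches with vanishing risk through the elementary half of Theorem \ref{FTAP}, since for any candidate $\xi\in\cl(\tildeC{})\cap\Linfplus$ supported on a non-polar set one would contradict $E_{Q_n}[X]\le 1/n$ for $X\in\CM$ by passing to a limit along a $Q_n\in\cQ$ with $\cQ\ll P$ and $P(\{\xi>0\})>0$.

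The only genuine obstacle in this proposition is thus the verification of the compact support condition, which Lemma \ref{l.arb} resolves via an explicit hedging construction against trajectories leaving $K$; once this is in hand, no new technical work beyond invoking Theorem \ref{FTAP_compact} is required.
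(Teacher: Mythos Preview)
Your proof is correct and follows essentially the same route as the paper: verify Assumption \ref{assumption} in the MOT specification and then invoke Theorem \ref{FTAP_compact}. Your treatment is in fact slightly more explicit than the paper's two-line argument, since you spell out that item (ii) of Assumption \ref{assumption} is obtained via Lemma \ref{l.arb} after first deducing $\NA$ from $\sNFLVR$, whereas the paper simply asserts that the assumption ``is satisfied in the framework of this subsection''; your handling of $(2)\Rightarrow(1)$ via the easy direction of Theorem \ref{FTAP} (which needs no compactness) is likewise in line with how the paper defers this to Theorem \ref{FTAP_discrete}.
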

\begin{proof}
	Assumption \ref{assumption} is satisfied in the framework of this subsection. The result follows directly from Theorem \ref{FTAP_compact}. 
\end{proof}

\begin{remark}
	If, in addition, the class $\cP$ is chosen with the structure of \cite{BN13}, from Theorem \ref{example:sensitive} below, the above are further equivalent to $\NA$.
\end{remark}
\begin{remark}
	In the classical MOT framework it is well know that Strassen's Theorem ensure that the set of martingale measures with prescribed marginals is non empty if and only if the marginals are in convex order. This is typically taken as a no-arbitrage condition. The above theorem explains such a no-arbitrage condition from a different point of view.
\end{remark}

\section{The role of filtrations in the aggregation process}\label{sec:C}

In this section we depict two well known examples, borrowed from the recent literature, in which the cone $\Ccal$ turns out to be sensitive under the no-arbitrage condition. We show how sensitivity is related to the possibility of obtaining an aggregation property for superhedging strategies. In both examples the filtration will play a crucial role and will be an opportune enlargement of the natural filtration, which will not affect the structure of the set of martingale measures for the discounted price process, calibrated on liquid options. We stress that our main goal is to explain some significant features of the models rather than recovering these well established results. Nevertheless, Theorem \ref{example:sensitive} below is a direct proof of the sensitivity of $\cC$ which does not rely on the results of \cite{BN13} and which provides new insights on the properties of the superhedging functional in that framework.

\paragraph{The product structure of \cite{BN13}.}
Starting from the framework of Section \ref{discrete:setting} and letting $W=1$, we further assume the following requirements. The underlying space $\Omega=\Omega_1^T$ is a $T$-fold product of a Polish space $\Omega_1$ and $\Omega_t:=\Omega_1^t$. 
For every $t\in I$, $\cF_t$ is the universal completion of the Borel sigma-algebra $\cB_{\Omega_t}$, defined as
\begin{equation*}
	\bigcap_{P\in \cM_1(\Omega_t)}\mathcal{B}_{\Omega_t}\vee \mathcal{N}_t^{P},
	\text{ where }\mathcal{N}_t^{P}=\{N\subset A\in \mathcal{B}_{\Omega_t}\mid P(A)=0\}\text{.}
\end{equation*}
Fix $t\in I$, the event $\omega\in \Omega_t$ can be seen as a path observed up to time $t$ and  $\cP_t(\omega)\subset \cM_1(\Omega_1)$ is a prescribed convex set of priors, on the node $(t,\omega)$. It is assumed that
$$\text{graph}(\cP_t)=\{(\omega,P)\mid \omega\in \Omega_t,\; P\in \cP_t(\omega)\}$$
is analytic, thus, it admits a universally measurable selector $P_t:\Omega_t\to \cM_1(\Omega_1)$: this allows to introduce the set of multiperiod probabilistic models (priors) as 
$$\cP := \{P_0\otimes P_1\otimes \ldots \otimes P_{T-1}\mid P_t(\cdot)\in \cP_t(\cdot),\, t=0,\ldots,T-1\}. $$
We set 
\begin{equation*}
	\mathcal{Q}:= \left\{ Q\lll \cP\mid S\text{ is an }\mathbb{F}\text{-martingale under }Q \text{ and } E_{Q}[\Phi]=0\right\}.
\end{equation*}
For simplicity we assume in the following that $\Phi=0$.  
In this specific framework the following FTAP was proved in \cite{BN13}.

\begin{center} $\NA$ holds if and only if 
	$\cP$ and $\mathcal{Q}$ share the same polar sets $\polar$.\end{center}

Our aim is to establish the sensitivity of the cone $\Ccal$ by showing that the $\cP$-$\qs$ superhedging price is the supremum of the $P$-a.s.\ superhedging price with $P$ varying in $\cP$. 

\begin{theorem}\label{example:sensitive} Consider the measurable space $(\Omega,\cF_T)$ and the class $\cP$ as described above. Then under $\NA$ we have 
	\[ \Ccal = \bigcap_{P\in\cP} j_P^{-1}(j_P(\Ccal)) \]
	and the four notions $\NA$, $\sNA$, $\NFLVR$ and $\sNFLVR$ are all equivalent. 
\end{theorem}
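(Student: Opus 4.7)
The inclusion $\cC\subseteq\bigcap_{P\in\cP}j_P^{-1}(j_P(\cC))$ is immediate: a $\cP$-q.s.\ super-replicating strategy for $X$ also super-replicates $X$ $P$-a.s.\ for every $P$. For the reverse inclusion, the starting point is the concrete description
\[ j_P^{-1}(j_P(\cC)) = \bigl\{X\in\LW\mid \exists (H,h)\in\Hcal,\ X\le (H\circ S)_T + h\cdot\Phi\ P\text{-a.s.}\bigr\}, \]
which holds because $Y:=X\wedge ((H\circ S)_T + h\cdot\Phi)$ lies in $\cC$ and coincides with $X$ $P$-a.s. Sensitivity therefore amounts to an aggregation statement: if $X$ is $P$-a.s.\ super-replicable for every individual $P\in\cP$, then a single strategy super-replicates it $\cP$-q.s.

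My plan is to prove aggregation through super-hedging duality. Set
\[ \pi_\cP(X):=\inf\{x\in\R\mid X-x\in\cC\},\qquad \pi_P(X):=\inf\{x\in\R\mid X-x\in j_P^{-1}(j_P(\cC))\}, \]
so that $X\in\tildeC{}$ is equivalent to $\pi_P(X)\le 0$ for all $P$, and $X\in\cC$ is equivalent to $\pi_\cP(X)\le 0$, \emph{provided} the infima are attained. Under $\NA$, attainment of $\pi_\cP$ comes for free from Lemma \ref{closure} (the $\qs$-closedness of $\cC$). For each $P\in\cP$ individually, the local $\NA$ built into the product structure implies classical $P$-NA, hence the Dalang--Morton--Willinger theorem provides $\pi_P(X) = \sup_{Q\ll P,\, Q\ \text{mart.}} E_Q[X]$ with attainment. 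For the q.s.\ duality, I would separate $X-\pi_\cP(X)$ from the $\qs$-closed cone $\cC$ in $(\Linf,\|\cdot\|_\infty)$ via Hahn--Banach, representing the separating functional as an integral against a measure in $ca(\cP)$; the richness of true martingale measures supplied by Theorem \ref{FTAP_compact} (applied to the compactified equivalent class $\cR=\{P(\cdot\mid K)\mid K\subset\Omega\text{ compact},\,P\in\cP\}$) together with the approximate-measure limiting arguments behind Theorem \ref{FTAP_discrete} would then pin down the dual representation
\[ \pi_\cP(X)=\sup_{Q\in\Qcal}E_Q[X], \qquad \Qcal=\{Q\lll\cP\mid Q\ \text{mart.},\ E_Q[\Phi]=0\}. \]
Since $\Qcal=\bigcup_{P\in\cP}\{Q\ll P,\,Q\ \text{mart.}\}$ by the definition of $\lll$, the two dualities combine into the identity $\pi_\cP=\sup_{P\in\cP}\pi_P$, from which $X\in\tildeC{}$ yields $\pi_\cP(X)\le 0$, and hence $X\in\cC$.

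The four equivalences then follow at once: $\sNA\Leftrightarrow\NA$ and $\sNFLVR\Leftrightarrow\NFLVR$ from the identity $\cC=\tildeC{}$, and $\NA\Leftrightarrow\NFLVR$ from Lemma \ref{closure} because $\|\cdot\|_\infty$-convergence in $\Linf$ implies $\qs$-convergence along a subsequence, so $\cl(\cC)=\cC\cap\Linf$ under $\NA$. I expect the principal obstacle to be the $\cP$-q.s.\ super-hedging duality $\pi_\cP=\sup_{Q\in\Qcal}E_Q[\cdot]$: Theorem \ref{FTAP_compact} produces martingale measures but does not hand over the duality formula, and care is needed to argue that the dual set is precisely $\Qcal$ rather than the strictly larger set $\mtg$ of weak limits, so that the chain with the single-$P$ dualities closes. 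Executing this step via Hahn--Banach in the Banach lattice $(\Linf,\|\cdot\|_\infty)$, without resorting to the measurable-selection dynamic programming of \cite{BN13}, is the technical heart of the argument.
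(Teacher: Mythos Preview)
Your plan has a genuine circularity. You propose to deduce the quasi-sure superhedging duality $\pi_\cP(X)=\sup_{Q\in\Qcal}E_Q[X]$ by invoking Theorem~\ref{FTAP_compact} (and the machinery behind Theorem~\ref{FTAP_discrete}), but both results take $\sNFLVR$ as input, not $\NA$. The implication $\NA\Rightarrow\sNFLVR$ in this framework is exactly what follows \emph{from} sensitivity of $\cC$, which is the statement you are trying to prove. So under the sole hypothesis $\NA$ you are not entitled to call on those theorems; nor can you import the Bouchard--Nutz duality directly, since the stated purpose of Theorem~\ref{example:sensitive} is a proof that does not rely on \cite{BN13}.

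The Hahn--Banach route you sketch does not close this gap. The norm dual of $(\Linf,\normc{\cdot})$ consists of finitely additive functionals, so a separator of $X-\pi_\cP(X)$ from $\cC$ lands in $ba(\cP)$, not $ca(\cP)$. Passing to a countably additive measure that is, moreover, a martingale measure with $Q\lll\cP$ is precisely the difficult step; in the non-dominated setting there is no Kreps--Yan lemma available, and the approximate separators from Theorem~\ref{FTAP} give only \eqref{eq:sepCN}, not $E_Q[k]=0$ for $k\in\cK$. You yourself flag that the dual set must be $\Qcal$ rather than $\mtg$, but no mechanism is offered to secure this.

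The paper avoids global duality altogether. It works locally: define the conditional superhedging price $\Pi_t(\omega)$ by backward recursion, rewrite it via a minimax argument as a supremum over kernels $P\in\cP_t(\omega)$, and show that $\Pi_t$ is upper semianalytic (Proposition~\ref{prop:USA}). One then proves pointwise that $\Pi_t(\omega)=\sup_{P\in\cP_t(\omega)}\pi_t(\omega,P)$ outside a polar set, using the local condition $\NA(\cP_t(\omega))$ and a one-period argument (Proposition~\ref{prop:supsuper}). Finally, for each $\varepsilon>0$ a universally measurable $\varepsilon$-optimal selector $P^\varepsilon_t(\omega)\in\cP_t(\omega)$ exists (Lemma~\ref{lem:Palmost}), and pasting these kernels yields $P^\varepsilon:=P^\varepsilon_1\otimes\cdots\otimes P^\varepsilon_T\in\cP$ with $\Pi(X)\le\Pi^{P^\varepsilon}(X)+\varepsilon T$. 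Thus $\sup_{P\in\cP}\pi_P(X)\ge\Pi(X)$ is obtained \emph{constructively}, without any global separation or appeal to $\sNFLVR$. Note that this \emph{does} use measurable selection (analytic sets, \cite[Prop.~7.47--7.50]{BS78}); the universal filtration is what makes the selectors admissible. Your hope of bypassing measurable selection entirely via functional analysis does not seem achievable here.
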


Before giving the proof we need some preliminary results.
Denote by $\Pi(X)$ the quasi-sure superhedging price of an upper semianalytic function $X:\Omega\to\R$ (see \cite[Chapter 7.7]{BS78} for more on semianalytic functions).
For $t=0,\ldots, T-1$, let $\Delta S_{t+1}:=S_{t+1}-S_t$ and for a function $f:\Omega_t\times\Omega_1\to \R^n$, the notation $f(\omega;\cdot)$ indicates that $f$ is considered as a function on $\Omega_1$ with the first coordinate fixed.
Let $\Pi_T=X$ and define, by backward iteration,
\[\Pi_t(\omega):=\inf\{x\in\R\mid \exists H\in\R^d \text{ s.t. } x+H\cdot \Delta S_{t+1}(\omega;\cdot)\ge \Pi_{t+1}(\omega;\cdot)\  \cP_t(\omega)\text{-}\qs\},
\]
for $\omega\in\Omega_t$.
We refer to $\Pi_t$ as the conditional superhedging price of $X$ at time $t$, in particular $\Pi_0=\Pi(X)$.	
In the same spirit, we define  
\[\pi_t(\omega,P):=\inf\{x\in\R\mid \exists H\in\R^d \text{ s.t. } x+H\cdot \Delta S_{t+1}(\omega;\cdot)\ge \Pi_{t+1}(\omega;\cdot)\  P\text{-a.s.}\},
\]
for $\omega\in\Omega_t$.
We aim at showing that under $\NA$ there exists a $\cP$-polar set $N$ such that the following holds:
\begin{enumerate}
	\item $\Pi_t(\omega)=\sup_{P\in\cP_t(\omega)}\pi_t(\omega,P)$, for every $\omega\in N^c$;
	\item $\Pi_t:\Omega_t\to \overline{\R}$ is upper semianalytic;
	\item for any $\varepsilon>0$, there exists a universally measurable kernel $P^\varepsilon:\Omega_t\to\cM_1(\Omega_1)$ such that $P^\varepsilon(\omega)\in\cP_t(\omega)$ and $\pi_t(\omega,P^\varepsilon_t(\omega))\ge \Pi_t(\omega)-\varepsilon$,  for every $\omega\in N^c$.
\end{enumerate}
The first property shows that the quasi-sure superhedging price for a given $X$ is the worst case among all the $P$-a.s superhedging prices for $X$. 
The other two properties guarantee that it is possible to construct models in $\cP$, by means of an appropriate measurable selection of $\cP_t$, for which the almost-sure superhedging price is arbitrarily close to $\Pi(X)$.

\begin{remark}\label{rmk:rewrite}
	Let $\cP\subset\cM_1$ and $X\in\Lzero$. For any $P\in\cP$, the measure $\frac{dP'}{dP}=\frac{c}{1+|X|}$, with $c$ a normalizing constant, is equivalent to $P$ and $E_{P'}[|X|]<\infty$. Thus, $\cP\approx \tilde{\cP}$ where $\tilde{\cP}=\{P\lll\cP\mid E_P[|X|]<\infty\}$. Now we show that
	\[
	\inf\{x\in\R\mid x\ge X\ \cP\text{-}\qs\}=\sup_{P\in \tilde{\cP}}E_{P} [X].
	\]
	Indeed, if the l.h.s.\ is infinite the inequality $\ge$ is trivial. Otherwise, if $x$ is such that $x\ge X$ $\cP$ $\qs$ then $x\ge E_{P}[X]$ for any $P\ll P'$ with $E_P[|X|]<\infty$ and $P'\in\cP$; the latter exists from the first observation above. The inequality $\ge$ follows.
	Let $M$ be the value of the l.h.s.\ above (or an arbitrary large $M$ if it is not finite). For any $\varepsilon>0$, there exists $P'\in\cP$ such that the set
	$A:=\{M-\varepsilon< X\}$ satisfies $P'(A)>0$. Take $P''\sim P'$ such that $E_{P''}[|X|]<\infty$ as above and note that it still holds $P''(A)>0$. Define the probability $P(\cdot):=P''(\cdot\mid A)$ which satisfies $P\ll P''\sim P'$ and $ M-\varepsilon<E_{P}[X]$. From $\varepsilon$ (and $M$ in the infinite case) being arbitrary, the inequality $\le$ follows.
\end{remark}
We start by showing the measurability properties.
\begin{proposition}\label{prop:USA}
	For any $t\in I$, $\Pi_t:\Omega_t\to\overline{\R}$ is upper semianalytic.
\end{proposition}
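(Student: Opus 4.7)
The natural approach is backward induction on $t$. The base case $t = T$ is immediate, since $\Pi_T = X$ is upper semianalytic by hypothesis. For the inductive step, assume $\Pi_{t+1} : \Omega_{t+1} \to \overline{\R}$ is upper semianalytic. I would first apply Remark \ref{rmk:rewrite} fiberwise in $\omega \in \Omega_t$ (with $\cP = \cP_t(\omega)$ and $X = \Pi_{t+1}(\omega;\cdot) - H \cdot \Delta S_{t+1}(\omega;\cdot)$) to rewrite
\[
\Pi_t(\omega) = \inf_{H \in \R^d} \phi_t(\omega, H),
\qquad
\phi_t(\omega, H) := \sup_{P \in \cP_t(\omega)} E_P\!\left[\Pi_{t+1}(\omega;\cdot) - H \cdot \Delta S_{t+1}(\omega;\cdot)\right].
\]

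The first substantive step is to show that $\phi_t$ is upper semianalytic on $\Omega_t \times \R^d$. The integrand, viewed as a function of $(\omega, \omega', H)$, is upper semianalytic as the sum of an upper semianalytic and a Borel function. Integration of an upper semianalytic function against a probability measure preserves upper semianalyticity (\cite[Prop.~7.48]{BS78}), so $(\omega, H, P) \mapsto E_P[\cdot]$ is upper semianalytic on $\Omega_t \times \R^d \times \cM_1(\Omega_1)$. Since $\mathrm{graph}(\cP_t)$ is analytic, the partial supremum in $P \in \cP_t(\omega)$ is still upper semianalytic in $(\omega, H)$, by the projection theorem for analytic sets (\cite[Prop.~7.47]{BS78}).

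The delicate step is passing from $\phi_t$ to $\Pi_t = \inf_H \phi_t$, since an infimum over an uncountable parameter does not preserve upper semianalyticity in general. The key observation is that $\phi_t(\omega, \cdot)$ is convex, being a supremum of affine functions in $H$, hence lower semicontinuous on the interior of its effective domain. Under NA, one rules out recession directions along which $\phi_t(\omega, H) \downarrow -\infty$, so that on a set of full quasi-sure measure the infimum can be taken over a countable dense subset:
\[
\Pi_t(\omega) = \inf_{H \in \Q^d} \phi_t(\omega, H) \quad \qs
\]
A countable infimum of upper semianalytic functions is upper semianalytic, which closes the induction.

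The main technical obstacle is precisely this reduction from an uncountable to a countable infimum. It requires combining the convex-analytic regularity of $\phi_t(\omega, \cdot)$ in $H$ with the measurable structure provided by the analyticity of $\mathrm{graph}(\cP_t)$ and with NA, in order to exclude pathological configurations in which $\inf_{H \in \R^d} \phi_t(\omega, H)$ is attained only on the boundary of a random effective domain. Standard one-step measurable-selection arguments available in the Bouchard--Nutz framework can be adapted to carry this out.
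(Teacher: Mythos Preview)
Your backward induction and the rewriting via Remark \ref{rmk:rewrite} match the paper exactly, and your identification of the outer infimum over $H\in\R^d$ as the crux is correct. However, the way you propose to handle that infimum has a genuine gap, and the paper takes a different route that avoids it.

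The problem is that an infimum over an uncountable parameter of an upper semianalytic function is, in general, only \emph{lower} semianalytic (the sublevel sets $\{\inf_H\phi_t<c\}$ are projections of analytic sets). Your proposed remedy---use that $\phi_t(\omega,\cdot)$ is convex and lower semicontinuous to replace $\inf_{H\in\R^d}$ by $\inf_{H\in\Q^d}$---does not go through. Convex lsc functions can have effective domain contained in an affine subspace with no nonzero rational points: for instance $f(x,y)=-x$ on the line $\{y=\sqrt{2}x\}$ and $+\infty$ elsewhere satisfies $\inf_{\R^2}f=-\infty$ but $\inf_{\Q^2}f=0$. Invoking $\NA$ to ``rule out recession directions'' does not address this, since the obstruction is the geometry of the domain, not the behaviour at infinity. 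The final appeal to ``standard one-step measurable-selection arguments'' is where the real work would have to be done, and it is not clear it can be.

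The paper circumvents the issue entirely by a minimax step. Writing $\Pi_t(\omega)=\inf_{n}\inf_{H\in B_n(0)}\sup_{P}E_P[\,\cdot\,]$ and applying a minimax theorem (affine in $H$, concave in $P$, $B_n(0)$ compact) swaps the order to $\inf_{n}\sup_{P}\inf_{H\in B_n(0)}E_P[\,\cdot\,]$. Now the inner function $g(\omega,P,H)=-H\cdot E_P[\Delta S_{t+1}(\omega;\cdot)]$ is Carath\'eodory (Borel in $(\omega,P)$, continuous in $H$), so $\inf_{H\in B_n(0)}g$ is \emph{Borel} by normal-integrand results \cite[Theorem~14.37]{R}, not merely USA. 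Adding the USA term $E_P[\Pi_{t+1}(\omega;\cdot)]$ and taking the supremum over the analytic set $\{P\lll\cP_t(\omega)\}$ yields an upper semianalytic function of $\omega$ by \cite[Prop.~7.47, 7.48]{BS78}; the remaining countable infimum over $n$ preserves upper semianalyticity. Note in particular that the paper's argument does not use $\NA$ at all for this proposition.
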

\begin{proof}
	We proceed by backward iteration. 
	For $t=T$, $\Pi_T=X$ which is upper semianalytic by assumption.
	Assume the same is true up to $t+1$.
	Using Remark \ref{rmk:rewrite}, we first rewrite the conditional superhedging price as follows, 
	\begin{eqnarray*}
		\Pi_t(\omega)& = &\inf_{H\in\R^d}\inf\{x\in\R\mid x\ge \Pi_{t+1}(\omega;\cdot)-H\cdot \Delta S_{t+1}(\omega;\cdot)\quad \cP_t(\omega)\text{-}\qs\}
		\\ & = & \inf_{H\in\R^d}\sup_{P\in\tilde{\cP}_t(\omega)}E_P [\Pi_{t+1}(\omega;\cdot)-H\cdot\Delta S_{t+1}(\omega;\cdot)],
	\end{eqnarray*}
	where $\tilde{\cP}_t(\omega):=\{P\lll\cP_t(\omega) \mid E_P[|\Delta S_{t+1}(\omega;\cdot)|]<\infty\}$ and where we recall that $E_P [\Pi_{t+1}(\omega;\cdot)]=\int_{\Omega_1}\Pi_{t+1}(\omega;\tilde{\omega})dP(\tilde{\omega})$ (similarly for the other term).
	By an application of the minimax theorem (see e.g.\ \cite[Corollary 2]{T72}), we can rewrite
	\begin{eqnarray*}
		\Pi_t(\omega)&=&\inf_{n\in\N}\inf_{H\in B_n(0)}\sup_{P\in\tilde{\cP}_t(\omega)}E_P [\Pi_{t+1}(\omega;\cdot)-H\cdot\Delta S_{t+1}(\omega;\cdot)]\\&=&\inf_{n\in\N}\sup_{P\in\tilde{\cP}_t(\omega)}\inf_{H\in B_n(0)}E_P [\Pi_{t+1}(\omega;\cdot)-H\cdot\Delta S_{t+1}(\omega;\cdot)],
	\end{eqnarray*}
	where $B_n(0)$ is the closed (compact) ball in $\R^d$ of radius $n$ centered in $0$. 
	Indeed, the above function of $(H,P)$ is affine in the first variable (hence convex and continuous) and linear (hence concave) in the second one.
	We now show that the functions $f_n:\Omega_t\times\cM_1(\Omega_1)\to \overline{\R}$ defined as 
	\begin{eqnarray}\label{eq:USA}
		f_n(\omega,P)& = & E_P[\Pi_{t+1}(\omega;\cdot)]+\inf_{H\in B_n(0)}g(\omega,P,H),
		\\ g(\omega,P,H)& := & -H\cdot E_P[\Delta S_{t+1}(\omega;\cdot)],\nonumber
	\end{eqnarray}
	are upper semianalytic.
	We first observe that $g$ is continuous in $H$ with $(\omega,P)$ fixed and $g$ is Borel measurable in $(\omega,P)$ with $H$ fixed, as a consequence of \cite[Proposition 7.26 and 7.29]{BS78}.
	In other words $g$ is a Carath\'eodory function, which is a particular case of normal integrand (see \cite[Definition 14.27]{R})
	By \cite[Theorem 14.37]{R}, $(\omega,P)\mapsto\inf_{H\in B_n(0)}g(\omega,P,H)$ is Borel measurable (in particular upper semianalytic).
	Moreover, since $\Pi_{t+1}$ is upper semianalytic by the inductive assumption, the map $(\omega,P)\mapsto E_{P}[\Pi_{t+1}(\omega;\cdot)]$ is also upper semianalytic by \cite[Proposition 7.26 and 7.48]{BS78}.
	We conclude that $f_n$ is upper semianalytic as the sum of upper semianalytic functions.
	
	Define now the Borel-measurable function $\phi:\Omega\times \cM_1(\Omega_1)\times \cM_1(\Omega_1)\mapsto \R\cup \{+\infty\}$ as $\phi(\omega,P,P'):=E_P[\frac{dP'}{dP}]$ if $P'\ll P$ and $+\infty$ otherwise. 
	Consider the set $D\subset \Omega_t\times \cM_1(\Omega_1)$ defined by
	\[D:=\{(\omega,P)\mid \omega\in \Omega_t, \; P \lll \cP_t(\omega), \; E_P[|\Delta S_{t+1}(\omega;\cdot)|]<\infty \}.
	\]
	From the definition of $\phi$ we have $\phi(\omega,P,P')=1$ if and only if $P'\ll P$ and therefore $D$ is the projection on the first and the third components of the analytic set
	$\big(\graph(\cP_t)\times\cM_1(\Omega_1)\big)\cap \phi^{-1}(1)$ intersected with the Borel set of probability measures for which $\Delta S_{t+1}$ is integrable.
	We deduce that $D$ is  also analytic (see also \cite[Lemma 4.11]{BN13}).
	
	Clearly $\sup_{P\lll\cP_t(\omega)} f_n(\omega,P)=\sup_{P\in D_{\omega}} f_n(\omega,P)$ with $D_{\omega}=\{P\in \cM_1(\Omega_1)\mid (\omega,P)\in D\}$ and by \cite[Proposition 7.47]{BS78}, $\omega\mapsto\sup_{P\lll\cP_t(\omega)} f_n(\omega,P)$ is upper semianalytic. Finally, the class of upper semianalytic functions is closed under countable infimum by \cite[Lemma 7.30 (2)]{BS78} and the result follows.
\end{proof}

In exactly the same way we can show the following.

\begin{lemma}\label{lem:Palmost}
	The map $\psi:\Omega\to\overline{R}$, defined as $\psi(\omega)=\sup_{P\in\cP_t(\omega)}\pi_t(\omega,P)$ is upper semianalytic.
	Moreover, for any $\varepsilon>0$, there exists a universally measurable kernel $P^\varepsilon_t:\Omega_t\to\cM_1(\Omega_1)$ such that $P^\varepsilon_t(\omega)\in\cP_t(\omega)$ and $\pi_t(\omega,P^\varepsilon_t(\omega))\ge  \psi(\omega)-\varepsilon$ on $\psi<\infty$ and $\pi_t(\omega,P^\varepsilon_t(\omega))\ge  1/\varepsilon$ on $\psi=\infty$.
\end{lemma}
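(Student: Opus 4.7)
The plan is to adapt the argument of Proposition \ref{prop:USA} with $P$ now playing the role of an auxiliary variable rather than being summarized through $\cP_t(\omega)$. First I would establish that $(\omega,P)\mapsto \pi_t(\omega,P)$ is upper semianalytic on $\Omega_t\times\cM_1(\Omega_1)$, and then take the supremum along the analytic graph of $\cP_t$.

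For the measurability of $\pi_t$, applying Remark \ref{rmk:rewrite} to the singleton prior $\{P\}$ yields
\[
\pi_t(\omega,P)=\inf_{H\in\R^d}\sup_{P'\in\tilde{\cP}(\omega,P)}E_{P'}\bigl[\Pi_{t+1}(\omega;\cdot)-H\cdot\Delta S_{t+1}(\omega;\cdot)\bigr],
\]
where $\tilde{\cP}(\omega,P):=\{P'\ll P\mid E_{P'}[|\Delta S_{t+1}(\omega;\cdot)|]<\infty\}$. The minimax step of Proposition \ref{prop:USA} applies verbatim, giving
\[
\pi_t(\omega,P)=\inf_{n\in\N}\sup_{P'\in\tilde{\cP}(\omega,P)}\inf_{H\in B_n(0)}E_{P'}\bigl[\Pi_{t+1}(\omega;\cdot)-H\cdot\Delta S_{t+1}(\omega;\cdot)\bigr].
\]
The inner function $f_n(\omega,P')$ of \eqref{eq:USA} is upper semianalytic exactly as before. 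Reusing the Borel function $\phi$ from the preceding proof, the set $\tilde{D}:=\{(\omega,P,P')\mid P'\ll P,\ E_{P'}[|\Delta S_{t+1}(\omega;\cdot)|]<\infty\}$ is Borel, since $\{(P,P'):P'\ll P\}=\phi^{-1}(1)$ is Borel and $(\omega,P')\mapsto E_{P'}[|\Delta S_{t+1}(\omega;\cdot)|]$ is Borel by \cite[Proposition 7.29 and 7.48]{BS78}. A direct application of \cite[Proposition 7.47]{BS78} then shows that, fiberwise over $(\omega,P)$, $\sup_{P'\in\tilde{\cP}(\omega,P)}\inf_{H\in B_n(0)}E_{P'}[\cdot]$ is upper semianalytic in $(\omega,P)$. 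A countable infimum in $n$ preserves this property by \cite[Lemma 7.30 (2)]{BS78}.

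Given that $\graph(\cP_t)$ is analytic by assumption, a second application of \cite[Proposition 7.47]{BS78} to the upper semianalytic $\pi_t(\omega,P)$ establishes that $\psi(\omega)=\sup_{P\in\cP_t(\omega)}\pi_t(\omega,P)$ is upper semianalytic. The existence of the $\varepsilon$-optimal universally measurable kernel $P^\varepsilon_t$ is then a direct invocation of the Jankov--von Neumann selection theorem \cite[Proposition 7.50]{BS78} applied to the analytic graph of $\cP_t$ together with $\pi_t$; the bifurcation between $\{\psi<\infty\}$ and $\{\psi=\infty\}$ is handled in the standard way by replacing the target value with $\min(\psi(\omega)-\varepsilon,\,1/\varepsilon)$.

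The main technical hurdle is identifying the correct analytic domain for the inner sup: it must depend jointly on both $\omega$ and $P$, not just on $\omega$ as in Proposition \ref{prop:USA}. The function $\phi$ of the preceding proof does the heavy lifting, turning $\{P'\ll P\}$ into the preimage of a Borel set, so that $\tilde D$ is analytic by intersection rather than by projection. Once this joint analytic structure is in place, the two successive projections — first along $P'$, then along $P$ — proceed by routine application of \cite[Proposition 7.47]{BS78}, and the selection step closes the argument.
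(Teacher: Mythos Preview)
Your proposal is correct and follows essentially the same route as the paper: first show $\pi_t$ is upper semianalytic in $(\omega,P)$ by rerunning the minimax and $f_n$ argument of Proposition~\ref{prop:USA} with $P$ carried as an extra variable, then project twice via \cite[Proposition~7.47]{BS78} and select via \cite[Proposition~7.50]{BS78}. The paper's proof is simply a three-line sketch of exactly this; your version spells out the one new point, namely that the inner domain $\tilde D$ is Borel (no projection needed since $P$ is now a coordinate), which is a helpful clarification but not a different idea.
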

\begin{proof}
	As in the proof of Proposition \ref{prop:USA}, the function $f_n$ in \eqref{eq:USA} is upper semianalytic for every $n\in\N$.
	Note that $\pi_t(\omega,P)=\inf_{n\in\N}\sup_{P'\ll P} f_n(\omega,P)$ so that, $\pi_t$ is again upper semianalytic.
	Using the fact that $\graph(\cP_t)$ is analytic we deduce that $\psi$ is upper semianalytic and the existence of a universally measurable $\varepsilon$-optimizer from \cite[Proposition 7.50]{BS78}.
\end{proof}

Given the above measurability properties we can now focus on the pointwise property 1.\ above.
\begin{proposition}\label{prop:supsuper}
	Assume $\NA$ and $X\in \Linf$.
	There exists a $\cP$-polar set $N$ such that $\Pi_t(\omega)=\sup_{P\in\cP_t(\omega)}\pi_t(\omega,P)$, for every $t\in I$ and $\omega\in N^c$.
\end{proposition}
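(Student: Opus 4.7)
The plan is to proceed by backward induction on $t\in I$. The base case $t=T$ is immediate since $\Pi_T=X$ and $\pi_T(\omega,P)=X(\omega)$ for every $P$. Assume the identity holds at $t+1$ outside a $\cP$-polar set $N_{t+1}$.

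The inequality $\Pi_t(\omega)\ge \sup_{P\in\cP_t(\omega)}\pi_t(\omega,P)$ is immediate: any pair $(x,H)\in\R\times\R^d$ realizing the $\cP_t(\omega)$-q.s.\ superhedging of $\Pi_{t+1}(\omega;\cdot)$ also realizes $P$-a.s.\ superhedging for each $P\in\cP_t(\omega)$, so $x\ge \sup_P\pi_t(\omega,P)$.

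For the reverse inequality, I would start from the minimax representation established in the proof of Proposition \ref{prop:USA},
\[
\Pi_t(\omega)=\inf_n\sup_{P\in\tilde{\cP}_t(\omega)}\inf_{H\in B_n(0)}E_P[\Pi_{t+1}(\omega;\cdot)-H\cdot\Delta S_{t+1}(\omega;\cdot)].
\]
Applying Remark \ref{rmk:rewrite} together with the same Sion-type minimax argument to $\pi_t(\omega,P_0)$ yields, for each $P_0\in\cP_t(\omega)$,
\[
\pi_t(\omega,P_0)=\inf_n\sup_{P'\in\tilde{\cP}_t(\omega,P_0)}\inf_{H\in B_n(0)}E_{P'}[\Pi_{t+1}(\omega;\cdot)-H\cdot\Delta S_{t+1}(\omega;\cdot)],
\]
where $\tilde{\cP}_t(\omega,P_0):=\{P'\ll P_0: E_{P'}|\Delta S_{t+1}(\omega;\cdot)|<\infty\}$. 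Setting $f_n(P_0)$ equal to the outer $\sup$--$\inf$ expression above and using $\tilde{\cP}_t(\omega)=\bigcup_{P_0\in\cP_t(\omega)}\tilde{\cP}_t(\omega,P_0)$, the claim reduces to the exchange
\[
\inf_n\sup_{P_0\in\cP_t(\omega)}f_n(P_0)\ \le\ \sup_{P_0\in\cP_t(\omega)}\inf_n f_n(P_0).
\]

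The main obstacle is precisely this last exchange, which is nontrivial since $\cP_t(\omega)$ carries no compactness and $f_n$ has no evident concavity in $P_0$. I would address it by combining Lemma \ref{lem:Palmost}, which supplies a universally measurable $\varepsilon$-optimizer $P^\varepsilon$ of $P_0\mapsto\pi_t(\omega,P_0)$, with the fact that under $\NA$ and for bounded $\Pi_{t+1}$ the near-optimal one-step strategies can be chosen in a fixed bounded ball $B_N$ independent of $P_0$ (the $\cP_t(\omega)$-q.s.\ degeneracy subspace of $\Delta S_{t+1}(\omega;\cdot)$ is common to all $P_0\in\cP_t(\omega)$ under $\NA$). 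Truncating $\inf_n$ at $n=N$ uniformly then permits the exchange; equivalently, this step amounts to the one-step conditional FTAP $\Pi_t(\omega)=\sup_{Q\lll\cP_t(\omega),\,Q\text{ mart}}E_Q[\Pi_{t+1}(\omega;\cdot)]$, which combined with the classical $L^\infty$ duality $\pi_t(\omega,P_0)=\sup_{Q\ll P_0,\,Q\text{ mart}}E_Q[\Pi_{t+1}(\omega;\cdot)]$ and a supremum over $P_0\in\cP_t(\omega)$ closes the argument. The polar set $N$ is built as the union over $t\in I$ of $N_{t+1}$ together with the $\cP$-polar exceptional sets arising from Proposition \ref{prop:USA} and Lemma \ref{lem:Palmost}.
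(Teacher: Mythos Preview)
Your proposal identifies the correct difficulty---the exchange
\[
\inf_n\sup_{P_0\in\cP_t(\omega)}f_n(P_0)\ \le\ \sup_{P_0\in\cP_t(\omega)}\inf_n f_n(P_0)
\]
is precisely the non-trivial content of the proposition---but your resolution of it is a genuine gap. The claim that under $\NA$ the near-optimal one-step strategies for $\pi_t(\omega,P_0)$ can be confined to a ball $B_N$ \emph{independent of $P_0$} is not established and is in general false: for an individual $P_0\in\cP_t(\omega)$ the condition $NA(P_0)$ need not hold, and even after projecting onto the orthogonal of the $\cP_t(\omega)$-q.s.\ degeneracy subspace the optimal $H$ may blow up as the $P_0$-support of $\Delta S_{t+1}(\omega;\cdot)$ degenerates. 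Your alternative route---invoking the one-step conditional FTAP and classical $L^\infty$ duality---is the right intuition but, as written, simply restates the desired identity rather than proving it. Note also that the backward induction is unnecessary: both $\Pi_t$ and $\pi_t(\cdot,P)$ are defined as one-step problems with the \emph{same} target $\Pi_{t+1}$, so the statement is pointwise in $t$ and no inductive hypothesis is used.

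The paper's argument is quite different and avoids the $\inf/\sup$ exchange altogether. Fixing $\omega\in N^c$ (where $NA(\cP_t(\omega))$ holds by \cite[Lemma 4.6]{BN13}), it introduces the \emph{subhedging} price $\check{\Pi}_t(\omega)$ and splits into two cases. If $\check{\Pi}_t(\omega)<\Pi_t(\omega)$, then for any $x$ strictly between them the \emph{extended} one-period market $Y_1=[\Delta S_{t+1};\,x-\Pi_{t+1}]$ satisfies $NA(\cP_t(\omega))$; invoking \cite[Lemma 2.7]{BZ17} produces a single $P'\in\cP_t(\omega)$ for which $NA(P')$ still holds in the extended market, and this forces $\pi_t(\omega,P')\ge x$. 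Letting $x\uparrow\Pi_t(\omega)$ gives the inequality. If $\check{\Pi}_t(\omega)=\Pi_t(\omega)$, then $\Pi_{t+1}$ is $\cP_t(\omega)$-q.s.\ replicable and the same lemma yields $P'$ with $\pi_t(\omega,P')=\Pi_t(\omega)$. The key ingredient you are missing is this passage, via the extended-market trick and \cite{BZ17}, from the robust local $\NA$ to a \emph{single} prior $P'$ on which the superhedging price is nearly attained.
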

\begin{proof}
	The inequality $\ge$ is trivial so we only need to show the converse. From $\NA$, the local condition $NA(\cP_t(\omega))$ holds for any $\omega$ outside a polar set $N$ (see \cite[Lemma 4.6]{BN13}) and we can assume $\Pi_t(\omega)<\infty$ as $X$ is $\qs$ bounded.
	Fix $\omega\in N^c$ and $t\in I$.  
	If $\Pi_t(\omega)=-\infty$ the equality follows so we also assume $\Pi_t(\omega)>-\infty$.
	
	Without loss of generality we assume that the random vector $S_{t+1}$ is composed only of non-redundant asset. Let 
	$$\check{\Pi}_t(\omega)=\sup\{x\in\R\mid \exists H\in\R^d \text{ s.t. } x+H\cdot \Delta S_{t+1}(\omega;\cdot)\le \Pi_{t+1}(\omega;\cdot)\ \cP_t(\omega)\text{-}\qs\},$$ 
	for $\omega\in\Omega_t$, be the subhedging price. We can distinguish two cases: $\check{\Pi}_t(\omega)<\Pi_t(\omega)$ or  $\check{\Pi}_t(\omega)=\Pi_t(\omega)$.
	In the first case, for any $x\in (\check{\Pi}_t(\omega),\Pi_t(\omega))$ we have that the local no arbitrage condition holds for the extended one period market $Y_0=0$, $Y_1:=[\Delta S_{t+1}; -\Pi_{t+1}+x]$. Indeed, suppose $[H,h]\in\R^{d+1}$ satisfies $H\cdot \Delta S_{t+1}+ h(-\Pi_{t+1}+x)\ge 0$. If $h= 0$, $NA(\cP_t(\omega))$ and the non-redundancy imply $H=0$.
	If $h\neq 0$, dividing both sides by $h$ we see that $H/h$ is either a sub or a super-hedging strategy which is not possible by the choice of $x$. 
	Since $NA(\cP_t(\omega))$ holds for the one-period market, \cite[Lemma 2.7]{BZ17} implies that there exists $P'\in\cP_t(\omega)$ such that $NA(P')$ holds and for which the assets are still non redundant.
	For the same reason as above, a $P'$-a.s. superhedging strategy with initial price $x$ would be a $P$-a.s. arbitrage in the extended market, which is excluded. Thus, $\pi_t(\omega,P')\ge x$ and  $\sup_{P\in\cP_t(\omega)}\pi_t(\omega,P)\ge x$. By taking the supremum for $x\in (\check{\Pi}_t(\omega),\Pi_t(\omega))$, the desired inequality follows.
	
	Finally, when $\check{\Pi}_t(\omega)=\Pi_t(\omega)<\infty$, we have that  for some $H,\check{H}\in\R^{d}$,
	\[\Pi_t(\omega)+H\cdot\Delta S_{t+1}\ge\Pi_{t+1}\ge\check{\Pi}_{t+1}\ge\Pi_t(\omega)+\check{H}\cdot\Delta S_{t+1},\quad \cP_t(\omega)\text{-}\qs. \]
	By the local no arbitrage condition $H=\check{H}$ and $\Pi_{t+1}$ is $\cP_t(\omega)$-$\qs$ replicable.
	Similarly as before the local no arbitrage condition holds for the extended one period market $(Y_0,Y_1)$. Applying again \cite[Lemma 2.7]{BZ17} we deduce that there exists $P'\in\cP_t(\omega)$  such that 
	$\pi_t(\omega,P')=\Pi_t(\omega)$. This implies the desired inequality. 
\end{proof}

We can now conclude the proof of the sensitivity of $\cC$.

\begin{proof}[Proof of Theorem \ref{example:sensitive}]
	Let $X\in\tilde{\cC}$, namely, the $P$-a.s. superhedging price $\Pi^P(X)$ is non-positive, for any $P\in\cP$.
	For any $\varepsilon>0$, consider the probability measure
	\[P^\varepsilon:=P^\varepsilon_1\otimes\cdots \otimes P^\varepsilon_t\otimes\cdots \otimes P^\varepsilon_T,
	\]
	constructed via Fubini's Theorem from the universally measurable kernels $\{P_t^\varepsilon\}$ from Lemma \ref{lem:Palmost}.
	By construction $P^\varepsilon\in\cP$ and, by Proposition \ref{prop:supsuper}, $\Pi(X)\le \Pi^{P^\varepsilon}(X)+\varepsilon T$.
	As $X\in\tilde{\cC}$, $\Pi^{P^\varepsilon}(X)\le 0$ and since $\varepsilon$ is arbitrary, we conclude $\Pi(X)\le 0$.
\end{proof}
\begin{remark} The previous results reads as follows: fix $X\in \Linf$ and assume that for every $P\in\mathcal{P}$
	we find $H^P\in \Hcal$ such that $g\le (H^P\circ S)_T$, $P$-a.s where $g$ is a representative of $j_P(X)$. The strategy depends on $P$ but not on the representative $g\in j_P(X)$. The equality $\Ccal = \bigcap_{P\in\mathcal{P}} j_P^{-1}\circ j_P(\Ccal)$ guarantees that in this case there exists a strategy $H\in \Hcal$ which is independent on $P\in\mathcal{P}$ such that $g\le (H\circ S)_T$, $P$-a.s for any $P\in \mathcal{P}$, where $g$ is any representative of $X$. The use of the universal filtration is crucial and guarantees the right measurability framework for the proof of these results.
\end{remark}

\begin{remark}[Pointwise framework] Theorem \ref{example:sensitive} can be obtained in the pointiwise setup proposed by \cite{Pointwise} using the superhedging duality. Indeed once the superhedging duality is obtained we can automatically deduce that sensitivity of the cone $\Ccal$. Also in this case one needs to extend the natural filtration in an opportune way in order to obtain an aggregation result for superhedging strategies.
\end{remark}

\paragraph{Quasi-sure aggregation in continuous time.}

The second case is an example of non-dominated volatility uncertainty (see e.g. \cite{Denis2,STZ11a,BD18}) which we briefly outline. For the sake of exposition, we restrict our attention to \cite[Example 4.5]{STZ11a}.
We set $C([0,T])$ the space of continuous functions on $[0,T]$ taking values in $\R$. Let $P^0$ be the Wiener measure on $\Omega=\{\omega\in C([0,T])\mid \omega(0)=0\}$.  Let $B:=\{B_t\}_{t\in[0,T]}$ be the canonical process, i.e. $B_t(\omega)=\omega_t$, $0\leq t\leq T$. The process $B$ is a standard Brownian motion under $P^0$ with respect to the rough filtration $\mathbb{F}=\{\cF_t\}_{0\leq t\leq T}:=\{\sigma(B_s\mid 0\leq s\leq t)\}_{0\leq t\leq T}$ and $\mathbb F^+=(\cF_t^+)_{0\leq t\leq T}$ its right continuous version. Recall that from \cite{Ka95} the quadratic variation can be defined pathwise and is given by the $\mathbb{F}$ adapted process $(\langle B\rangle_t)_{t\in[0,T]}$.
Following \cite[Example 4.5]{STZ11a} we consider a class of piecewise constant diffusion coefficients $\Vcal$ defined by $a=\sum_{n=0}^{\infty}a_n1_{[\tau_n,\tau_{n+1})}$, where $\{\tau_n\}_{n\in\N}$ is any non-decreasing sequence of $\mathbb{F}$ stopping times, with $\tau_0=0$, $\tau_n\leq T$ and $a_n$ being a positive valued  $\cF_{\tau_n}$ measurable random variable.  
Let $\cP:=\{\mathcal{P}^a\}_{a\in\Vcal}$ be the family composed by the measures $P^{a}=P^0\circ (X^{a})^{-1}$, where $X^{a}$ is the unique strong solution of \[                                                                                                                                                                                             
dX_t=a_t(X)dB_t\quad P^0\text{-a.s.}                                                                                                                                                                            \]
The existence of a strong solution for such a class is proved in \cite[Appendix]{STZ11a}. In particular, we have $\langle B\rangle_t= \int_{0}^t a_u^2 du$ $P^{a}$-a.s. for every $t\in [0,T]$ (see (4.10) in \cite{STZ11a}).  
For any probability $P$ we set $\mathcal{N}_t^{P}=\{A\subset B\mid B\in \cF_t \text{ and } P(B)=0\}$ and introduce the enlarged filtration
\begin{eqnarray}
	% \mathbb{F}^T & \text{given by}& \CF^{N_T}_t  =  \bigcap_{\sigma\in\Vcal} \CF_t^+\vee \mathcal{N}_T^{P^a}, \label{enlarged:T}  
	\mathbb{F}^{\Vcal} & \text{given by}& \cF^{\Vcal}_t  =  \bigcap_{a\in\Vcal} \cF_t^+\vee \mathcal{N}_t^{P^{a}}. \label{enlarged:t} 
\end{eqnarray}
Recall that that any $P^{a}$ uniquely extends to $\cF^{\Vcal}_t$  for any $t\in[0,T]$ and the filtration is still right continuous (see \cite{STZ11a}).

\begin{example} First we provide an example where $\cC\neq\tildeC{}$ unless we choose, in the discrete time model, the right continuous version of $\mathbb{F}$. Whereas in continuous time the use of the right continuous filtration is customary in discrete time is not.
	Consider a one period model by choosing two deterministic stopping times $0=\tau_0<\tau_1=1$.  Suppose that $\Vcal=[\underline{\sigma},\overline{\sigma}]$ for some $\underline{\sigma}<\overline{\sigma}$ non-negative, meaning that any plausible density of the quadratic variation process is constant and bounded in a given interval. 
	The class of corresponding probabilities is denoted by $\cP:=\{P^a\}_{a\in[\underline{\sigma},\overline{\sigma}]}$.
	Let $X:=B_1\mathbf{1}_{\{\langle B\rangle_1=\hat{a}^2\}}$ for some $\hat{a}\in[\underline{\sigma},\overline{\sigma}]$.
	We consider first the raw filtration $\mathbb{F}$ which implies that $\cF_0$ is trivial.
	We can easily see that $X\in\tildeC{}$. Indeed, for any $a\in[\underline{\sigma},\overline{\sigma}]$ with $a\neq\hat{a}$ we have $X=0$ $P^{a}$-a.s.\ with $0\in\cC$.
	Moreover, $X=B_1$ $P^{\hat{a}}$-a.s.\ with $B_1\in\cC$ as it corresponds to the buy and hold strategy of one unit of risky asset.
	We deduce
	\[X\in\bigcap_{P\in\cP} j^{-1}_P\circ j_P(\Ccal)=\tildeC{}.
	\]
	On the other hand it is not possible to find a trading strategy $H\in\R$ such that $H B_1\ge X$ $\qs$.
	Indeed $H B_1$ should be $P^a$ non-negative for any $\sigma\neq \hat{a}$, nevertheless, the $P^a$ distribution of $B_1$ is equal to the $P^0$ distribution of $a B_1$ (see \cite[Section 8]{STZ11a}).
	This implies $X\notin \cC$ and consequently $\cC\neq\tildeC{}$.
\end{example}

\begin{remark}
	It is worth pointing out that if one considers the $\cP$-completion of the right-continuous version of $\cF_t$, the sets $\{\langle B\rangle_1=a^2 \}\in\cF^{\Vcal}_0$ for every $a\in[\underline{\sigma},\overline{\sigma}]$.
	This implies $X\in\cC$, if $\cC$ is defined with respect to the filtration $\mathbb{F}^\Vcal$. 
\end{remark}

We consider again the set of processes $\mathcal V$ defined by $a=\sum_{n=0}^{\infty}a_n1_{[\tau_n,\tau_{n+1})}$, where $\{\tau_n\}_{n\in\N}$ is a non-decreasing sequence of $\mathbb{F}$ stopping times, with $\tau_0=0$, $\tau_n\leq T$ and $a_n$ being a positive  $\cF_{\tau_n}$ measurable random variable.  
Admissible strategies on the underlying process $B$ are given by the set of stochastic processes 
$$\Hcal(\mathbb{F}^{\Vcal})=\left\{H \text{ is } \mathbb{F}^{\Vcal} \text{-adapted and } \int_0^T|H_t|d\langle B\rangle_t<+\infty\right\}.$$
Here we are assuming the existence of a safe numeraire asset which pays $1$ at any time and we recall that the value of a portfolio is given by $V_t=h_t+H_tB_t$ 
where $h_t$ (respectively $H_t$) is the number of shares at time $t$ on the riskless (respectively risky) asset. 
\\For any $H\in \Hcal(\mathbb{F}^{\Vcal})$ the assumption of \cite[Theorem 6.4]{STZ11a} are satisfied and there exists an $\mathbb{F}^{\Vcal}$-adapted process $M$ such that $M_t=\int_0^t H_u dB_u$, $P^a$-almost surely for all $a\in \Vcal$ and therefore the following is well defined
% \begin{equation}\label{K} \mathcal{K}(\mathbb{F}^{\Vcal}):=\left\{\int_0^T H_u dB_u\mid H\in \Hcal(\mathbb{F}^{\Vcal}) \text{ and } \exists\,c>0\,\forall\,t\in[0,T]:~\int_0^t H_u dB_u>-c\right\}, 
% \end{equation}
\[\mathcal{K}(\mathbb{F}^{\Vcal}):=\left\{\int_0^T H_u dB_u\mid H\in \Hcal(\mathbb{F}^{\Vcal}), \int_0^t H_u dB_u\text{ is a } P^a\text{-supermtg } \forall a \in\Vcal\right\},\]
and under the standard self-financing condition we have $dV_t=H_t dB_t$.
With a slight abuse of notation we identify any $k\in \mathcal{K}(\mathbb{F}^{\Vcal})$ with the equivalence class $[k]\in \Lzero$ it generates. In this way we can consider  $\mathcal{K}(\mathbb{F}^{\Vcal})$ as a subset of $\Lzero$.

As customary, we define
\[\Ccal(\mathbb{F}^{\Vcal}):=\{X\in \Linf \mid X\leq  k   \text{ for some } k\in \mathcal{K}(\mathbb{F}^{\Vcal}) \}\]
and in the following we omit the dependence on the filtration.
\begin{proposition}\label{prop:sensitive} Consider the measurable space $(C[0,T], \cF_T^{\Vcal})$ and the class $\cP:=\{P^a\}_{a\in\Vcal}$. Then, the cone $\cC$ is sensitive, i.e.,
	\[ \Ccal = \bigcap_{P\in\cP} j_P^{-1}(j_P(\Ccal)).\]
\end{proposition}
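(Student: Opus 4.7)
The inclusion $\Ccal\subseteq\bigcap_{P\in\cP}j_P^{-1}(j_P(\Ccal))$ is immediate: for every $X\in\Ccal$ and every $P\in\cP$ we have $j_P(X)\in j_P(\Ccal)$, hence $X\in j_P^{-1}(j_P(\Ccal))$. All the content lies in the converse inclusion, so the plan is to fix an arbitrary $X$ in the intersection and to construct a single $H\in\Hcal(\mathbb{F}^{\Vcal})$ whose integral dominates $X$ quasi-surely and remains a $P^a$-supermartingale for every $a\in\Vcal$.

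Unfolding the definition, for each $a\in\Vcal$ the hypothesis yields some $k^a\in\Ccal$ with $k^a=X$ $P^a$-a.s., and hence an integrand $H^a\in\Hcal(\mathbb{F}^{\Vcal})$ such that $M^a_t:=\int_0^t H^a_u\,dB_u$ is a $P^a$-supermartingale satisfying $M^a_T\ge X$ $P^a$-a.s. By \cite[Theorem 6.4]{STZ11a} each $M^a$ is already an $\mathbb{F}^{\Vcal}$-adapted process. The remaining task is therefore the aggregation of the family $\{H^a\}_{a\in\Vcal}$ into a single universal integrand $H$.

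The key structural fact is the pathwise identifiability of $a$ from the quadratic variation: since $\langle B\rangle_t=\int_0^t a_u^2\,du$ $P^a$-a.s., the sets $\Omega^a:=\{\omega\mid \langle B\rangle_t(\omega)=\int_0^t a_u^2(\omega)\,du\ \forall t\in[0,T]\}$ carry $P^a$, and for distinct $a\neq b$ the intersection $\Omega^a\cap\Omega^b$ is both $P^a$- and $P^b$-null. Because $\mathbb{F}^{\Vcal}$ contains every $P^a$-null set, such coincidences are harmless. The plan is to paste the $H^a$ along these supports, informally setting $H_t(\omega):=H^{\hat a(\omega)}_t(\omega)$ where $\hat a(\omega)$ is extracted from the pathwise derivative of $\langle B\rangle(\omega)$. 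On each $\Omega^a$ the running integral $\int_0^\cdot H_u\,dB_u$ then coincides $P^a$-a.s.\ with $M^a$, which simultaneously yields the $P^a$-supermartingale property and the super-replication inequality $\int_0^T H_u\,dB_u\ge X$ $P^a$-a.s.

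The main obstacle is making this pasting measurable: $\Vcal$ is uncountable, so no elementary decomposition of $\Omega$ is available and the construction must rely on a measurable-selection / aggregation result in the spirit of \cite[Sections~5--6]{STZ11a} applied to the whole family $(H^a)_{a\in\Vcal}$; precisely here the enlarged filtration $\mathbb{F}^{\Vcal}$ is used, since it is designed to support such quasi-sure aggregations. Once a measurable aggregator $H\in\Hcal(\mathbb{F}^{\Vcal})$ is produced, the $P^a$-a.s.\ identification with $M^a$ on $\Omega^a$ shows that $k:=\int_0^T H_u\,dB_u\in\cK(\mathbb{F}^{\Vcal})$ with $X\le k$ quasi-surely, so $X\in\Ccal$ and the proof is complete.
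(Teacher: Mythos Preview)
Your proposal has a genuine gap at the aggregation step. The family $\{H^a\}_{a\in\Vcal}$ you produce consists of \emph{arbitrary} superhedging strategies, one for each $a$; nothing forces them to be compatible where different measures $P^a,P^b$ overlap. The aggregation theorems of \cite{STZ11a} (their Sections~5--6) require precisely such a consistency condition: whenever $\theta^{a,b}>\tau$, the objects under $P^a$ and $P^b$ must agree up to $\tau$. For a canonical family like the pathwise stochastic integral of a \emph{fixed} integrand this holds, but for a family of arbitrarily selected superhedging integrands it need not. Your pasting map $\omega\mapsto H^{\hat a(\omega)}(\omega)$ therefore lacks the measurability you assert, and the sentence ``the construction must rely on a measurable-selection / aggregation result in the spirit of \cite[Sections~5--6]{STZ11a}'' is exactly the missing idea, not a step that can be invoked.

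The paper avoids this obstruction by going through the superhedging \emph{duality} of \cite[Theorem~7.1]{STZ11a} rather than aggregating the raw $H^a$. Concretely: set $\hat X:=X+\|X\|_\infty$, observe from $X\in\tildeC{}$ and the $P^a$-supermartingale property of $\int_0^\cdot H^a_u\,dB_u$ that
\[
V_0^{P^a}:=\mathop{\mathrm{ess\,sup}}_{\nu\in\Vcal_0^a} E_{P^\nu}[\hat X\mid\cF_0^{\Vcal}]\le\|X\|_\infty\quad P^a\text{-a.s.\ for every }a\in\Vcal,
\]
and then invoke the duality $\Pi(\hat X)=\sup_{a}\|V_0^{P^a}\|_\infty$ to conclude $\Pi(X)\le 0$. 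The same theorem then delivers a \emph{single} $D\in\Hcal(\mathbb{F}^{\Vcal})$ with $\int_0^T D_t\,dB_t\ge X$ $\cP$-q.s.\ and the required $P^a$-supermartingale property. The point is that the aggregation is performed once and for all inside the proof of the duality theorem, on the consistently defined value processes $V^{P^a}$, and not on ad hoc superhedges.
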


\begin{proof} 
	As usual we denote by $\tildeC{}$ the right hand side.
	Since the inclusion $\subset$ is trivial we only need to show the opposite. Fix $X\in \tildeC{}$ and, for later use, let $\hat{X}:=X+\|X\|_{\infty}$ which is $\qs$ non-negative.
	% The class of diffusions $\Vcal$ is separable in the sense of \cite[Definition 4.8]{STZ11a}.
	Let $\mathcal{T}$ be the class of $\mathbb{F}^{\Vcal}$-stopping times. For $a,\nu\in\Vcal$ we set 
	\begin{center}$\theta^{a,\nu}:=\inf\left\{t\geq 0\mid \int_0^ta_u^2 du\neq \int_0^t\nu_u^2 du \right\}$\end{center}
	and, for $\tau\in \mathcal{T}$ and $a \in \Vcal$,
	\[\Vcal^{a}_{\tau}:=\{\nu\in\Vcal\mid \theta^{a,\nu}>\tau \text{ or } \theta^{a,\nu}=\tau=T\}.\]
	For $a\in\Vcal$ arbitrary, but fixed, we can thus consider the family $\{E_{P^{\nu}}[\hat{X}|\cF_{\tau}^{\Vcal}]\mid\nu\in \Vcal^{a}_{\tau}\}$ which is composed by $P^a$ essentially bounded elements. Set
	\begin{equation}\label{eq:esssup}
		V_{\tau}^{P^a}:= \sup_{\nu\in \Vcal^{a}_{\tau}}E_{P^{\nu}}[\hat{X}|\cF_{\tau}^{\Vcal}],
	\end{equation}
	where the supremum is computed as a $P^a$-essential supremum. For every $a \in \Vcal$ the family $\{V_{\tau}^{P^a}\}_{\tau\in\mathcal{T}}$ is uniformly integrable since $|V_{\tau}^{P^a}|\leq 2 \normc{X}$ $P^a$-a.s. for every $\tau\in\mathcal{T}$. 
	Therefore the assumptions of \cite[Theorem 7.1]{STZ11a} are met and we deduce,
	\begin{equation}\label{eq:duality} 
		\Pi(\hat{X})=\sup_{a\in\Vcal}\normc{V_0^{P^a}},
	\end{equation}
	where $\Pi(X)=\inf\left\{x\in\R\mid X-x\in \Ccal\right\}$ denotes the superhedging functional and where $V_0^{P^a}$ is as in \eqref{eq:esssup} with $\tau=0$.
	Recall now that the definition of $X\in\tildeC{}$ reads as 
	\[
	j_{P^a}(X)\leq j_{P^a}(k^a)\ P^a\text{-a.s. for some } k^a\in\mathcal K(\mathbb F^\Vcal), \text{ for any }a\in\Vcal.
	\]
	Thus, $X\in \tildeC{}$  guarantees that $X\leq \int_0^T H_t^{a}dB_t$ $P^a$-a.s.\ for every $a\in\Vcal$ which implies $\hat{X}\leq \normc{X}+\int_0^T H_t^{a}dB_t$ $P^a$-a.s.\ for every $a\in\Vcal$. The process $\int_0^t H_u^{a} dB_u$ is a 
	% $P^a$-local martingale which is bounded from below and hence a 
	$P^a$-supermartingale. 
	% Recall from \eqref{support:sigma:t} the definition of $\Omega_t^{a}$: it is important to notice that the sets $\Omega_{t}^{a}\in \cF^{\Vcal}_0$ for every $t\in [0,t_1]$. 
	%Moreover $\Vcal^{a}_{0}=\{\nu\in\Vcal \mid \nu_t=a_0 \;\forall t\in[0,\tau_1)\}$. 
	Hence, we can conclude from $E_{P^a}[\int_0^T H_u^{a}dB_u|\cF_{0}^{\Vcal}]\leq 0$ for every $a\in\Vcal$ that
	\[0\leq V_0^{P^a}= \sup_{\nu\in \Vcal^{a}_0}E_{P^{\nu}}[\hat{X} \mid \cF_{0}^{\Vcal}] \leq \normc{X}+ \sup_{\nu\in \Vcal^{a}_0}E_{P^{\nu}}\left[\int_0^T H_u^{\nu}dB_u \mid \cF_{0}^{\Vcal}\right]\leq \normc{X}.\]
	By \eqref{eq:duality}, $\Pi(X+\normc{X})=\sup_{a\in\Vcal}\normc{V_0^{P^a}}\leq \normc{X}$, whence $\Pi(X)\leq 0$. 
	From \cite[Theorem 7.1]{STZ11a} and $\Pi(X+\normc{X})<\infty$ we know there exists a strategy $D\in \Hcal(\mathbb{F}^{\Vcal})$ such that 
	\[\Pi(X)+\normc{X}+\int_0^T D_tdB_t\geq \hat{X}\quad P^a\text{-a.s.},~a\in\Vcal,\]
	and $\int_0^t D_udB_u$ is a $P^a$-martingale for every $a\in\Vcal$. Finally
	\[\normc{X}+\int_0^T D_tdB_t\geq \rho_{\cC}(X)+\normc{X}+\int_0^T D_tdB_t\geq X+\normc{X}\quad P^a\text{-a.s.},~a\in\Vcal,\]
	and $X\leq \int_0^T D_tdB_t$, $P^a$-a.s.\ follows for every $a\in\Vcal$. 
\end{proof}

\section{Proofs of the main results}\label{sec:proofs}
Recall that $\normW{X}= \normc{X/W}$, for any $X\in\LW$ and $W\geq 1$.
Consider the set \[\cPW:=\{P\lll\cP\mid E_P[W]<+\infty\}.\] 
It is important to notice that $\cPW\approx \cP$. Indeed, for any $P\in \cP$, $W$ is integrable with respect to $\PW\sim P$ defined by $\frac{d\PW}{dP}= \frac{c}{W}$, where $c:=1/E_{P}[W^{-1}]$ is the normalizing constant.
\\For any $P\in\cPW$, we have $|E_P[X]|\leq E_P\left[\frac{|X|}{W}\cdot W\right]\leq \tilde{c} \normW{X}$ for $\tilde{c}=E_P[W]$, so that the linear functional $X\mapsto E_P[X]$ is continuous on $(\LW,\normW{\cdot})$ for any $P\in \cPW$. 
Let $\LPW\subset\LW^*$ be the span of the set of linear functional generated by $\cPW$ and $\LW^*$ be the topological dual of $\LW$.

We redefine the projection map $j_P$ in order to map $\LW$ on $\Linfp$ as
\[
\begin{array}{rccc}
\jp:&\LW& \to& \Linfp\\
& X&\mapsto&\left[\frac{X}{W}\right]_P
\end{array}
\]
The definition slightly differs from the one given in \eqref{projection:P}, but simple inspections show that this change does not affect the set $\tildeC{}$. In particular $$\tildeC{}=\bigcap_{P\in \cP} j_P^{-1}(j_P(\Ccal))=\bigcap_{P\in \cPW} j_P^{-1}(j_P(\Ccal))$$ 
and similarly for $\tildeC{\lambda}$. The map $j_P$  is easily shown to be continuous from $(\LW,\sigma(\LW,\LPW)$ to $(\Linfp,\sigma(\Linfp,L^1_P))$ if $P\lll \cP$.

\begin{lemma}\label{lem:cc} $\tildeC{}$, $\tildeC{\lambda}$ and $\widehat{\cC}=\cup_{\lambda\ge 0}\tildeC{\lambda}$ are monotone convex sets. In addition $\tildeC{}$ and $\widehat{\cC}$ are cones. 
\end{lemma}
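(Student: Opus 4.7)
The plan is to use the explicit representation of $\tildeC{\lambda}$ available from \eqref{def:Ctilde}, namely,
\[
\tildeC{\lambda}=\{X\in\LW\mid \forall\,P\in\cP,\ \exists\,k^P\in\KM\text{ s.t. }X\le k^P\ P\text{-a.s.}\},
\]
and the analogous description of $\tildeC{}$ with $\cK$ in place of $\KM$. As a preliminary step, I would record that $\cC$ and $\CM$ are themselves convex and monotone subsets of $\LW$, with $\cC$ a cone: $\KM$ is convex since $\alpha(-\lambda W)+(1-\alpha)(-\lambda W)=-\lambda W$ and $\cK$ is a convex cone by assumption, while monotonicity of the $\le k$-lower-set definition is automatic.

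For monotonicity of $\tildeC{\lambda}$, given $Y\le X$ with $X\in\tildeC{\lambda}$ and $Y\in\LW$, I would fix $P\in\cP$, pick $Z^P\in\CM$ with $j_P(Z^P)=j_P(X)$, and introduce the cutoff $V^P:=Y\wedge Z^P$. I expect this to do the job: $V^P\le Z^P\le k^P$ quasi-surely, so $V^P$ satisfies the $\qs$ upper bound required to belong to $\CM$; the bound $|V^P|\le|Y|+|Z^P|$ keeps $V^P$ in $\LW$ (both $Y$ and $Z^P$ are in $\LW$); and on a $P$-full measure set $V^P=Y$, because $Y\le X=Z^P$ $P$-a.s. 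Consequently $j_P(Y)=j_P(V^P)\in j_P(\CM)$, giving $Y\in\tildeC{\lambda}$.

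For convexity, if $X_1,X_2\in\tildeC{\lambda}$ and $\alpha\in[0,1]$, I would choose $Z_i^P\in\CM$ with $Z_i^P=X_i$ $P$-a.s.\ and observe that $\alpha Z_1^P+(1-\alpha)Z_2^P\in\CM$ (by convexity of $\CM$) represents $\alpha X_1+(1-\alpha)X_2$ modulo $P$. The same argument gives convexity and monotonicity of $\tildeC{}$, and the cone property of $\tildeC{}$ follows because $\cK$ is a cone. For $\widehat{\cC}$, I would exploit the monotonicity of the family $\lambda\mapsto\tildeC{\lambda}$: any two elements $X_1\in\tildeC{\lambda_1}$, $X_2\in\tildeC{\lambda_2}$ both lie in $\tildeC{\max(\lambda_1,\lambda_2)}$, so convexity and monotonicity propagate to the union; the cone property of $\widehat{\cC}$ follows from the inclusion $\alpha\KM\subset\cK_{\alpha\lambda}$ for $\alpha\ge 0$, which itself uses that $\cK$ is a cone.

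The only subtle step is monotonicity, where the hypothesis $Y\le X$ is quasi-sure but the representative $Z^P$ only agrees with $X$ $P$-almost surely, so one cannot simply declare $Y\in\CM$. The cutoff $V^P:=Y\wedge Z^P$ is engineered to simultaneously preserve the $\qs$-upper bound by $k^P$ and the $P$-a.s.\ identification with $Y$, and this is the one construction I expect to require care; everything else reduces to a direct verification using the inherited convexity/cone structure of $\cK$ and $\KM$.
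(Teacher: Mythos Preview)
Your proof is correct and follows essentially the same approach as the paper: for monotonicity, both arguments construct an element of $\CM$ that agrees $P$-a.s.\ with $Y$ while preserving the $\qs$ upper bound by $k^P$. The paper uses the splice $Y^P:=Y\,1_{\{X=X^P\}}+X^P\,1_{\{X\neq X^P\}}$ in place of your cutoff $V^P:=Y\wedge Z^P$, but the underlying idea is identical and the remaining claims (convexity, cone property, the union $\widehat{\cC}$) are handled just as you describe.
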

\begin{proof}
	We only show the monotonicity of $\tildeC{}$, the other properties can be proven similarly.
	Suppose $Y\le X\ \qs$ with $X\in\tildeC{}$. By definition of $\tildeC{}$, for any $P\in\cP$, there exists $X^P\in\cC$ such that $X=X^P$ $P$- a.s. Take $Y^P=Y1_{\{X=X^P\}}+X^P1_{\{X\neq X^P\}}$ and observe that $Y^P\le X^P\ \qs$ From the monotonicity of $\cC$ we deduce $Y^P\in\cC$. Moreover, from $Y=Y^P$ $P$- a.s. and from $P\in\cP$ being arbitrary, $Y\in\tildeC{}$. 
\end{proof}

\subsection{Closure properties}\label{weak-closure}

In this subsection we consider   the sets $\cC,\CM$ defined by equations \eqref{cone:abstract} and \eqref{cone:abstract:M} respectively, and we will assume in Proposition \ref{closed:sensitive}, Lemma \ref{lem:p_closure} that $\cC$ is Fatou closed, which implies that $\CM$ is also Fatou closed.

\begin{proposition}\label{closed:sensitive}
	For every $P\ll \cP$ we have $\jp(\CM)$ is $\sigma(\Linfp,L^1_P)$-closed and therefore the set  $\bigcap_{P\in\cP'}
	j^{-1}_P(\jp(\CM))$ is  $\sigma(\LW,\lin(\cPW))$-closed for any $\cP'\subset \{P\lll \cP\}$. 
	% In particular, $\tildeC{\lambda}$ is also $\sigma(\LW,ca(\cP))$-closed.
\end{proposition}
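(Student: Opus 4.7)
The second assertion follows from the first: since the map $\jp$ is continuous from $(\LW,\sigma(\LW,\lin(\cPW)))$ to $(\Linfp,\sigma(\Linfp,L^1_P))$ (as recorded immediately before the proposition), $\sigma(\Linfp,L^1_P)$-closure of $\jp(\CM)$ passes to $\sigma(\LW,\lin(\cPW))$-closure of each preimage $\jp^{-1}(\jp(\CM))$, and arbitrary intersections preserve closedness. So the core task is the first assertion, for which the plan is to invoke the Krein--Smulian theorem: it suffices to verify that $\jp(\CM)\cap\{f\in\Linfp:\|f\|_{L^\infty_P}\le r\}$ is $\sigma(\Linfp,L^1_P)$-closed for every $r>0$. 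Separability of $\Omega$ yields separability of $L^1_P$, so the weak-$*$ topology on this norm ball is metrizable and sequential closure is enough.

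Fix $r>0$ and take a sequence $f_n\in\jp(\CM)$ with $\|f_n\|_{L^\infty_P}\le r$ converging to $f$ in $\sigma(\Linfp,L^1_P)$. Choose representatives $X_n\in\CM$ with $\jp(X_n)=f_n$ and introduce the truncation
\[
\hat X_n := (X_n\wedge rW)\vee\bigl(-(r+\lambda)W\bigr).
\]
Let $k_n\in\KM$ satisfy $X_n\le k_n$ together with $k_n\ge-\lambda W$. A case-by-case check gives $\hat X_n\le k_n$ $\qs$ (on $\{X_n>rW\}$ use $\hat X_n=rW<X_n\le k_n$; on $\{X_n<-(r+\lambda)W\}$ use $\hat X_n=-(r+\lambda)W\le-\lambda W\le k_n$), so $\hat X_n\in\CM$ with $\normW{\hat X_n}\le r+\lambda$; moreover $\hat X_n=X_n$ $P$-a.s.\ because $|X_n/W|\le r$ $P$-a.s., and hence $\jp(\hat X_n)=f_n$.

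The sequence $\{\hat X_n\}$ sits in the $\normW{\cdot}$-ball of radius $r+\lambda$ in $\LW$, which is $\sigma(\LW,\LW^*)$-compact by Banach--Alaoglu, a fortiori $\sigma(\LW,\lin(\cPW))$-compact. A Krein--Smulian-type argument tailored to the non-dominated lattice $\LW$ then shows that Fatou closure of $\CM$, together with its convexity and monotonicity, upgrades to $\sigma(\LW,\lin(\cPW))$-closure of $\CM$. Extract a convergent subnet $\hat X_{n_\alpha}\to Z$ in $\sigma(\LW,\lin(\cPW))$; then $Z\in\CM$. By continuity of $\jp$, $\jp(\hat X_{n_\alpha})\to\jp(Z)$ in $\sigma(\Linfp,L^1_P)$; since $\jp(\hat X_{n_\alpha})=f_{n_\alpha}\to f$ (subnet of a convergent net), the Hausdorff property forces $\jp(Z)=f$, whence $f\in\jp(\CM)$.

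The main obstacle is the passage from Fatou closure of $\CM$ to its $\sigma(\LW,\lin(\cPW))$-closure. In the non-dominated setting this is a non-trivial Krein--Smulian-type statement relating $\qs$-convergence of $\normW{\cdot}$-bounded sequences to the weak-$*$ topology generated by $\lin(\cPW)$, and typically rests on a dedicated lemma from the quasi-sure functional-analytic framework (as developed in \cite{MMS18}); the rest of the argument is then the routine combination of Krein--Smulian on $\Linfp$, truncation, Banach--Alaoglu compactness on $\LW$ and continuity of $\jp$.
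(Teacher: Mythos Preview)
Your argument has a genuine gap at the step you yourself flag as the ``main obstacle'': the passage from Fatou closure of $\CM$ to its $\sigma(\LW,\lin(\cPW))$-closure. In the non-dominated setting $\lin(\cPW)$ is typically a \emph{strict} subspace of $(\LW)^*$, so the Krein--Smulian theorem does not apply to the pair $(\LW,\lin(\cPW))$, and there is no general lemma converting Fatou closure into closure for this coarse weak topology. The reference to \cite{MMS18} does not rescue this: that paper analyses precisely when such an implication holds and exhibits that it can \emph{fail} under model uncertainty. Without this step your subnet limit $Z$ need not lie in $\CM$, and the argument collapses.

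The paper sidesteps this obstacle entirely by never attempting to prove weak closure of $\CM$ in $\LW$. Instead it works in the dominated space $L^\infty_P$, where Krein--Smulian is legitimately available, and uses a modification-on-null-sets trick to bridge back to the Fatou hypothesis. Concretely: one first shows $\jp(\CMK{K})$ is $\|\cdot\|_{L^1_P}$-closed. Given $Y_n=\jp(X_n)\to Y$ in $L^1_P$ (hence $P$-a.s.\ along a subsequence), let $F=\{X_n\to X\}$, a set with $P(F)=1$ but not necessarily $P'(F)=1$ for other $P'\in\cP$. Setting $\widetilde X_n:=X_n 1_F-KW\,1_{F^c}$ yields a $\normW{\cdot}$-bounded sequence in $\CMK{K}$ (by monotonicity) that now converges \emph{quasi-surely}, so Fatou closure applies directly and gives $\widetilde X\in\CMK{K}$ with $\jp(\widetilde X)=Y$. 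Convexity upgrades norm closure to $\sigma(L^1_P,L^\infty_P)$-closure; then the identity $\jp(\CM)\cap\{\|\cdot\|_{P,\infty}\le K\}=\jp(\CMK{K})$ together with Krein--Smulian in $L^\infty_P$ yields $\sigma(L^\infty_P,L^1_P)$-closure of $\jp(\CM)$. No weak compactness in $\LW$ and no closure of $\CM$ itself are ever needed.
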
 
The proof is based on the next two Lemmata.
For $\lambda,K>0$ define the set
\[
\CMK{K}:=\CM\cap \{X\in \LW \mid \normW{X} \leq 
K\},
\]

\begin{lemma}\label{lem:p_closure}
	For any probability $P\ll\cP$
	and for any $K\ge\lambda$ the set $\jp(\CMK{K})$ is
	$\sigma(\Linfp,\Lonep)$ - closed.
\end{lemma}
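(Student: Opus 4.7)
The plan is to reduce the problem to sequential weak* closedness on a bounded set, extract a $P$-a.s.\ convergent sequence of convex combinations via a Komlos-type argument, and then lift this back to a quasi-sure statement by passing to a monotone envelope so that the Fatou-closure of $\CM$ can be applied.

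First, note that $X\in\CMK{K}$ forces $|X/W|\le K$ quasi-surely and hence $P$-a.s., so $\jp(\CMK{K})$ is contained in the closed ball of radius $K$ of $\Linfp$. By the Krein-Smulian theorem, a convex set in $\Linfp$ is $\sigma(\Linfp,\Lonep)$-closed iff its intersection with every closed ball is closed; in our case only the ball of radius $K$ is relevant. Since $\cF$ is the Borel $\sigma$-algebra of a separable metric space, $\Lonep$ is separable and this ball is weak*-metrizable, so it is enough to verify sequential closedness.

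Pick a sequence $\{X_n\}\subset\CMK{K}$ with $\jp(X_n)\to Y$ in $\sigma(\Linfp,\Lonep)$. The uniform bound $|X_n/W|\le K$ $P$-a.s.\ places $\{X_n/W\}$ in a bounded subset of $\Lonep$, so Komlos's theorem (or equivalently, a Mazur-plus-subsequence argument) produces convex combinations $\tilde X_N=\sum_{k=N}^{M_N}\alpha_k^N X_k$ and $Y'\in\Lonep$ such that $\tilde X_N/W\to Y'$ $P$-a.s. Convexity of $\CMK{K}$ keeps $\tilde X_N$ inside the set; since $\jp(\tilde X_N)$ still converges weakly* to $Y$, bounded convergence identifies $Y'=Y$ $P$-a.s.

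The key step is now to upgrade $P$-a.s.\ convergence to a quasi-sure one. Set $Z_N:=\inf_{k\ge N}\tilde X_k$. The bound $|\tilde X_k/W|\le K$ $\qs$ forces $Z_N\in\LW$ with $\normW{Z_N}\le K$, and from $Z_N\le\tilde X_N\in\CMK{K}$ together with monotonicity of $\CM$ one obtains $Z_N\in\CMK{K}$. The sequence $\{Z_N\}$ is pointwise increasing, hence $\qs$-convergent to $\tilde Z:=\liminf_k\tilde X_k\in\LW$ with $\normW{\tilde Z}\le K$. Fatou-closure of $\CM$ then gives $\tilde Z\in\CMK{K}$, while $\tilde X_k/W\to Y$ $P$-a.s.\ yields $\tilde Z/W=Y$ $P$-a.s., i.e.\ $\jp(\tilde Z)=Y\in\jp(\CMK{K})$.

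The main obstacle is exactly this last transfer: Fatou-closure is a $\qs$ statement, whereas Komlos only delivers a $P$-a.s.\ limit. The envelope $Z_N=\inf_{k\ge N}\tilde X_k$ is the crucial device, using monotonicity of $\CM$ (closure under domination from below) to produce a pointwise monotone, hence quasi-surely convergent, approximating sequence that remains inside $\CMK{K}$; without monotonicity, the gap between the probabilistic Komlos limit and the pathwise world of $\LW$ could not be bridged.
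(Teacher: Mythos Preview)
Your argument is correct. Both your proof and the paper's hinge on the same two structural facts---monotonicity of $\CM$ and its Fatou-closure---but the paths differ in how the gap between $P$-a.s.\ and $\qs$ convergence is bridged.

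The paper does not attack the weak*-topology directly. Instead it embeds $\jp(\CMK{K})$ into $\Lonep$ and shows $\|\cdot\|_{\Lonep}$-closedness: from $Y_n\to Y$ in $L^1$ it extracts a $P$-a.s.\ convergent subsequence, picks preimages $X_n\in\CMK{K}$, and then uses the \emph{indicator trick} $\widetilde X_n:=X_n 1_F-KW\,1_{F^c}$ on the convergence set $F=\{X_n\to X\}$ (with $P(F)=1$). This modified sequence lies in $\CMK{K}$ by monotonicity and converges \emph{pointwise} on all of $\Omega$, so Fatou-closure applies directly. Norm-closedness plus convexity then gives $\sigma(\Lonep,\Linfp)$-closedness, and continuity of the inclusion $i:(\Linfp,\sigma(\Linfp,\Lonep))\to(\Lonep,\sigma(\Lonep,\Linfp))$ pulls this back to weak*-closedness.

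Your route replaces the indicator trick by the $\liminf$-envelope $Z_N=\inf_{k\ge N}\tilde X_k$, which is an equally valid way to manufacture a $\qs$-convergent monotone sequence from a merely $P$-a.s.\ convergent one. The trade-off: you need Koml\'os (or Mazur) and the metrizability of bounded weak*-sets via separability of $\Lonep$, whereas the paper sidesteps all of this by working in the $L^1$-norm from the start. Note also that your appeal to Krein--Smulian is harmless but unnecessary: since $\jp(\CMK{K})$ already sits inside the weak*-closed, metrizable ball $B_K$, weak*-closedness is equivalent to sequential weak*-closedness without further reduction. The paper's argument is slightly leaner and does not rely on separability of $\Lonep$; yours is perhaps more in the spirit of direct functional-analytic verification.
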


\begin{proof}
	Consider the continuous inclusion $$i: (\Linfp,\sigma(\Linfp,\Lonep))\to (\Lonep,\sigma(\Lonep,\Linfp)).$$ In a
	first step we show that $C(P):=i\circ \jp(\CMK{K})$ is
	closed in $\Lonep$ endowed with the usual norm $\|\cdot\|_{\Lonep}:=E_P[|\cdot|]$. To this end
	let $(Y_n)_{n\in \N}\subset C(P)$ and $Y\in \Lonep$ such that
	$\|Y_n-Y\|_{\Lonep}\to 0$, and without loss of generality we may also
	assume that $Y_n\to Y$ $P$-a.s. (by passing to a subsequence).
	Note that $|Y|$ is necessarily $P$-a.s. bounded by $KW$.
	Choose an arbitrary $X_n\in \CMK{K}$ such that $Y_n=\jp(X_n)$ for all $n\in \N$
	and an arbitrary $X\in \LW$ such that $Y=\jp(X)$. Consider the set 
	$$F=\{\omega\in \Omega\mid X_n(\omega)\rightarrow X(\omega)\}$$
	which satisfies $P(F)=1$. 
	Define $\widetilde{X}_n:= X_n 1_F -KW1_{F^c}\in \CMK{K}$ for $n\in\N$.
	By monotonicity of $\CM$, $\widetilde{X}_n$ for all
	$n\in \N$, and $ \widetilde{X}_n\rightarrow X 1_F
	-KW1_{F^c}=:\widetilde X$.
	Since $\CM$ is Fatou closed, the same holds for $\CMK{K}$. As a consequence, $\widetilde X
	\in \CMK{K}$. From $P(F)=1$ and the arbitrary choice of the representatives $X_n$ and $X$, we have
	$Y=\jp(X)=\jp(\widetilde X)\in C(P)$. Hence, $C(P):=i\circ
	\jp(\CMK{K})$ is $\|\cdot\|_{\Lonep}$-closed in $\Lonep$. As $C(P)$ is
	convex it then follows that $C(P)$ is
	$\sigma(\Lonep,\Linfp)$-closed and therefore $\jp(\CMK{K})$
	is $\sigma(\Linfp,\Lonep)$-closed by continuity of $i$.
\end{proof}

\begin{lemma}\label{lem:inclusion}
	For any probability $P\ll \cP$ we have the following representation
	$$\jp(\CM) = \bigcup_{K\ge\lambda} \jp(\CMK{K})$$
\end{lemma}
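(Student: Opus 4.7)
This lemma is essentially a bookkeeping step used to express $\jp(\CM)$ as an increasing union of the sets handled in Lemma \ref{lem:p_closure}; neither direction requires anything beyond unwinding definitions.

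The inclusion $\supseteq$ is immediate: for every $K\ge\lambda$ we have $\CMK{K}\subseteq \CM$, so $\jp(\CMK{K})\subseteq \jp(\CM)$ and the right-hand side is contained in the left.

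For the reverse inclusion $\subseteq$, I would take an arbitrary $Y\in \jp(\CM)$ and pick a representative $X\in \CM$ with $\jp(X)=Y$. By the definition of $\CM$ in \eqref{cone:abstract:M}, the set $\CM$ is a subset of $\LW$, hence $\normW{X}=\normc{X/W}<\infty$. Setting
\[
K:=\max\{\lambda,\normW{X}\},
\]
we get $K\ge\lambda$ and $X\in \CMK{K}$ by the very definition of $\CMK{K}$. Therefore $Y=\jp(X)\in\jp(\CMK{K})\subseteq \bigcup_{K\ge\lambda}\jp(\CMK{K})$, completing the proof.

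The only subtlety worth flagging is that one must work with a \emph{specific} representative $X$ of the equivalence class $Y$: had one started from an arbitrary preimage in $\Lzero$ the $\normW$ norm might be infinite, but since we are free to choose any $X\in\CM$ with $\jp(X)=Y$ and $\CM\subseteq \LW$ by construction, this is automatic. There is no real obstacle here; the content of the closure argument lies entirely in Lemma \ref{lem:p_closure}, and the role of Lemma \ref{lem:inclusion} is only to justify writing $\jp(\CM)$ as a union of $\normW$-bounded pieces so that the ensuing convexity/Krein--\v{S}mulian argument in Proposition \ref{closed:sensitive} can be applied.
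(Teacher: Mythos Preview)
Your proof is correct and follows exactly the same approach as the paper: the inclusion $\supseteq$ is trivial from $\CMK{K}\subseteq\CM$, and for $\subseteq$ one picks any preimage $X\in\CM$ and uses its finite $\normW{\cdot}$-norm to place it in some $\CMK{K}$. Your choice $K=\max\{\lambda,\normW{X}\}$ is in fact slightly cleaner than the paper's $\bar K=\normW{X}$, which tacitly relies on the monotonicity $\CMK{\bar K}\subseteq\CMK{\lambda}$ when $\normW{X}<\lambda$.
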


\begin{proof} Notice that by definition $\jp(\CM) \supset
	\jp(\CMK{K})$ for any $K\ge\lambda$ and hence $\jp(\CM) \supset
	\bigcup_{K\ge\lambda} \jp(\CMK{K})$. For the converse inclusion consider $Y\in \jp(\CM)$: there
	exists $X\in \CM$ such that $\jp(X)=Y$. Let $\bar
	K=\normW{X}$ then $Y\in \jp(\CMK{\bar K})$.
\end{proof}

\begin{proof}[Proof of Proposition \ref{closed:sensitive}]
	We first show that, for any $K\ge\lambda$,
	\begin{equation}\label{eq:KreinSmulian}
		\jp(\CM) \cap \{Y\in \Linfp\mid
		\|Y\|_{P,\infty}\leq K\} = \jp(\CMK{K})
	\end{equation}
	The inclusion $\supset$ is clear from Lemma \ref{lem:inclusion}.
	To show the equality, let
	$Y\in \jp(\CM)$ with $\|Y\|_{P,\infty}\leq K$. There
	exists $X\in \CM$ with $\jp(X)=Y$. Let $k\in\KM$ such
	that $X\leq k$ and notice that $(-KW) \vee X \wedge KW \leq k$  and
	$\jp((-KW) \vee X \wedge KW)=Y$.
	
	From Lemma \ref{lem:p_closure} the sets in \eqref{eq:KreinSmulian} are $\sigma(\Linfp, \Lonep)$-closed for every $K\ge\lambda$. The Krein-Smulian Theorem implies
	that $\jp(\CM)$ is $\sigma(\Linfp, \Lonep)$-closed and
	therefore $\jp^{-1}\circ \jp(\CM)$ is $\sigma(\LW,
	\lin(\cPW))$-closed.
	The last assertion follows by the intersection of closed
	sets.
\end{proof}

\subsection{Proof of the FTAP}
In this section we prove Theorem \ref{FTAP} and its discrete-time version Theorem \ref{FTAP_discrete}.
\begin{definition}\label{rhoA} For any set $\Acal\subset \Lzero$ and $X\in \Lzero$, we define $$\rho_{\Acal}(X):=\inf\{x\in\R \mid X-x\in
	\Acal\}.$$ 
\end{definition}
Note that for a monotone set $\Acal$ with $0\in\Acal$, $\rho_{\Acal}(X)<\infty$ for any $X\in\Linf$.

\begin{lemma}\label{strict:positive} The following are equivalent:
	\begin{enumerate}[i)]
		
		\item $\sNFLVR$;
		
		\item $\rho_{\tildeC{}}(\xi)>0$ for any
		$\xi\in \Linfplus\setminus\{0\}$;
		
		\item $\rho_{\tildeC{}}(1_A)>0$ for any $A\in\Fcal\setminus \polar$.
	\end{enumerate}
\end{lemma}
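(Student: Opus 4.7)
The plan is to derive (i)$\Leftrightarrow$(ii)$\Leftrightarrow$(iii) from a single key characterization: for any $\xi\in\Linf$ one has $\rho_{\tildeC{}}(\xi)\le 0$ if and only if $\xi\in\cl(\tildeC{})$. To prepare for this I would first record the basic properties of $\rho_{\tildeC{}}$ that are inherited from Lemma \ref{lem:cc}, namely that $\tildeC{}$ is a monotone convex cone containing $0$: translation equivariance $\rho_{\tildeC{}}(X-c)=\rho_{\tildeC{}}(X)-c$ for $c\in\R$, monotonicity $\rho_{\tildeC{}}(X)\ge\rho_{\tildeC{}}(Y)$ whenever $X\ge Y$ (by comparing $\{x\mid X-x\in\tildeC{}\}$ and $\{x\mid Y-x\in\tildeC{}\}$ via downward closedness of $\tildeC{}$), positive homogeneity $\rho_{\tildeC{}}(\alpha X)=\alpha\rho_{\tildeC{}}(X)$ for $\alpha>0$ (from the cone property), and finiteness $\rho_{\tildeC{}}(X)\le\normc{X}$ for $X\in\Linf$ (since $X-\normc{X}\le 0\in\tildeC{}$ and $\tildeC{}$ is downward closed).

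For the characterization, the forward direction proceeds as follows: if $\rho_{\tildeC{}}(\xi)\le 0$ then for every $\varepsilon>0$ there is $x_\varepsilon<\varepsilon$ with $\xi-x_\varepsilon\in\tildeC{}$, and since $\xi-\varepsilon\le\xi-x_\varepsilon$, downward closedness of $\tildeC{}$ yields $\xi-\varepsilon\in\tildeC{}\cap\Linf$; letting $\varepsilon\downarrow 0$ with $\normc{(\xi-\varepsilon)-\xi}=\varepsilon$ gives $\xi\in\cl(\tildeC{})$. For the reverse direction I would choose $X_n\in\tildeC{}\cap\Linf$ with $\varepsilon_n:=\normc{X_n-\xi}\to 0$; then $\xi-\varepsilon_n\le X_n$ $\qs$ and monotonicity of $\tildeC{}$ give $\xi-\varepsilon_n\in\tildeC{}$, whence $\rho_{\tildeC{}}(\xi)\le\varepsilon_n$ for every $n$.

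The equivalences then drop out. $\sNFLVR$ says no $\xi\in\Linfplus\setminus\{0\}$ lies in $\cl(\tildeC{})$, which by the characterization is precisely (ii), giving (i)$\Leftrightarrow$(ii). The implication (ii)$\Rightarrow$(iii) is trivial since $1_A\in\Linfplus\setminus\{0\}$ whenever $A\in\Fcal\setminus\polar$. For (iii)$\Rightarrow$(ii), given $\xi\in\Linfplus\setminus\{0\}$ I would use the fact that $\xi$ is not $\qs$ zero to produce $\delta>0$ with $A:=\{\xi\ge\delta\}\notin\polar$ (writing $\{\xi>0\}=\bigcup_n\{\xi\ge 1/n\}$, which is non-polar, forces $\{\xi\ge 1/n\}\notin\polar$ for some $n$), and then $\xi\ge\delta\,1_A$ together with monotonicity and positive homogeneity gives $\rho_{\tildeC{}}(\xi)\ge\rho_{\tildeC{}}(\delta\,1_A)=\delta\rho_{\tildeC{}}(1_A)>0$. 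The only subtle step is the forward direction of the characterization, since the infimum defining $\rho_{\tildeC{}}(\xi)$ need not be attained; one sidesteps this by passing from a near-infimal witness $x_\varepsilon$ to the uniform shift $\varepsilon$ via the downward closedness of $\tildeC{}$.
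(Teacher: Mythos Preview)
Your proof is correct and follows essentially the same route as the paper. The paper compresses your key characterization into the single line $\cl(\tildeC{})=\{\rho_{\cl(\tildeC{})}\le 0\}=\{\rho_{\tildeC{}}\le 0\}$ (after noting $\rho_{\tildeC{}}=\rho_{\tildeC{}\cap\Linf}$ on $\Linf$), whereas you unpack both inclusions explicitly via the $\varepsilon$-shift argument; the (ii)$\Leftrightarrow$(iii) part is identical in both, using monotonicity and positive homogeneity of $\rho_{\tildeC{}}$ inherited from Lemma~\ref{lem:cc}.
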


\begin{proof}$i)\Leftrightarrow ii)$: from Lemma \ref{lem:cc}, $\tildeC{}$ is monotone and $\rho_{\tildeC{}}=\rho_{\tildeC{}\cap \Linf}$ on $\Linf$. Thus, $\rho_{\cl(\tildeC{})}=\rho_{\tildeC{}\cap \Linf}=\rho_{\tildeC{}}$ on $\Linf$. The rest follows from $\cl(\tildeC{})=\{\rho_{\cl(\tildeC{})}\le 0\}=\{\rho_{\tildeC{}}\le 0\}$.

	$ii) \Rightarrow iii)$: it follows from $1_A\in \Linfplus\setminus\{0\}$.

	$iii) \Rightarrow ii)$: for any $\xi\in \Linfplus\setminus\{0\}$ we can find $n\in\N$
	such that $A:=\{\xi>1/n\}\in\Fcal\setminus \polar$.
	From Lemma \ref{lem:cc},
	$\rho_{\tildeC{}}$ is monotone and positive homogeneous. We deduce,
	\[
	0<n^{-1}\rho_{\tildeC{}}(1_A)=\rho_{\tildeC{}}(n^{-1}1_{\{\xi>1/n\}})\leq
	\rho_{\tildeC{}}(\xi).
	\]
\end{proof}

\begin{lemma}\label{approximating:martingale} 
	Consider now the conditions:
	\begin{enumerate}[a)]
		
		\item $\rho_{\tildeC{}}(1_A)>0$ for any $A\in\Fcal\setminus \polar$;
		
		\item for $A\in\Fcal\setminus \polar$ we can find $\delta\in
		(0,1]$ and 
		$\cQ=\{\QN\}_{n\in\N}\subset \cPW$, such that $\cQ\ll\bar{P}$ for some $\bar{P}\in\cP$ and
		
		$$\QN(A)\geq \delta\quad\text{ and }\quad E_{\QN}(X)\leq \frac{1}{n} \quad \forall\,X\in\cC_{n}, $$
		for any $n\in\N$.
	\end{enumerate}
	Then $b)\Rightarrow a)$. If, in addition, $\cC$ is Fatou closed $a)\Rightarrow b)$.
\end{lemma}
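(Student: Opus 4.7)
The plan is to prove the two implications separately, with $b)\Rightarrow a)$ a direct contradiction argument using the approximating sequence and $a)\Rightarrow b)$ the substantive Hahn--Banach construction that invokes the Fatou closedness of $\cC$.

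For $b)\Rightarrow a)$: suppose for contradiction that $\rho_{\tildeC{}}(1_A)\le 0$ for some $A\in\Fcal\setminus\polar$. Since $\tildeC{}$ is monotone by Lemma \ref{lem:cc}, this forces $1_A-\varepsilon\in\tildeC{}$ for every $\varepsilon>0$. Evaluating the definition of $\tildeC{}$ at the $\bar P$ of condition $b)$ yields $X\in\cC$ with $X=1_A-\varepsilon$ $\bar P$-a.s., hence $Q_n$-a.s.\ since $\cQ\ll\bar P$. A truncation argument (using the cone structure of $\cK$, the fact that $0\in\cK$, and the monotonicity of $\cC$, together with the boundedness of $1_A-\varepsilon$) produces for $n$ large an element $Y_n\in\cC_n$ still equal to $1_A-\varepsilon$ $\bar P$-a.s. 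The separation of $\cC_n$ by $Q_n$ then gives
\[
Q_n(A)-\varepsilon=E_{Q_n}[Y_n]\le \frac{1}{n},
\]
and choosing $\varepsilon=\delta/2$ with $n>2/\delta$ contradicts $Q_n(A)\ge\delta$.

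For $a)\Rightarrow b)$: Fatou closedness together with Proposition \ref{closed:sensitive} gives that $\tildeC{n}$ is $\sigma(\LW,\lin(\cPW))$-closed for every $n$. Fix $A\in\Fcal\setminus\polar$, set $y_0:=\rho_{\tildeC{}}(1_A)>0$, and note that $\tildeC{n}\subset\tildeC{}$ implies $\rho_{\tildeC{n}}(1_A)\ge y_0$, so $1_A-y_0/2\notin\tildeC{n}$. The Hahn--Banach theorem in the pair $(\LW,\lin(\cPW))$ produces, for each $n$, a functional $\varphi_n\in\lin(\cPW)$ and $c_n\in\R$ with $\varphi_n(X)\le c_n<\varphi_n(1_A-y_0/2)$ for every $X\in\tildeC{n}$. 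Since $-\alpha Y\in\cC_n\subset\tildeC{n}$ for all $Y\in\Linfplus$ and $\alpha\ge 0$ (because $0\in\cK_n$), sending $\alpha\to\infty$ in $\varphi_n(-\alpha Y)\le c_n$ forces $\varphi_n\ge 0$ on $\Linfplus$, so $\varphi_n$ is represented by a finite positive measure $\mu_n$ which, by its membership in $\lin(\cPW)$, is a non-negative combination of priors in $\cPW$. Rescaling $\varphi_n$ so that the separation is actually performed between $1_A-y_0/2$ and a $\normW{\cdot}$-enlargement of $\tildeC{n}$ with radius calibrated to $1/n$ (permissible because the distance from $1_A-y_0/2$ to the closed set $\tildeC{n}$ is strictly positive) then arranges $c_n/\mu_n(\Omega)\le 1/n$, so that $Q_n:=\mu_n/\mu_n(\Omega)$ satisfies $E_{Q_n}[X]\le 1/n$ on $\cC_n$ together with $Q_n(A)\ge y_0/2=:\delta$ from the strict separation and positivity. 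Finally, a countable convex combination of the priors entering the $\mu_n$, with geometrically decaying weights, produces a single $\bar P\in\cPW$ that dominates every $Q_n$.

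The main obstacle is the quantitative normalization in $a)\Rightarrow b)$: because $\cC_n$ is only star-shaped at $0$ rather than a full cone, the Hahn--Banach constant $c_n$ is in general strictly positive, and one needs an absorption/scaling argument exploiting the strict distance from $1_A-y_0/2$ to $\tildeC{n}$ to simultaneously force $c_n/\mu_n(\Omega)\le 1/n$ and keep $Q_n(A)\ge\delta$. A secondary technicality is the construction of the single $\bar P$, which uses stability of $\cPW$ under countable positive convex combinations, and in the $b)\Rightarrow a)$ direction the truncation producing $Y_n\in\cC_n$ equal to $1_A-\varepsilon$ $\bar P$-a.s.\ must be executed with some care in the abstract setting.
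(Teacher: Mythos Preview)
Your $b)\Rightarrow a)$ argument is essentially the paper's. The ``truncation'' you allude to is simply that $\cC=\bigcup_{\lambda\ge 0}\CM$ with the union increasing, so any $X\in\cC$ with $X=1_A-\varepsilon$ $\bar P$-a.s.\ already lies in $\cC_n$ for all $n\ge\bar\lambda:=\normW{X}$; no further manipulation is needed.

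Your $a)\Rightarrow b)$ contains two genuine gaps.

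\textbf{The dominating $\bar P$.} You separate $1_A-y_0/2$ from $\tildeC{n}$ and obtain $Q_n$ as a positive normalized element of $\lin(\cPW)$, then propose to build $\bar P$ as a countable convex combination of the priors entering the $\mu_n$. This fails precisely because the setting is non-dominated: $\cPW=\{P\lll\cP\mid E_P[W]<\infty\}$ requires domination by a \emph{single} element of $\cP$, and a convex combination of measures each $\lll\cP$ is in general not $\lll\cP$ (take two mutually singular priors). So your $\bar P$ need not lie in $\cPW$, let alone in $\cP$. The paper avoids this by exploiting the very definition of $\tildeC{}$: since $1_A-\delta\notin\tildeC{}=\bigcap_{P\in\cP}j_P^{-1}(j_P(\cC))$, one can \emph{first} select a witness $\bar P\in\cP$ with $1_A-\delta\notin j_{\bar P}^{-1}(j_{\bar P}(\cC_n))$ for every $n$, and then separate from these $\bar P$-specific sets. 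Because $j_{\bar P}^{-1}(j_{\bar P}(\cC_n))$ contains $a1_B$ for every $\bar P$-null $B$ and every $a>0$, the resulting $Q_n$ is automatically $\ll\bar P$.

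\textbf{The $1/n$ bound.} Your ``enlargement of $\tildeC{n}$ by a $\normW{\cdot}$-ball of radius $1/n$'' does not produce $c_n/\mu_n(\Omega)\le 1/n$: after normalisation, testing with $0+\tfrac{1}{n}1_\Omega$ in the enlargement gives $c_n\ge 1/n$, the wrong inequality. The paper instead separates the \emph{compact segment} $\cA_n=\{\alpha(1_A-\delta):\alpha\in[1/n,1]\}$ from $j_{\bar P}^{-1}(j_{\bar P}(\cC_n))$; the cone property of $\cK$ gives $\alpha^{-1}\cC_n\subset\cC_{\lceil n/\alpha\rceil}$, so the whole segment stays outside, and the infimum over $\cA_n$ is $(Q_n(A)-\delta)/n\le 1/n$ once $Q_n$ is normalised to a probability. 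That single device delivers both $Q_n(A)>\delta$ and $\sup_{X\in\cC_n}E_{Q_n}[X]<1/n$ simultaneously.
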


\begin{proof}
	b) $\Rightarrow$ a). 
	Suppose that there exists $A\in\Fcal\setminus \polar$ such that $\rho_{\tildeC{}}(1_A)\le 0$.
	Let $\delta\in (0,1]$, $\cQ$ and $\bar{P}$ as in b).
	From $\rho_{\tildeC{}}(1_A)\le 0$, it follows $1_A-\delta/4\in\tildeC{}$. In particular, by definition of $\tildeC{}$, we have
	$1_A-\delta/4\in j_{\bar{P}}^{-1}(j_{\bar{P}}(\cC))$. Thus, there exists $X\in\cC$ such that $X=1_A-\delta/4$ $\bar{P}$-a.s. More precisely, since $\cC=\cup_{\lambda\ge 0}\CM$ and $\CM$ is an increasing collection of sets, there exists $\bar{\lambda}\ge 0$ such that $X\in\cC_{\lambda}$ for every $\lambda\ge\bar{\lambda}$. 
	Moreover, since $\{Q_n\}\ll \bar{P}$, $X=1_A-\delta/4$ $\QN$-a.s.\ for any $n\in\N$.
	Using b), we deduce
	\[
	E_{\QN}[1_A-\delta/4]=E_{\QN}[X]\le \frac{1}{n},\qquad n\ge \bar{\lambda}.
	\]
	For $n\ge \bar{\lambda}\vee 4\delta^{-1}$, we have $\QN(A)\le \delta/2$ which contradicts $\QN(A)\ge \delta$.

	a) $\Rightarrow$ b).   
	Let $A\in\Fcal\setminus \polar$
	and $0<\delta< \rho_{\tildeC{}}(1_A)$.
	From $\rho_{\tildeC{}}(1_A-\delta)>0$, we deduce that $1_A-\delta\notin \tildeC{}$.
	Therefore we can find $\bar{P}\in\cP$ such that $1_A-\delta\notin j_{\bar{P}}^{-1}\circ j_{\bar{P}}(\cC)$ and, in particular, $1_A-\delta\notin j_{\bar{P}}^{-1}\circ j_{\bar{P}}(\cC_{n})$ for any $n\in\N$.
	We note now that the same is true for $\alpha (1_A-\delta)$ with $\alpha\in (0,1)$.
	Indeed, 
	\[
	\alpha (1_A-\delta)\in j_{\bar{P}}^{-1}\circ j_{\bar{P}}(\cC_n)\Rightarrow 1_A-\delta \in j_{\bar{P}}^{-1}\circ j_{\bar{P}}(\cC_{\lceil n/{\alpha}\rceil})
	\]
	which would be a contradiction. 
	All these considerations hold true by an equivalent change of measure, thus, we may assume $\bar{P}\in\cPW$.
	Therefore for any $\alpha\in (0,1)$, and $n\in\N$ $\alpha
	(1_A-\delta)\notin j_{\bar{P}}^{-1}\circ j_{\bar{P}}(\cC_n)$, which is $\sigma(\LW,\LPW)$ closed by Proposition \ref{closed:sensitive}.
	\\Consider the $\sigma(\LW, \LPW)$-compact and convex set $\Acal_n=\{\alpha (1_A-\delta)\mid \alpha \in [1/n,1]\}$. From the previous observation $\Acal_n\cap
	j_{\bar{P}}^{-1}\circ j_{\bar{P}}(\cC_n)=\emptyset$.
	For any $n\in\N$, there exists $\mu_n\in \LPW$ such  that 
	\begin{equation}\label{eq:separator}
		\sup_{X\in j_{\bar{P}}^{-1}\circ j_{\bar{P}}(\cC_n)}\mu_n(X) <\frac{\mu_n(1_A)-\delta}{n}.
	\end{equation}
	We observe that $\mu_n$ is positive and $\mu_n(1_{\Omega})=1$: indeed suppose that there exists $\xi\in\LWplus$ such that $\mu_n(\xi)<0$. From $-\LWplus\subset \cC_n$, $-a\xi\in\cC_n$ for any $a>0$, from which $\lim_{a\to\infty}\mu_n(-a\xi)=\lim_{a\to\infty}-a\mu_n(\xi)=\infty$ contradicts \eqref{eq:separator}.
	Similarly, $a(1_\Omega-1)\in\cC_n$ for any $a\in\R$, which implies $\mu_n(1_\Omega)=1$.
	We deduce that $\mu_n$ is the linear functional induced by some $\QN\in\cPW$.
	Moreover, for any $n\in\N$, we have
	\begin{itemize}
		\item $\QN\ll\bar{P}$. Otherwise let $B\in \cF$ such that $\bar{P}(B)=0$ and $\QN(B)>0$. From $1_B=0$ $\bar{P}$-a.s. we have $a1_B\in j_{\bar{P}}^{-1}\circ j_{\bar{P}}(\cC_n)$  for every $a>0$ and $\sup_{a>0}E_{\QN}[a1_B]=+\infty$ contradicts  \eqref{eq:separator}.
		\item $\QN(A)\geq \delta$. It follows from $0\in\cC_n$, which implies that the supremum in \eqref{eq:separator} is non-negative.
		\item $\sup_{X\in
			\cC_{n}}E_{\QN}[X]\leq \sup_{X\in j_{\bar{P}}^{-1}\circ j_{\bar{P}}(\cC_n)}E_{\QN}[X] \le\frac{1}{n}$ follows again by \eqref{eq:separator}.
	\end{itemize}
	Since $\bar{P}$ is the same for every $n$, $\cQ=\{\QN\}_{n\in\N}\ll\bar{P}\in \cP$, which concludes the proof of b).
\end{proof}

\begin{proof}[proof of Theorem \ref{FTAP}]
	It follows from Lemma \ref{strict:positive} and Lemma \ref{approximating:martingale}.
\end{proof}

\begin{proof}[proof of Theorem \ref{FTAP_discrete}]
	Clearly $\sNFLVR$ implies $\NA$ and from Lemma \ref{closure} $\cC$ is Fatou closed  so that the conclusion of Lemma \ref{approximating:martingale} is an equivalence. The first statement follows as in the proof of Theorem \ref{FTAP}.
	
	If we now assume that $\tildeC{}=\cC$ we have that $\sNFLVR$ is equivalent to $\NFLVR$, which is further equivalent to $\NA$ since, from Lemma \ref{closure}, $\cC\cap\Linf$ is  $\normc{\cdot}$-closed.
	The implication ($\Rightarrow$) follows directly from the first part of the Theorem. For the converse implication, let $V_T(H,h):=(H\circ S)_T+h\cdot\Phi$.
	Suppose that, for some $(H,h)\in\Hcal$, $V_T(H,h)\ge 0\ \qs$ (and hence $(H,h)\in\HM$ for every $\lambda\ge 0$). If there exists $P\in\cP$ such that $P(\{V_T(H,h)>0\})>0$ then for some $a>0$ the set $A=\{V_T(H,h)\geq a\}\in\cF\setminus \polar$. By assumption, there exist $\delta>0$ and $\cQ\in\Qapp$ such that $\inf_{Q\in\cQ}Q(A)=\delta> 0$. Consider $k\in\N$ and define the strategy $(\hat{H}^k,\hat{h}^k):=\frac{k}{a}(H,h)$. Notice that since $0\le V_T(\hat{H}^k,\hat{h}^k)\wedge K\leq V_T(\hat{H}^k,\hat{h}^k)$ and  $\cC_n$ is monotone, $V_T(\hat{H}^k,\hat{h}^k)\wedge K$ belongs to $\cC_n$ for every $n\in\N$.
	From Lemma \ref{approximating:martingale} b), for an arbitrarily fixed $Q\in\cQ$,
	\[
	0\le \sup_{k\in\N} E_{Q}[V_T(\hat{H}^k,\hat{h}^k)\wedge K]\leq 1  
	\]
	for every $K>0$. By monotone convergence theorem $E_{Q}[V_T(\hat{H}^k,\hat{h}^k)]\leq 1$ for any $k\in\N$. Since $E_{Q}[1_AV_T(\hat{H}^k,\hat{h}^k)]=\frac{k}{a}E_{Q}\left[1_{\{V_T(H,h)\geq a\}}V_T(H,h)\right]\geq k \delta$ we have that $\sup_{k\in\N}E_{Q}[V_T(\hat{H}^k,\hat{h}^k)]=\infty$ which is a contradiction.
\end{proof}

%%%%%%%%%%%%%%%%%%%%%%%%%%%%%%%%%%%%%%%%%%%%%%%%%%%%%%%%%%%%%%%%%%%%%%%%%%%%%%%%%%%

\subsection{The FTAP for $\sNABR$}
When $\cC=\tildeC{}$, both sets are also equal to $\widehat{\cC}:= \bigcup_{\lambda\ge 0} \tildeC{\lambda}$.
In the general case we could define \emph{sensitive No Arbitrage with Bounded Risk} $\sNABR$ as $\widehat{\Ccal}\cap \Linfplus=\{0\}$ and the the following relations are satisfied:
\[
\sNFLVR  \Rightarrow \sNA \Rightarrow \sNABR \Rightarrow \NA 
\]

A similar characterization holds for this condition.

\begin{theorem}\label{FTAP1}
	Let $\cC$ defined in \eqref{cone:abstract} be Fatou closed. The following are equivalent:
	
	\begin{enumerate}
		
		\item $\sNABR$
		
		\item For every non polar set $A$ and $n\in\N$, there exists $0<\delta_{n}\le\frac{1}{n}$ and $\QN\in\cPW$ such that 
		\[
		\QN (A)>n\delta_{n}\quad\text{and}\quad E_{\QN}[X]<
		\delta_n\quad \forall\,X\in \tildeC{n},
		\]
		for every $n\in\N$.
	\end{enumerate}
	
\end{theorem}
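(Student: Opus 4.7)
The plan is to adapt the argument of Theorem \ref{FTAP} to the bounded-risk setting by working level by level with each $\tildeC{n}$ rather than with the full cone $\tildeC{}$. Both directions fit the same template: (2) $\Rightarrow$ (1) is a direct contradiction from the two inequalities in (2), while (1) $\Rightarrow$ (2) is obtained by a Hahn--Banach separation performed separately at each level $n$, in the spirit of Lemma \ref{approximating:martingale}.

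For (2) $\Rightarrow$ (1), I would assume by contradiction that there is $X \in \widehat{\cC} \cap \Linfplus$ with $X \ne 0$. Then $X \in \tildeC{m}$ for some $m$ and, since $\tildeC{\lambda}$ is monotone in $\lambda$, also $X \in \tildeC{n}$ for every $n \ge m$. Non-triviality of $X$ furnishes some $k$ for which $A := \{X > 1/k\}$ is non-polar, so $X \ge (1/k)\,1_A$ quasi-surely. Applying (2) to this $A$ for any $n \ge k \vee m$ would give $Q_n$ and $\delta_n$ with $E_{Q_n}[X] < \delta_n$ while simultaneously $E_{Q_n}[X] \ge Q_n(A)/k > n\delta_n/k \ge \delta_n$, a contradiction.

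For the harder direction (1) $\Rightarrow$ (2), fix a non-polar set $A$ and $n \in \N$. Under $\sNABR$, if $(1/n)\,1_A$ belonged to $\widehat{\cC}$ then $(1/n)\,1_A \in \widehat{\cC} \cap \Linfplus = \{0\}$, contradicting non-polarity of $A$; hence $(1/n)\,1_A \notin \tildeC{n}$. By Proposition \ref{closed:sensitive} the set $\tildeC{n}$ is $\sigma(\LW, \LPW)$-closed and by Lemma \ref{lem:cc} it is convex, so Hahn--Banach provides $\mu_n \in \LPW$ with
\[
\sup_{X \in \tildeC{n}} \mu_n(X) \;<\; \frac{1}{n}\,\mu_n(1_A).
\]
The same reasoning used in Lemma \ref{approximating:martingale}, exploiting $-\LWplus \subset \tildeC{n}$, shows $\mu_n$ is positive; being a nonzero positive element of $\LPW$, it is countably additive and $\lll \cP$ with $\mu_n(1_\Omega) > 0$. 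Setting $Q_n := \mu_n / \mu_n(1_\Omega) \in \cPW$ and dividing through yields
\[
s_n \;:=\; \sup_{X \in \tildeC{n}} E_{Q_n}[X] \;<\; \frac{1}{n}\,Q_n(A).
\]
Since $0 \in \tildeC{n}$, $s_n \ge 0$, hence $Q_n(A) > 0$; any choice $\delta_n \in (s_n, Q_n(A)/n)$ then satisfies $0 < \delta_n \le Q_n(A)/n \le 1/n$ together with both required inequalities in (2).

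The main obstacle will be the measure-theoretic handling of the separating functional $\mu_n$: positivity, countable additivity, and normalization to a probability in $\cPW$. This is precisely the machinery developed in Lemma \ref{approximating:martingale}, and it transfers here once each $\tildeC{n}$ is known to be $\sigma(\LW,\LPW)$-closed. Unlike in Theorem \ref{FTAP}, one cannot work with $\widehat{\cC}$ directly, because it is a countable union of weakly closed sets and in general not itself closed; this is exactly why the probabilities $Q_n$ and the thresholds $\delta_n$ in (2) are allowed to depend on $n$.
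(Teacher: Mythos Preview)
Your proposal is correct and follows essentially the same route as the paper: level-by-level Hahn--Banach separation of (a scaled copy of) $1_A$ from the $\sigma(\LW,\LPW)$-closed convex set $\tildeC{n}$, using Proposition \ref{closed:sensitive}, with positivity of the separator coming from $-\LWplus\subset\tildeC{n}$. The only cosmetic differences are that the paper first passes through the $\rho_{\tildeC{\lambda}}$-characterization of Lemma \ref{strict:positive1} and then separates the whole segment $\cA_n=\{\alpha 1_A:\alpha\in[1/n,1]\}$ from $\tildeC{n}$, whereas you bypass that lemma and separate the single point $(1/n)1_A$; both yield the same strict inequality $\sup_{X\in\tildeC{n}}E_{Q_n}[X]<Q_n(A)/n$, and your choice of $\delta_n$ in the gap reproduces the paper's $\delta_n$.
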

From the equivalent formulation, it is clear that the two notions $\sNFLVR$ and $\sNABR$ are very close.
One difference is that in Theorem \ref{FTAP}, the lower bound for $\QN(A)$ is uniform for the collection $\{\QN\}_{n\in\N}$.
Moreover, in Theorem \ref{FTAP} it also holds $\cQ\ll\bar{P}$ for some $\bar{P}\in\cP$.

\begin{lemma}\label{strict:positive1} Under the same assumptions of Theorem \ref{FTAP1} the following are equivalent:
	\begin{enumerate}[a)]
		
		\item $\bigcup_{\lambda\ge 0}\tildeC{\lambda}\cap \Linfplus=\{0\}$;
		
		\item $\rho_{\tildeC{\lambda}}(\xi)>0$ for any
		$\xi\in \Linfplus\setminus\{0\}$ and 
		$\lambda\ge 0$ ;
		
		\item $\rho_{\tildeC{\lambda}}(1_A)>0$ for any $A\in\Fcal\setminus \polar$ and 
		$\lambda\ge 0$.
	\end{enumerate}
\end{lemma}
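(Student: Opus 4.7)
The plan is to adapt the proof of Lemma \ref{strict:positive} to the setting of $\sNABR$, where the key new difficulties are that each $\tildeC{\lambda}$ is only convex (not a cone) and that condition a) involves the whole family $\{\tildeC{\lambda}\}_{\lambda\ge 0}$ rather than a single cone. Throughout I would use that, under the Fatou closure of $\cC$, each $\CM$ is also Fatou closed (as remarked before Proposition \ref{closed:sensitive}), so Proposition \ref{closed:sensitive} applies to every $\tildeC{\lambda}$.

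I would prove $a)\Leftrightarrow b)$ first. The direction $b)\Rightarrow a)$ is immediate by contraposition: any $\xi\in\widehat{\cC}\cap\Linfplus\setminus\{0\}$ belongs to some $\tildeC{\lambda}$, and taking $x=0$ in the defining infimum gives $\rho_{\tildeC{\lambda}}(\xi)\le 0$. For the converse $a)\Rightarrow b)$, also by contraposition: if $\rho_{\tildeC{\lambda}}(\xi)\le 0$ for some $\xi\in\Linfplus\setminus\{0\}$ and $\lambda\ge 0$, then for every $\varepsilon>0$ one can pick $x_\varepsilon\le \varepsilon$ with $\xi-x_\varepsilon\in\tildeC{\lambda}$, and monotonicity of $\tildeC{\lambda}$ (Lemma \ref{lem:cc}) yields $\xi-\varepsilon\in\tildeC{\lambda}$. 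Since $W\ge 1$ gives $\normW{\cdot}\le\normc{\cdot}$, we have $\xi-\varepsilon\to\xi$ in $\normW{\cdot}$. By Proposition \ref{closed:sensitive}, $\tildeC{\lambda}$ is $\sigma(\LW,\LPW)$-closed; being convex it is therefore also $\normW{\cdot}$-closed (a coarser Hausdorff topology has fewer closed sets, and for convex sets weak closure agrees with norm closure). Consequently $\xi\in\tildeC{\lambda}\subset\widehat{\cC}$, contradicting $a)$.

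For $b)\Leftrightarrow c)$, the implication $b)\Rightarrow c)$ is just the specialization $\xi=1_A$. For $c)\Rightarrow b)$, given $\xi\in\Linfplus\setminus\{0\}$ there exists $n\in\N$ such that $A:=\{\xi>1/n\}\in\Fcal\setminus\polar$, and $\xi\ge (1/n)1_A$. The monotonicity of $\tildeC{\lambda}$ translates immediately into monotonicity of $\rho_{\tildeC{\lambda}}$, so $\rho_{\tildeC{\lambda}}(\xi)\ge \rho_{\tildeC{\lambda}}((1/n)1_A)$. The crucial step is the scaling identity $\alpha\,\tildeC{\lambda}=\tildeC{\alpha\lambda}$ for every $\alpha>0$: since $\cK$ is a cone, $\alpha\KM=\cK_{\alpha\lambda}$, hence $\alpha\CM=\cC_{\alpha\lambda}$, and the sensitivity operation commutes with scalar multiplication. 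This identity yields
\[
\rho_{\tildeC{\lambda}}\bigl((1/n)1_A\bigr)=\tfrac{1}{n}\,\rho_{\tildeC{n\lambda}}(1_A)>0
\]
by the assumption $c)$ applied with the parameter $n\lambda$, giving $\rho_{\tildeC{\lambda}}(\xi)>0$.

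The main obstacle I anticipate is replacing the positive homogeneity of $\rho_{\tildeC{}}$ (available in Lemma \ref{strict:positive} because $\tildeC{}$ is a cone) by the parameter-shifting scaling identity $\alpha\tildeC{\lambda}=\tildeC{\alpha\lambda}$; this requires a careful but elementary unpacking of the definition of the sensitive closure of $\CM$. A secondary, routine point is passing from the weak closure of $\tildeC{\lambda}$ in Proposition \ref{closed:sensitive} to $\normW{\cdot}$-closure, which uses only convexity.
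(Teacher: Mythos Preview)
Your proof is correct and follows essentially the same route as the paper. Both arguments hinge on the observation that $\alpha\,\tildeC{\lambda}=\tildeC{\alpha\lambda}$ (which you state explicitly, while the paper derives it inline via a contradiction argument showing that $\tfrac{1}{n}1_A\in\tildeC{\lambda}$ forces $1_A\in\tildeC{n\lambda}$), together with the $\sigma(\LW,\LPW)$-closedness of $\tildeC{\lambda}$ from Proposition~\ref{closed:sensitive} and its monotonicity from Lemma~\ref{lem:cc}. One minor remark: in your $a)\Rightarrow b)$ step you only need that weak-closed implies norm-closed (since $\sigma(\LW,\LPW)$ is coarser than $\normW{\cdot}$), so convexity is not actually required there.
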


\begin{proof}
	For ease of notation, denote $\rho_{\lambda}=\rho_{\tildeC{\lambda}}$. 
	
	$a)\Leftrightarrow b)$: the proof follows from $\tildeC{\lambda}$ being $\sigma(\LW,\LPW)$ closed and monotone.

	$b) \Rightarrow c)$: it follows from $1_A\in \Linfplus\setminus\{0\}$.
	
	$c) \Rightarrow b)$: We first show that $\rho_{\lambda}(\frac{1}{n}1_A)> 0$ for any $n\in\N$ and $A\in\Fcal\setminus \polar$. \\ The inequality $\ge$ is clear by monotonicity.
	Suppose, by contradiction, there exists $\bar{n}$ such that $\rho_{\lambda}(\frac{1}{n}1_A)=0$ for every $n>\bar{n}$. 
	Since $\tildeC{\lambda}$ is $\sigma(\LW,\LPW)$ closed by Proposition \ref{closed:sensitive}, we infer that $\frac{1}{n}1_A\in \tildeC{\lambda}$ for every $n>\bar n$. 
	By definition of $\tildeC{\lambda}$ , for every $P\in \cP$ there exists $k^P\in \KM$ such that $\frac{1}{n}1_A\leq k^P$ $P$-a.s. which implies $1_A\in \tildeC{n\lambda}$ i.e. $\rho_{n\lambda}(1_A)=0$. This would contradict $c)$. 
	\\Now for any $\xi\in \Linfplus\setminus\{0\}$ we can find $n\in\N$ such that $A:=\{\xi>1/n\}\in\Fcal\setminus \polar$.
	From Lemma \ref{lem:cc}, $\rho_{\lambda}$ is monotone, from which,
	\[
	0<\rho_{\lambda}(n^{-1}1_A)\leq
	\rho_{\lambda}(\xi).
	\]
\end{proof}

\begin{proof}[Proof of Theorem \ref{FTAP1}]
	Suppose $\sNABR$ holds. By Lemma \ref{strict:positive1}, for any $\lambda\ge 0$, $\rho_{\tildeC{\lambda}}(1_A)>0$ and any $A\in\Fcal\setminus \polar$.
	Since $\tildeC{\lambda}$ is $\sigma(\LW, \LPW)$ closed by Proposition \ref{closed:sensitive}, $1_A\notin \tildeC{\lambda}$ for any $\lambda\ge 0$ and the same is true for $\alpha 1_A$ for $\alpha\in (0,1)$.
	Indeed, $\alpha 1_A\in \tildeC{\lambda}$ would imply $1_A\in \tildeC{\lambda/\alpha}$, a contradiction.
	Thus, for any $n\in\N$ the $\sigma(\LW,\LPW)$ closed and convex set $\tildeC{n}$ and the $\sigma(\LW, \LPW)$-compact and convex set $\Acal_n=\{\alpha 1_A\mid \alpha \in [1/n,1]\}$ are disjoint. 
	For any $n\in\N$, there exists $\QN\in \cPW$ and $\delta_n\in[0,1]$, such  that 
	\begin{equation}\label{eq:separator1}
		\sup_{X\in \tildeC{n}}E_{\QN}[X]<\delta_n <\frac{\QN(A)}{n}.
	\end{equation}
	From $0\in\cC_n$, $\delta_n>0$ and the thesis follows.

	For the converse implication suppose that there exists $A\in\Fcal\setminus \polar$ such that $\rho_{\tildeC{\lambda}}(1_A)\le 0$ for some $\lambda\ge 0$.
	Since  $\tildeC{\lambda}$ is $\sigma(\LW, \LPW)$ closed, $1_A\in \tildeC{\lambda}$.
	Let $n\in\N$ with $n\ge\lambda$.
	Take $\QN\in\cPW$ such that $\QN(A)>n\delta_n$ and $E_{\QN}[X]< \delta_n$ for any $X\in \tildeC{n}$.
	From $1_A\in \tildeC{n}$, we deduce
	\[
	n\delta_n< \QN(A)< \delta_n
	\]
	which yields the contradiction $n<1$.
\end{proof}

%%%%%%%%%%%%%%%%%%%%%%%%%%%%%%%%%%%%%%%%%%%%%%%%%%%%%%%%%%%%%%%%%%%%%%%%%%%%%%%%%%%%%%%%

\subsection{Proof of Theorem \ref{FTAP_compact}}
We only need to show that  $\sNFLVR$ implies $\mtg\neq\emptyset$ and $\cP\ll\mtg$, the rest follows from Theorem \ref{FTAP_discrete}.
Assume $\sNFLVR$ and introduce the notation $S_{1:t}=(S_1,\ldots,S_t)$. 
Theorem \ref{FTAP} ensures that 
for $A\in\Fcal\setminus \polar$ and $n\in\N$, we can find 
a collection $\cQ=\{\QN\}_{n\in\N}$ of probability measures  such that
\begin{equation}\label{eq:sep_tight}
	\QN(A)\geq \delta\quad\text{ and }\quad E_{\QN}[X]\leq \frac{1}{n} \quad \forall\,X\in\cC_{n}.
\end{equation} 
Moreover, there exists $P\in\cP$ such that $\cQ\ll P$. By Assumption \ref{assumption}, $P$ has support on some compact set $K^P$, hence, the collection $\cQ$ is tight and, by Prokhorov Theorem, is relatively compact.
From Lemma \ref{approximating:martingale}, $\cQ\subset \cPW$ and therefore we can define $\{P_n\}_{n\in \N}$ by 
$\frac{dP_n}{dQ_n}=\frac{W}{E_{Q_n}[W]}$.
Since $P_n\sim \QN$ and $\QN(K^P)=1$ with $K^P$ independent of $n$, the collection $\{P_n\}_{n\in \N}$ is also tight (and hence relatively compact).
Moreover, for $X\in \cC_n$ we have $E_{\QN}[X]=E_{Q_n}[W]E_{P_n}\left[\frac{X}{W}\right]$
We deduce that there exists a convergent subsequence of $\{P_n\}_{n\in\N}$ whose limit is denoted by $\bar{P}\in\cM_1$. Finally we define the measure $Q$ by $\frac{dQ}{d\bar{P}}=\frac{c}{W}$ where $c:=1/E_{\bar{P}}[W^{-1}]$.
\\ Note that, since $\mathbb{F}$ is the natural filtration of $S$, any $\cF_t$-measurable random variable $H$ can be written as $h(S_1,\ldots,S_t)$, for some Borel-measurable function $h:\R^{d\times t}\to\R$.
Consider now the sets
\begin{eqnarray*}
	\cY&:=&\bigg\{f(S_{1:t}):\Omega\to\R\mid f\in\cC_b(\R^{d\times t})\bigg\},\\
	\cX&:=&\bigg\{h(S_{1:t}):\Omega\to\R\mid h\in\cB_b(\R^{d\times t}),\ E_Q[h(S_{1:t})(S^j_{t+1}-S^j_t)]=0\ \forall j\bigg\},
\end{eqnarray*}
where $\cB_b(\R^{d\times t})$ denotes the space of bounded measurable functions on $\R^{d\times t}$ and $\cC_b(\R^{d\times t})$ those which are, in addition, continuous.
We aim at using a Monotone Class Theorem to deduce that $\cX$ contains all bounded $\cF_t$-measurable function.
First note that $\cY$ is a multiplicative class, namely, if $Y_1,Y_2\in\cY$ then $Y_1Y_2\in\cY$. 
Next, we show that $\sigma(\cY)=\cF_t$.
To see this let $\Gamma_n$ be a sequence of compacts such that, $\R^{d\times t}=\cup_{n\in\N}\Gamma_n$
By Urysohn's Lemma, for any $n,m\in\N$ there exists $f_{n,m}\in\cC_b(\R^{d\times t})$ such that 
$f_{n,m}(x)=x$ on $\Gamma_n$ and $f_{n,m}(x)=0$ on the complementary of $\Gamma_n+B^\circ_{\frac{1}{m}}(0)$, where $B^\circ_{\frac{1}{m}}(0)$ denotes the open ball with center in $0$ and radius $1/m$. 
Take now an open set $O\subset \R^{d\times t}_+$ and note that $S_{1:t}^{-1}(O)=\cup_{n\in\N}S_{1:t}^{-1}(O\cap\Gamma_n)$.
By construction of $f_{n,m}$,
\[
S_{1:t}^{-1}(O\cap\Gamma_n)=\bigcap_{m\in\N} \{\omega\in\Omega\mid f_{n,m}(S_{1:t}(\omega))\in O\cap\Gamma_n\}\in\sigma(\cY)
\]
From $O$ and $n$ arbitrary we deduce $\cF_t\subset\sigma(\cY)$.
The opposite inclusion is trivial.

The next step is to show that $\cY\subset\cX$.
Let $f\in\cC_b(\R^{d\times t})$ and define $X^j:=f(S_{1:t})(S^j_{t+1}-S^j_t)$ for $j=1,\ldots, d$.
By the choice of $W$ and $f$ bounded, we deduce that $X^j/W\in\cC_b(\Omega)$ and there exists $\bar{n}\in\N$ such that $X^j\in \cC_{n}$ for any $n\ge\bar{n}$. From \eqref{eq:sep_tight} and $W\geq 1$ we have 
\[ \frac{1}{n}\ge E_{\QN}[X^j] = E_{\QN}[W]E_{P_n}\bigg[\frac{X^j}{W}\bigg] \text{ implies } E_{P_n}\bigg[\frac{X^j}{W}\bigg]\le \frac{1}{n}.
\]
Using the weak convergence of $P_n$ to $\bar{P}$, we deduce $E_{\bar{P}}[X^j/W]\le 0$. By repeating the same argument for $-X^j$, we obtain $E_{\bar{P}}[X^j/W]= 0$  and hence $E_Q[X^j]=0$.

We now note that $\cX$ is a vector space and $1_{\Omega}\in\cY\subset\cX$. Moreover, for an increasing sequence $\{H_n\}_{n\in\N}\subset\cX$ with $\lim_{n\to\infty}H_n=H$ bounded, we have that $H\in\cX$ by dominated convergence.
From \cite[Theorem I.8]{P05} and $t\in\cI$ arbitrary, we conclude that $S$ is a $(Q,\mathbb{F})$-martingale.
The fact that $Q$ is calibrated to the prices of the options $\Phi$ follows from $\phi_i/W\in\tildeC{n}$ for every $i=1,\ldots,m$, $n\in\N$ and a similar weak convergence argument.

Finally we show that $\cP\ll\mtg$.
Let $A\in\Fcal\setminus \polar$, there exists $P\in\cP$ such that $P(A)>0$.
By \cite[Theorem 12.5]{AB06}, there exists a closed set $F\subset A$ such that $P(F)>0$. 
From $W\geq 1$ we have
\[
E_{P_n}[1_F]\ge E_{P_n}\bigg[\frac{1_F}{W}\bigg]=E_{\QN}[W]E_{\QN}[1_F]\ge \delta
\]
Due to Portemanteau's Theorem \cite[Theorem 15.3]{AB06}, $\bar{P}(F)\ge\limsup P_n(F)\ge \varepsilon\delta$. Thus, again by \cite[Theorem 12.5]{AB06}, $\bar{P}(A)\ge \bar{P}(F)>0$ and since $Q$ is equivalent to $\bar{P}$, we also have $Q(A)>0$. This concludes the proof.

%%%%%%%%%%%%%%%%%%%%%%%%%%%%%%%%%%%%%%%%%%%%%%%%%%%%%%%%%%%%%%%%%%%%%%%%%%%%%%%%%%%%%%%%

\bibliographystyle{apalike}
\bibliography{uncertainty_bib}

%%%%%%%%%%%%%%%%%%%%%%%%%%%%%%%%%%%%%%%%%%%%%%%%%%%%%%%%%%%%%%%%%%%%%%%%%%%%%%%

\end{document}